\documentclass[12pt]{amsart}
\usepackage[margin=1in]{geometry}
\usepackage{amsmath,amssymb,amsthm,mathtools}
\usepackage{booktabs,tabularx}
\usepackage[colorlinks=true,linkcolor=blue]{hyperref}
\usepackage{enumitem}
\usepackage{placeins}
\usepackage{microtype}

\newtheorem{theorem}{Theorem}
\newtheorem{proposition}[theorem]{Proposition}
\newtheorem{lemma}[theorem]{Lemma}
\newtheorem{remark}[theorem]{Remark}
\newcommand{\R}{\mathbb R}
\newcommand{\Z}{\mathbb Z}
\newcommand{\C}{\mathbb C}
\newcommand{\T}{\mathbb T}
\newcommand{\cH}{\mathcal H}
\newcommand{\ip}[2]{\langle #1,#2\rangle}
\newcommand{\norm}[1]{\lVert #1\rVert}
\newcommand{\la}{\langle}
\newcommand{\ra}{\rangle}

\title[Sharp decay thresholds]{Sharp decay thresholds for eigenvalues of discrete Schr\"odinger operators on $\Z^2$}
\author[S.\ Ganguly]{Shirshendu Ganguly}
\address{[S.\ Ganguly] Department of Statistics, UC Berkeley}
\email{\href{mailto:sganguly@berkeley.edu}{sganguly@berkeley.edu}}

\author[W.\ Liu]{Wencai Liu}
\address{[W.\ Liu] Department of Mathematics, Texas A\&M University}
\email{\href{mailto:wencail@tamu.edu}{wencail@tamu.edu}}
\keywords{Discrete Schr\"odinger operators, embedded eigenvalues, sharp decay
	thresholds, spectral edges, discrete Carleman estimates, Green functions}
\thanks{{\it 2020 Mathematics Subject Classification.} Primary 47B39; Secondary
	47A75, 81Q10, 81Q35.}
\date{}

\begin{document}
	\maketitle
	\begin{abstract}
		We study eigenvalues of discrete Schr\"odinger operators
		$H=\Delta+V$ on $\Z^2$, where $\Delta$ is the uncentered Laplacian, i.e., the un-normalized adjacency operator of $\Z^2$ and $V$ decays at infinity.  By Weyl's
		theorem, the essential spectrum of $H$ is $[-4,4]$.  We determine the
		sharp decay thresholds for existence of  eigenvalues in three distinct spectral regimes. While it is natural to expect different behavior at the spectral edge
		$\lambda=\pm4$, and the bulk, there is a further distinction between the regular
		energies $0<|\lambda|<4$ and the interior critical energy $\lambda=0$ stemming from the reducibility of the corresponding Fermi surface in the latter case. For every $0<|\lambda|<4$, we construct  potentials $V$ satisfying
		$|V(n)|\leq C|n|^{-1}$ for which $\lambda$ is an eigenvalue of
		$\Delta+V$, and prove absence of eigenvalues when
		$|V(n)|\leq C|n|^{-1-\varepsilon}$ for some $\varepsilon>0$. At $\lambda=0$, the critical power changes and we construct potentials $V$
		satisfying $|V(n)|\leq C|n| ^{-2}$ for which $0$ is an eigenvalue
		of $\Delta+V$, as well as prove absence  when
		$|V(n)|\leq C|n|^{-2-\varepsilon}$ for any $\varepsilon>0$.
		Finally, at each spectral edge, we show that, for every $K\geq3$, an eigenvalue
		can be created by  potentials supported on exactly
		$K$ sites, whereas a potential supported on at most two sites cannot
		create an edge eigenvalue.  The proofs combine Green-function expansions
		and moment cancellation, Hilbert-space-valued iterations,
		discrete Carleman estimates, and a uniform Green-kernel estimate.
	\end{abstract}

	\tableofcontents
	\section{Introduction}\label{sec:introduction}
	
	Let $d\geq2$.  On $\ell^2(\Z^d)$ {we consider the nearest-neighbor
		discrete adjacency operator
		\begin{equation}\label{eq:free-laplacian-Zd}
		(\Delta u)(n)
		=\sum_{j=1}^d\bigl(u(n+e_j)+u(n-e_j)\bigr),
		\qquad n\in\Z^d,
		\end{equation}
		where $e_1,\ldots,e_d$ are the standard coordinate vectors. Letting $\T:=\R/2\pi \Z,$ under the
		Fourier transform $\mathcal F:\ell^2(\Z^d)\to L^2(\T^d)$, $\Delta$ becomes the Fourier multiplier by the factor
		\[
		h_0(\xi)=2\sum_{j=1}^d\cos\xi_j,
		\qquad \xi\in\T^d,
		\]
		and hence $\sigma(\Delta)=[-2d,2d]$}.  We study the Schr\"odinger operator
	\begin{equation}\label{eq:discrete-schrodinger}
	H=\Delta+V,
	\end{equation}
	where $V:\Z^d\to\C$ decays at infinity.  
	 
	Already for the {free operator,} {meaning just the adjacency operator,} dimension two contains three geometrically
	different spectral regimes as evident from the discussion below. Set
	\[
	p_\lambda(\xi)=2\cos\xi_1+2\cos\xi_2-\lambda,
	\qquad \lambda\in[-4,4].
	\]
	At $\lambda=\pm4$, the real Fermi surface  $\{p_\lambda=0\}$ collapses to an
	{elliptic critical point}.  At $\lambda=0$, it is the union of two curves
	meeting at two critical points. {This follows directly from the fact that $p_0$ is reducible,}
	\begin{equation}\label{eq:zero-fermi-factorization}
	p_0(\xi)
	=4\cos\!\left(\frac{\xi_1+\xi_2}{2}\right)
	\cos\!\left(\frac{\xi_1-\xi_2}{2}\right).
	\end{equation}
	In contrast, for
	\[
	\lambda\in(-4,0)\cup(0,4),
	\]
	the real Fermi set is one smooth curve, and its complexification is irreducible. 
	To see why this might affect things in the presence of a potential, it is most instructive to switch to Fourier space which transforms $Hu=\lambda u$,  to the following:
	\begin{equation}\label{}
	\widehat{u} = \frac{-\widehat{Vu}}{p_{\lambda}(\xi)}
	\end{equation}
	Thus, for the above to hold, the effect of $V$ on the numerator on the RHS must exactly cancel the singularity of $p_{\lambda}(\xi)$ which then translates into decay properties of $V.$ This naturally suggests that one might witness genuinely different spectral behavior depending on the singularity of  $p_{\lambda}(\xi)$ and accordingly the spectral phenomenon at  the edges, the interior critical energy ($\lambda=0$), and the regular interior energies ($\lambda\in(-4,0)\cup(0,4)$) must be treated separately.  
	The purpose of this paper is to determine the exact transition in terms of the
	asymptotic decay of $V$ and completely resolve the problem on $\Z^2$.  The
	complete phase diagram is summarized in Table~\ref{tab:sharp-d2}.
	For $n=(n_1,\ldots,n_d)\in\Z^d$, we write
	\begin{equation}\label{norms245}
	|n|=(n_1^2+\cdots+n_d^2)^{1/2},
	\qquad
	\la n\ra=(1+|n|^2)^{1/2}.
	\end{equation}
	
	\begin{table}[htbp]
		\centering
		\small
		\renewcommand{\arraystretch}{1.22}
		\begin{tabularx}{\textwidth}{@{}
				>{\centering\arraybackslash}p{0.19\textwidth}
				>{\raggedright\arraybackslash}X
				>{\raggedright\arraybackslash}p{0.31\textwidth}@{}}
			\toprule
			Energy & Sharp criterion for the potential on $\Z^2$ & The key ingredients\\
			\midrule
			$\lambda=\pm4$
			& Edge eigenfunctions exist for finitely supported potentials with any
			prescribed support size $K\geq3$, whereas support on at most two sites
			cannot create an edge eigenvalue.
			& Threshold Green function, finite differences, and moment cancellation.\\
			 
			\addlinespace
			$\lambda=0$
			& Examples exist with $|V(n)|\leq C\la n\ra^{-2}$; there is no
			eigenvalue if $|V(n)|\leq C\la n\ra^{-2-\varepsilon}$ for some
			$\varepsilon>0$.
			& An explicit construction and a Hilbert-space-valued forward iteration
			controlled by Chebyshev polynomials.  \\

			\addlinespace
			$0<|\lambda|<4$
			& Examples exist with $|V(n)|\leq C\la n\ra^{-1}$; there is no
			eigenvalue if $|V(n)|\leq C\la n\ra^{-1-\varepsilon}$ for some
			$\varepsilon>0$.
			& Wigner--von Neumann construction, discrete Carleman estimates, and forward iteration as above.  \\
		 
			\bottomrule
		\end{tabularx}
		\caption{Sharp spectral transitions for the discrete Schr\"odinger operator on
			$\Z^2$.}
		\label{tab:sharp-d2}
	\end{table}
	\FloatBarrier
	
	The second and third rows of Table~\ref{tab:sharp-d2} display the change in the
	decay scale inside the spectral band.  At the regular energies
	$0<|\lambda|<4$, the sharp scale is $\la n\ra^{-1}$, matching the scale for the continuous Schr\"odinger operator discussed shortly.  At the interior critical energy
	$\lambda=0$, the sharp lattice scale changes to $\la n\ra^{-2}$. Curiously, the same scale appears instead at the spectral edge for the continuous Schr\"odinger operator; see
	\cite{HundertmarkJexLange2023}.  Related sharp  spectral edge criteria for lattice
	Schr\"odinger operators were recently obtained by Jex and \v{S}tampach
	\cite{JexStampach2025}.
	
	The same regular--critical distinction appears, though with distinct exponents, {in \emph{limiting absorption estimates}}
	for the free lattice resolvent: weights of order
	$\la n\ra^{-1/2-\varepsilon}$ suffice near regular energies, whereas uniform
	control through critical energies requires weights of order
	$\la n\ra^{-1-\varepsilon}$ \cite{TadanoTaira2019,IsozakiKorotyaev2012}.

	Before stating the precise statements, as indicated above, we briefly compare the discrete problem with the continuous Schr\"odinger
	operator $-\Delta_{\R^d}+V$ where here $-\Delta_{\R^d}$ denotes the Laplacian.  The continuous theory of embedded eigenvalues is
	classical.  Kato's theorem gives
	absence under $V(x)=o(|x|^{-1})$ \cite{Kato1959}, while the classical
	Wigner--von Neumann example shows that  $V(x)=O(|x|^{-1})$ can support an
	embedded eigenvalue \cite{WignerVonNeumann1929}.  Both statements hold in every
	dimension.

	In contrast, sharp decay thresholds for discrete operators have been
	understood only much more recently and exhibit phenomena absent from the
	continuum, already in one dimension.  For the one-dimensional continuous
	operator, the sharp critical Coulomb coefficient (the constant that dictates the transition of the decay of the potential for the existence of an embedded eigenvalue) varies smoothly with the energy.
	For the discrete operator, however, it depends on the arithmetic type of the
	quasimomentum $k$, where $E=2\cos(\pi k)\in[-2,2]$; more precisely, it is
	continuous at irrational $k$ and discontinuous at every rational $k$
	\cite{LiuCriteria2021,ChuLiuLyu2026}.
	
	At the level of proofs, Kato's classical argument \cite{Kato1959} exploits polar
	coordinates and the decomposition of $\R^d$ into the radial variable and the
	spheres $S_r$ reducing things therefore to essentially a one dimensional problem.  The lattice $\Z^d$ is not rotationally invariant and has no
	comparable foliation by spheres.  Consequently, the discrete absence problem does
	not follow by simply discretizing the continuous proof.  More generally, many
	standard tools of Euclidean PDE have no direct lattice counterpart.

	Substantial progress on embedded eigenvalues for discrete and periodic operators
	has instead come from Floquet theory, unique continuation, complex analysis, and
	algebraic and analytic geometry.  In particular, absence results for the square lattice and for
	broader classes of perturbed periodic lattices were obtained in
	\cite{IsozakiMorioka2014,AndoIsozakiMorioka2016,AndoIsozakiMorioka2025,KuchmentVainberg2000,LiuIrreducibility2022,LiuMatosTreuer2025}.
	{These rigidity arguments rely on three principal ingredients: irreducibility of
		the Fermi variety, unique continuation, and at least exponential decay of the
		perturbation.
		These approaches use the fact that a fixed-energy Fermi surface is an algebraic
		hypersurface.  Irreducibility of the Fermi variety permits analytic cancellation or division by its
		defining polynomial, after which unique continuation rules out the resulting
		rapidly decaying solution.}
	
	The first two ingredients are essential for this mechanism.  Shipman constructed
	periodic graph operators with reducible Fermi varieties and embedded
	eigenfunctions of unbounded support \cite{Shipman2014}, while on general periodic
	graphs, compactly supported solutions may occur \cite{Kuchment1991}.  It is
	therefore natural to ask whether the third ingredient is also necessary: can
	algebraically decaying potentials suffice, and
	if so, what is the sharp decay rate?
	
	A different line of work studies quantitative unique continuation for discrete
	harmonic functions.  Buhovsky, Logunov, Malinnikova, and Sodin proved a strong
	Liouville theorem on $\Z^2$, and Bou-Rabee, Cooperman, and Ganguly subsequently
	obtained a geometric extension to periodic planar graphs
	\cite{BuhovskyLogunovMalinnikovaSodin2022,BouRabeeCoopermanGanguly2025}.
	Such discrete unique-continuation principles have important applications to
	Anderson--Bernoulli localization, as demonstrated by Ding and Smart in dimension
	two and by Li and Zhang in dimension three, following the continuum program of
	Bourgain and Kenig \cite{DingSmart2020,LiZhang2022,BourgainKenig2005}.  These
	results concern quantitative rigidity and random localization, whereas the
	present paper determines sharp decay thresholds for eigenvalues created by
	deterministic perturbations.

	\newcommand{\nin}{\noindent}	
	
	\section{Main results}
	
	We now move on to the precise statements of our main results.

	\subsection{The critical energy $0$}
	\begin{theorem}\label{thm:absence}
		Suppose that $V:\Z^2\to\mathbb C$ satisfies
		\begin{equation}\label{gaug271}
		|V(n)|\leq C\la n\ra^{-\kappa}
		\end{equation}
		for some $\kappa>2$ and $C>0$.  If
		\[
		(\Delta+V)u=0,
		\qquad
		u\in\ell^2(\Z^2),
		\]
		then $u=0$.  Consequently, $\lambda=0$ is not an eigenvalue of
		$\Delta+V$.
	\end{theorem}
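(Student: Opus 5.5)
We plan to run a forward iteration in the diagonal direction, using the factorization \eqref{eq:zero-fermi-factorization}. For $k\in\Z$, let $L_k=\{n\in\Z^2:\ n_1+n_2=k\}$ and parametrize $L_k$ by $j=n_1$. Set $u_k(j)=u(j,k-j)$, so that $u_k\in\ell^2(\Z)$ and $\sum_k\norm{u_k}^2=\norm{u}^2<\infty$. Since $n\pm e_1$ and $n\pm e_2$ lie on $L_{k\pm1}$, the equation $(\Delta+V)u=0$ becomes the three-term recursion
\[
A\,u_{k+1}+A^{*}u_{k-1}+V_k u_k=0,\qquad A=I+S,
\]
with $u_k,u_{k\pm1}$ regarded as vectors in $\cH=\ell^2(\Z)$. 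Here $S$ is the shift $(Sf)(j)=f(j+1)$, $A^{*}=I+S^{-1}$, and $V_k$ is multiplication by $V$ restricted to $L_k$. On the Fourier side $A$ is multiplication by $1+e^{i\theta}$, which vanishes at $\theta=\pi$; this is exactly the critical point of $p_0$. Away from this point, $A^{-1}A^{*}$ is multiplication by $e^{-i\theta}$ and is therefore unitary. So the free two-step map $(u_k,u_{k-1})\mapsto(u_{k+1},u_k)$, with $u_{k+1}=-A^{-1}A^*u_{k-1}$, preserves $\norm{u_{k+1}}^2+\norm{u_k}^2$.

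The key step is to treat $V_ku_k$ as a perturbation despite the unbounded inverse $A^{-1}$. We have $u_{k+1}=-A^{-1}A^{*}u_{k-1}-g_k$, where $g_k\in\ell^2$ solves $g_k(j)+g_k(j+1)=(V_ku_k)(j)$. Because $u_{k+1}$ and $A^{-1}A^*u_{k-1}$ both lie in $\ell^2$, the difference $g_k$ is an $\ell^2$ solution. Such a solution is unique, since $\ker(I+S)$ on $\ell^2$ is trivial. It is therefore given by the alternating tail sum $g_k(j)=\sum_{m\geq j}(-1)^{m-j}(V_ku_k)(m)$, and by symmetry also by the alternating sum over $m<j$; the compatibility condition $\sum_m(-1)^m(V_ku_k)(m)=0$ is automatic. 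We then apply a discrete Hardy inequality, using the tail representation for $j\ge0$ and the head representation for $j<0$. This gives
\[
\norm{g_k}_{\ell^2}\leq C\,\norm{\la j\ra V_ku_k}_{\ell^2}.
\]
On $L_k$ we have $\la n\ra\gtrsim\max(\la k\ra,\la j\ra)$, so \eqref{gaug271} yields $\la j\ra|V(n)|\leq C\la n\ra^{1-\kappa}\leq C\la k\ra^{1-\kappa}$. Hence $\norm{g_k}\leq C\la k\ra^{1-\kappa}\norm{u_k}$, and this bound is summable in $k$ precisely because $\kappa>2$.

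From here the argument is a Gronwall-type estimate. Set $E_k=(\norm{u_{k}}^2+\norm{u_{k-1}}^2)^{1/2}$. The bound on $g_k$ gives $E_{k+1}\geq(1-C\la k\ra^{1-\kappa})E_k$. Choose $k_0$ so large that $\prod_{k\geq k_0}(1-C\la k\ra^{1-\kappa})\geq\frac12$. Then $E_k\geq\frac12E_{k_0}$ for all $k\geq k_0$. Since $u\in\ell^2$ forces $E_k\to0$, we conclude that $E_{k_0}=0$, and likewise $E_k=0$ for every $k\geq k_0$. The same reasoning gives $u_{k-1}=-(A^{*})^{-1}(Au_{k+1}+V_ku_k)$, where the inverse is again taken in the unique $\ell^2$ sense. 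Running this backward from two consecutive vanishing lines shows that all $u_k$ vanish, so $u=0$.

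The main obstacle is the second step: controlling the inverse of $A$ near its zero at $\theta=\pi$. This requires the uniqueness and compatibility observation together with the weighted (Hardy) bound. The single power of $\la j\ra$ lost there is exactly what shifts the threshold from the $\la n\ra^{-1-\varepsilon}$ of regular energies to $\la n\ra^{-2-\varepsilon}$ here. If the plain Hardy bound turns out to be insufficient, I would instead estimate $g_k$ in a weighted space in $j$ and propagate that weight through the iteration; the paper's Chebyshev-polynomial bookkeeping suggests an organization of this kind.
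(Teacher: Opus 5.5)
Your proof is correct, and it takes a genuinely different route from the paper.

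\textbf{The paper's route.} The paper slices by rows $n=(m,j)$. There the free propagator is the Chebyshev family $K_r=\mathcal U_{r-1}(A/2)$ with $\norm{K_r}\le r$. The single lost power therefore appears in the propagation variable, and it is absorbed by $\sum_{\ell>m}(\ell-m)\la\ell\ra^{-\kappa}\lesssim\la m\ra^{2-\kappa}$. The paper must also justify the Duhamel series \eqref{eq:forward-iteration} through a separate Fourier-side uniqueness argument.

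\textbf{Your route.} You slice by diagonals, where the factorization \eqref{eq:zero-fermi-factorization} becomes the exact operator identity $A^*=S^{-1}A$. The free step $u_{k+1}=-S^{-1}u_{k-1}$ is then an isometry, and the entire loss moves into inverting $I+S$. You handle that inversion with injectivity on $\ell^2$ and Copson's (dual Hardy) inequality, paying one power of $\la j\ra$ in the transversal variable.

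\textbf{What each approach buys.} Your approach gives a purely local one-step bound $E_{k+1}\ge(1-C\la k\ra^{1-\kappa})E_k$, obtained by the triangle inequality in $\cH\oplus\cH$. There is no infinite series to identify, and the threshold $\kappa>2$ gets a transparent explanation. The price is twofold.
\begin{enumerate}[label=\textup{(\roman*)}]
\item You spend the decay of $V$ along each line $L_k$, through $\la n\ra\gtrsim\max(\la j\ra,\la k\ra)$. The paper only uses $\sup_j|V(m,j)|\lesssim\la m\ra^{-\kappa}$ for large $m$, that is, decay in a single coordinate direction.
\item Your mechanism is tied to $\lambda=0$. The paper's row recurrence and partial Fourier transform form the same framework it reuses for $0<|\lambda|<4$.
\end{enumerate}

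The plain Hardy bound already suffices, so the fallback in your last paragraph is unnecessary.
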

	\begin{theorem}\label{thm:critical-construction}
		For every $\alpha>1$,
		there exist a real-valued potential
		$V:\Z^2\to\R$ and a nonzero function $u$ satisfying
		$|u(n)|\leq\la n\ra^{-\alpha}$ and
		\[
		(\Delta+V)u=0,
		\qquad
		|V(n)|\leq C_\alpha\la n\ra^{-2}.
		\]
		In particular, $u\in\ell^2(\Z^2)$, and hence $\lambda=0$ is an
		eigenvalue of $\Delta+V$.
	\end{theorem}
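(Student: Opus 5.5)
The plan is to build $u$ as a slowly varying positive envelope multiplied by an exact oscillating solution of the free equation at energy $0$. The factorization \eqref{eq:zero-fermi-factorization} shows that $\xi=(\pi,0)$ lies on the real Fermi set $\{p_0=0\}$, since $2\cos\pi+2\cos 0=0$. Hence $g(n)=(-1)^{n_1}$ satisfies $\Delta g=0$. I then set
\[
u(n)=(-1)^{n_1}f(n),\qquad f(n)=\la n\ra^{-\alpha},
\]
and define the potential by $V(n)=-(\Delta u)(n)/u(n)$. This is legitimate because $u$ never vanishes, and it gives $(\Delta+V)u=0$ identically.

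\textbf{Step 1: compute $\Delta u$.} Since $(-1)^{n_1\pm1}=-(-1)^{n_1}$, the horizontal neighbours of $n$ enter with a minus sign and the vertical ones with a plus sign:
\[
(\Delta u)(n)=(-1)^{n_1}\Bigl[f(n+e_2)+f(n-e_2)-f(n+e_1)-f(n-e_1)\Bigr].
\]
I add and subtract $2f(n)$ inside the bracket. This turns it into $(D_2^2f)(n)-(D_1^2f)(n)$, where $D_j^2f(n)=f(n+e_j)-2f(n)+f(n-e_j)$ is the discrete second difference. The point is that the zeroth-order terms cancel exactly, so no first-order terms survive either. Consequently
\[
V(n)=-\frac{(D_2^2f)(n)-(D_1^2f)(n)}{f(n)},
\]
which is real-valued because $f>0$.

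\textbf{Step 2: bound $V$.} By Taylor's theorem with remainder, $|D_j^2f(n)|\le\sup_{|t|\le1}|\partial_j^2 F(n+te_j)|$, where $F(x)=(1+|x|^2)^{-\alpha/2}$ is the smooth extension of $f$. A direct computation gives $|\partial_j^2F(x)|\le C_\alpha(1+|x|^2)^{-\alpha/2-1}$. Moreover, $1+|n+te_j|^2$ is comparable to $1+|n|^2$ uniformly for $|t|\le1$, since $\la n\ra\geq1$. Therefore $|D_j^2f(n)|\le C_\alpha\la n\ra^{-\alpha-2}$, and dividing by $f(n)=\la n\ra^{-\alpha}$ yields
\[
|V(n)|\le C_\alpha\la n\ra^{-2}.
\]

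\textbf{Step 3: conclude.} By construction $|u(n)|=\la n\ra^{-\alpha}$. For $\alpha>1$ we have $\sum_{n\in\Z^2}\la n\ra^{-2\alpha}<\infty$, so $u\in\ell^2(\Z^2)$ and $u\neq0$. Hence $0$ is an eigenvalue of $\Delta+V$.

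The only point requiring care is the uniform comparability of $\la n+te_j\ra$ with $\la n\ra$ in Step 2, and this is elementary. I do not expect any genuine obstacle. The mechanism the proof relies on is that at $\lambda=0$ the phase $(\pi,0)$ turns the Laplacian acting on the envelope into the second-order operator $D_2^2-D_1^2$, with no zeroth- or first-order part. This is why the potential decays like $\la n\ra^{-2}$ rather than like $\la n\ra^{-1}$.
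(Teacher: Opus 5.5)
Your proposal is correct and follows essentially the same route as the paper, which takes $u(n)=(-1)^{n_2}\la n\ra^{-\alpha}$ (the choice of coordinate is immaterial by symmetry) and sets $V=-\Delta u/u$. The only difference is cosmetic: you check the $\la n\ra^{-2}$ bound via centered second differences and Taylor's remainder, while the paper Taylor-expands the paired ratios $\eta(n\pm e_\nu)/\eta(n)$; both capture the same cancellation.
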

	\subsection{Spectral edge}	
	\begin{theorem}\label{thm:existence}
		Let $d\geq2$ and $K\geq2$.  For either spectral edge
		$\lambda_0\in\{-2d,2d\}$, there exist potentials $V$ supported on exactly
		$K$ sites and corresponding nonzero functions $u$ satisfying
		\begin{equation*}
		(\Delta+V)u=\lambda_0u,
		\end{equation*}
		and
		\begin{equation}\label{eq:target-decay}
		|u(n)|\leq C\la n\ra^{-(d+K-3)}.
		\end{equation}
	\end{theorem}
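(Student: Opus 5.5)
We construct $u$ as a finite combination of translates of a threshold Green function with a vanishing-moment condition, and then read off $V$ on the support. By the symmetry $u(n)\mapsto(-1)^{n_1+\cdots+n_d}u(n)$, which maps $\Delta$ to $-\Delta$, it suffices to treat $\lambda_0=2d$. Put $p(\xi)=2d-h_0(\xi)\ge0$; it vanishes only at $\xi=0$, where $p(\xi)=|\xi|^2+O(|\xi|^4)$.

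\textbf{Ansatz.} Fix $K$ distinct sites $a_1,\dots,a_K\in\Z^d$ and complex weights $c_1,\dots,c_K$. Define
\[
\widehat u(\xi)=\frac{\sum_{j}c_j e^{-i a_j\cdot\xi}}{p(\xi)},
\]
so that $(\Delta-2d)u=-\sum_j c_j\delta_{a_j}$. We want the numerator $m(\xi)=\sum_j c_je^{-ia_j\cdot\xi}$ to vanish to order $M$ at $\xi=0$. This holds when $\sum_j c_j a_j^\beta=0$ for all multi-indices $|\beta|\le M-1$.

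\textbf{Decay.} Then $\widehat u=m/p$ is smooth away from $0$ and homogeneous-like of degree $M-2$ near $0$. The standard stationary-phase / homogeneous-symbol argument gives $|u(n)|\le C\la n\ra^{-(d+M-2)}$. Alternatively, write $u=\sum_j c_j G(\cdot-a_j)$ with $G$ the threshold Green function, $G(n)\sim c|n|^{2-d}$ for $d\ge3$ (suitably renormalized for $d=2$), which has an asymptotic expansion with symbol-type derivative bounds. Vanishing moments of order $<M$ correspond to a finite-difference operator of order $M$ applied to $G$, which gains $\la n\ra^{-M}$. The target exponent $d+K-3$ requires $M=K-1$.

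\textbf{Existence and support.} The eigenvalue equation becomes $V(a_j)u(a_j)=c_j$ with $V=0$ elsewhere. Hence we need $u(a_j)\ne0$ and $c_j\ne0$ for every $j$, after which $V(a_j)=c_j/u(a_j)$ and the support of $V$ has exactly $K$ sites. The main obstacle is to achieve $K-1$ orders of moment cancellation using only $K$ sites while keeping all $c_j\neq0$. In $d\ge2$ the number of monomials of degree $<K-1$ is far larger than $K$, so collinear sites are essential. Take $a_j=j e_1$ for $j=0,\dots,K-1$ and $c_j=(-1)^j\binom{K-1}{j}$, the $(K-1)$-th finite difference in direction $e_1$. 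Then $m(\xi)=(1-e^{-i\xi_1})^{K-1}$, which vanishes to order $K-1$; all mixed moments vanish automatically because the $a_j$ lie on a line. Every $c_j\ne0$.

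\textbf{Nonvanishing.} It remains to check $u(a_j)=(D_1^{K-1}G)(je_1)\neq0$, where $D_1$ is the backward difference in direction $e_1$. This is the delicate point. First, $u$ is real, since $\widehat u(-\xi)=\overline{\widehat u(\xi)}$ is compatible with real $u$. If some $u(a_j)=0$, I would perturb: move one site, say $a_{K-1}$, to a generic far point along the $e_1$ line, or rescale the spacing to $a_j=jLe_1$, which keeps the moment structure. The values $u(a_j)$ are then nontrivial analytic-type functions of the spacing parameter, so choosing $L$ large makes them generically nonzero. Concretely, for large $L$ one has $u(a_j)\approx c_jG(0)+O(L^{2-d}$ or $\log L)$, and $G(0)$ is finite for $d\ge3$. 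For $d=2$, replace $G$ by the renormalized kernel; then $m(0)=0$ keeps $u$ well-defined, and one needs $\sum_j c_j G(0)$ to dominate, which may force a careful choice of $L$. Since $c_j\neq0$, this gives $u(a_j)\neq0$. With that, $V$ is supported on exactly $K$ sites and $u$ satisfies \eqref{eq:target-decay}.
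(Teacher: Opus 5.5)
Your construction is the same as the paper's: binomial weights $c_j=(-1)^j\binom{K-1}{j}$ on collinear sites $a_j=jLe_1$. This makes $u=\sum_j c_jG(\cdot-a_j)$ a $(K-1)$-fold finite difference of the threshold Green function, gives decay by moment cancellation, and then sets $V(a_j)=c_j/u(a_j)$. The decay step is fine. The case $d\geq3$ is fine too, except that you need $G(0)\neq0$, not merely finite. This holds because $G(0)=(2\pi)^{-d}\int_{\T^d}p^{-1}\,d\xi>0$ in your sign convention, so $u(a_j)=c_jG(0)+O(L^{2-d})\neq0$ for large $L$. You should drop the appeal to ``generic'' $L$ and ``analytic-type'' dependence. $L$ is an integer and no analyticity is established, so only an explicit large-$L$ asymptotic for $u(a_j)$ proves anything.

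The genuine gap is the nonvanishing of $u(a_j)$ when $d=2$, which is exactly the case needed for Theorem~\ref{thm:two-dimensional-threshold}(i). Your mechanism, that the diagonal term $c_jG(0)$ dominates, cannot work. For the renormalized kernel $G(0)=0$. More fundamentally, $\sum_jc_j=m(0)=0$, so adding a constant to $G$ leaves $u$ unchanged but changes $c_jG(0)$ arbitrarily; the diagonal value therefore carries no information. The quantity $\sum_jc_jG(0)$ you actually write is identically zero. In two dimensions the roles are reversed: what you list as the $O(\log L)$ error is the main term. Normalize $G(0)=0$ and use $G(n)=c_2\log|n|+c_0+o(1)$ with $c_2\neq0$. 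Since $\sum_{\ell\neq j}c_\ell=-c_j$,
\[
u(a_j)=\sum_{\ell\neq j}c_\ell\,G\bigl((\ell-j)Le_1\bigr)
=c_2\log L\sum_{\ell\neq j}c_\ell+O(1)
=-c_2c_j\log L+O(1),
\]
which is nonzero for all large $L$ because $c_2c_j\neq0$. This off-diagonal logarithmic computation is the missing step; without it your argument does not cover $d=2$.
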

	
	\begin{theorem} 
		\label{thm:two-dimensional-threshold}
		Let $d=2$.
		\begin{enumerate}[label=\textup{(\roman*)}]
			\item For every integer $K\geq3$ and every
			$\lambda_0\in\{-4,4\}$, there are potentials $V$ supported on exactly
			$K$ sites such that $\lambda_0$ is an eigenvalue of $\Delta+V$.
			
			\item If $V$ is supported on at most two sites, then neither
			$4$ nor $-4$ is an eigenvalue of $\Delta+V$.
		\end{enumerate}
	\end{theorem}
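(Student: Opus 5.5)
Part (i) is a direct specialization of Theorem~\ref{thm:existence}. Part (ii) I would prove by a Fourier-side integrability argument near the elliptic critical point $\xi=0$.

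\textbf{Part (i).} Apply Theorem~\ref{thm:existence} with $d=2$ and $K\geq3$. This gives a potential $V$ supported on exactly $K$ sites and a nonzero $u$ with $(\Delta+V)u=\lambda_0u$ and $|u(n)|\leq C\la n\ra^{-(K-1)}\leq C\la n\ra^{-2}$. Since $\sum_{n\in\Z^2}\la n\ra^{-4}<\infty$, we get $u\in\ell^2(\Z^2)$, so $\lambda_0$ is an eigenvalue.

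\textbf{Reduction to $\lambda_0=4$.} For part (ii), let $(Su)(n)=(-1)^{n_1+n_2}u(n)$. Then $S\Delta S=-\Delta$, and $S$ commutes with multiplication by $V$. Hence $(\Delta+V)u=-4u$ is equivalent to $(\Delta-V)Su=4Su$. Since $-V$ has the same support as $V$, it suffices to treat $\lambda_0=4$.

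\textbf{Passing to Fourier space.} Suppose $(\Delta+V)u=4u$ with $u\in\ell^2(\Z^2)$ and $\operatorname{supp}V\subset\{a,b\}$. Set $f:=-Vu=c_a\delta_a+c_b\delta_b$ (if $V$ has one site, take $c_b=0$). Then $(\Delta-4)u=f$. Taking Fourier transforms,
\[
p_4(\xi)\,\widehat u(\xi)=\widehat f(\xi)=c_a e^{-ia\cdot\xi}+c_b e^{-ib\cdot\xi}
\]
in $L^2(\T^2)$. The symbol $p_4=2\cos\xi_1+2\cos\xi_2-4$ vanishes on $\T^2$ only at $\xi=0$, where $p_4(\xi)=-|\xi|^2+O(|\xi|^4)$. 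So $\widehat u=\widehat f/p_4$ almost everywhere, and
\[
\int_{|\xi|<\delta}\frac{|\widehat f(\xi)|^2}{|\xi|^4}\,d\xi<\infty
\]
for small $\delta>0$.

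\textbf{Forcing $c_a=c_b=0$.} We use that $\int_0^\delta r^{-3}\,dr=\infty$ and $\int_0^\delta r^{-1}\,dr=\infty$ (the radial measure in polar coordinates is $r\,dr$).
\begin{itemize}
\item If $\widehat f(0)=c_a+c_b\neq0$, then $|\widehat f|$ is bounded below near $0$. The integrand is then $\gtrsim |\xi|^{-4}$, giving a divergent radial integral $\int r^{-3}\,dr$. So $c_b=-c_a$.
\item Then $\widehat f(\xi)=c_a\bigl(e^{-ia\cdot\xi}-e^{-ib\cdot\xi}\bigr)=-ic_a(a-b)\cdot\xi+O(|\xi|^2)$, with $a\neq b$ in the two-site case. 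In polar coordinates $\xi=r\omega$, the integrand is
\[
|c_a|^2\,\frac{|(a-b)\cdot\omega|^2}{r^{2}}\,(1+O(r)).
\]
The angular average of $|(a-b)\cdot\omega|^2$ is positive, so the integral is comparable to $\int_0^\delta r^{-1}\,dr=\infty$ unless $c_a=0$.
\end{itemize}
In the one-site case, $c_b=0$ and the first bullet alone gives $c_a=0$. In all cases $f=0$.

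\textbf{Conclusion.} With $f=0$ we have $p_4\widehat u=0$ almost everywhere, so $\widehat u=0$ because $p_4\neq0$ off the single point $\xi=0$. Hence $u=0$, and $4$ (and by the reduction, $-4$) is not an eigenvalue.

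The main point needing care is this last integrability step, specifically the logarithmic divergence when $\widehat f$ vanishes only to first order. That borderline divergence in dimension two is exactly why two sites fail while three sites (which allow cancelling both the zeroth and first moments) succeed.
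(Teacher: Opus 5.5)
Your proof is correct. Part (i) coincides with the paper's argument, but for part (ii) you take a genuinely different route.

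The paper works in physical space. It sets $w=\sum_j a_jG(\cdot-h_j)$ with the normalized potential kernel and invokes the Liouville theorem for discrete harmonic functions of sublinear growth to get $u=C+w$. It then extracts $a_1+a_2=0$ and $a_1h_1+a_2h_2=0$ from the $\log|n|$ and $|n|^{-1}$ terms of the Kozma--Schreiber far-field expansion, using that neither term is compatible with $u\in\ell^2(\Z^2)$.

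You instead use $\widehat u=\widehat f/p_4\in L^2(\T^2)$. You detect the same zeroth- and first-moment conditions as the order of vanishing of the trigonometric polynomial $\widehat f$ at the elliptic point $\xi=0$. The divergences of $\int_0^\delta r^{-3}\,dr$ and $\int_0^\delta r^{-1}\,dr$ are the Fourier-side counterparts of the non-summability of $\log|n|$ and $|n|^{-1}$.

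Your version is more self-contained: it needs only Plancherel and the Taylor expansion of $p_4$, with no Green-function asymptotics and no Liouville theorem. It also immediately gives a sharp criterion. For any finitely supported $f=\sum_j c_j\delta_{y_j}$, the equation $(\Delta-4)u=f$ has a solution in $\ell^2(\Z^2)$ if and only if $\sum_j c_j=0$ and $\sum_j c_jy_j=0$, since then $\widehat f=O(|\xi|^2)$ and $\widehat f/p_4$ is bounded.

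The paper's version instead reuses the far-field expansion that drives the construction in Theorem~\ref{thm:existence}. As a result, existence for $K\geq3$ and nonexistence for $K\leq2$ appear as two sides of one moment-cancellation mechanism, and the argument also describes the spatial asymptotics of $u$.

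One small correction to your write-up. Near directions where $(a-b)\cdot\omega=0$, the integrand is not $|c_a|^2|(a-b)\cdot\omega|^2r^{-2}(1+O(r))$. The error should be taken additively, as $O(r^{-1})$, which is integrable against $r\,dr$; alternatively, restrict to a sector where $|(a-b)\cdot\omega|$ is bounded below. The conclusion is unchanged.
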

	\subsection{Eigenvalues in $(-4,0)\cup(0,4)$}

	\begin{theorem}\label{thm:noncritical-construction}
		For every noncritical energy
		$\lambda\in (-4,0)\cup(0,4)$ and every $\alpha>1$, there exist a
		real-valued potential $V:\Z^2\to\R$ and a nonzero function $u$
		satisfying $|u(n)|\leq\la n\ra^{-\alpha}$ and
		\[
		(\Delta+V)u=\lambda u,
		\qquad
		|V(n)|\leq C_{\alpha,\lambda}\la n\ra^{-1}.
		\]
		In particular, $u\in\ell^2(\Z^2)$, and hence $\lambda$ is an
		eigenvalue of $\Delta+V$.
	\end{theorem}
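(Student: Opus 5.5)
We will use a Wigner--von Neumann type construction in which the potential is defined from the eigenfunction itself: we first build a real, nowhere-vanishing function $u$ with the required decay, and then set $V(n):=\bigl(\lambda u(n)-(\Delta u)(n)\bigr)/u(n)$. With this choice $(\Delta+V)u=\lambda u$ holds by definition. The whole problem therefore reduces to finding $u$ with $|u(n)|\le\la n\ra^{-\alpha}$ such that
\[
\bigl|(\Delta u)(n)-\lambda u(n)\bigr|\le C\la n\ra^{-1}|u(n)|
\]
for every $n$.

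Fix $\lambda\in(0,4)$; the case $\lambda<0$ follows from the unitary map $u(n)\mapsto(-1)^{n_1+n_2}u(n)$, which conjugates $\Delta$ to $-\Delta$. The regular Fermi curve $\{2\cos\xi_1+2\cos\xi_2=\lambda\}$ is smooth, and we can choose a point on it with both $\cos\xi_j\neq0$. The natural choice is the diagonal point $\xi_1=\xi_2=\theta$ with $\cos\theta=\lambda/4$, which lies in $(0,1)$ and so is nonzero. We then try the ansatz
\[
u(n)=\prod_{j=1}^2 w_j(n_j),\qquad w_j(m)=\cos(\theta m+\phi_j)\,\la m\ra^{-\beta},
\]
but modified to avoid zeros. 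The correct way to avoid zeros is the one-dimensional Wigner--von Neumann approach: write $w(m)=R(m)\sin\psi(m)$, using the discrete Prüfer form of the 1D equation $w(m+1)+w(m-1)=(2\cos\theta-q(m))w(m)$. One chooses $q(m)=c\,\sin(2\psi(m))/m$-type with $|q(m)|\le C/\la m\ra$, so that $R(m)\sim m^{-\beta}$ with $\beta$ as large as desired; the size of $\beta$ is set by $c$. As in \cite{LiuCriteria2021}, this requires $2\theta\notin\pi\Z$, which holds because $\lambda\neq0,\pm4$. Setting $V(n)=q(n_1)+q(n_2)$ then gives $(\Delta+V)u=\lambda u$ for the separable product $u=w(n_1)w(n_2)$, and no division is needed.

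This separable approach has a flaw. It gives $|V(n)|\le C(\la n_1\ra^{-1}+\la n_2\ra^{-1})$, which is not $\le C\la n\ra^{-1}$ along the axes. Likewise $|u|\le C\la n_1\ra^{-\beta}\la n_2\ra^{-\beta}$ does not give $\la n\ra^{-\alpha}$ there. To fix this we switch to a rotated separable structure that decays in all directions. Use coordinates adapted to a quadrant decomposition, or better, take $u$ to be a sum or gluing of pieces. The fix we will try first is the following. Use the one-dimensional constructions only in a sector where $|n_1|\gtrsim|n|$, so that $\la n_1\ra^{-1}\lesssim\la n\ra^{-1}$. In the complementary sector, use the roles swapped. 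Then glue the two with a smooth-in-angle cutoff, varying on scale $|n|$, so that the errors it introduces are $O(\la n\ra^{-1})$ relative to $u$. This is consistent with the target $V=O(\la n\ra^{-1})$.

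The main obstacle is this gluing, together with keeping $u$ nonzero, or more precisely keeping $|\Delta u-\lambda u|\lesssim\la n\ra^{-1}|u|$ near the zeros of $u$. A product $\sin\psi_1\sin\psi_2$ vanishes on lines, and the ratio defining $V$ blows up there. What we will try first is to replace real oscillation with a complex-phase core $e^{i\theta(n_1+n_2)}$, accepting a complex intermediate function. After that we take a real combination whose zero set is controlled, namely $\mathrm{Re}$ of a product of amplitudes times the phase. At points where the real part is small we adjust by a bounded local correction, and we verify using the Prüfer-type identities that the quotient remains $O(\la n\ra^{-1})$. If this fails, the fallback is to mimic the paper's $\lambda=0$ construction in Theorem~\ref{thm:critical-construction}, replacing the factorized Fermi curve by the single smooth curve. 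There the first-order (rather than second-order) vanishing of the phase mismatch accounts for the exponent $-1$ instead of $-2$.
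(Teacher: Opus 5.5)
There is a genuine gap. You reduce correctly: once $V:=(\lambda u-\Delta u)/u$, everything rests on a real, nowhere-vanishing $u$ with $|u(n)|\le\la n\ra^{-\alpha}$ and $|(\Delta-\lambda)u(n)|\le C\la n\ra^{-1}|u(n)|$. But the proposal never produces such a $u$, which is the whole content of the theorem, and none of the routes you sketch reaches it.

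\begin{itemize}
\item \emph{Separable product.} Its only defect is $V(0,n_2)=q(0)+q(n_2)$. The decay of $u$ is fine, since $\la n_1\ra^{-\beta}\la n_2\ra^{-\beta}\le C\la n\ra^{-\beta}$.
\item \emph{Angular gluing.} This fails at the division step. In the transition region, $u=\chi u_1+(1-\chi)u_2$ sums two pieces with different oscillation patterns. So $|u|$ can be arbitrarily small at lattice points where $|u_1|+|u_2|\sim\la n\ra^{-\beta}$. Meanwhile $(\Delta-\lambda)u$ contains the cutoff commutator, of size $\la n\ra^{-1}(|u_1|+|u_2|)$, which has no reason to vanish there, so the quotient is uncontrolled.
\item \emph{Local correction and complex phase.} The ``bounded local correction'' is never defined, and changing $u$ at one site changes $\Delta u$ at its neighbours. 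A complex phase makes $V$ complex, and taking real parts brings the zeros back.
\item \emph{Imitating $\lambda=0$.} At $\lambda=0$ the oscillating factor $(-1)^{n_2}$ has modulus one. For $0<\lambda<4$ a real free wave such as $q_m=\sin(km+\theta)$ comes arbitrarily close to $0$ when $k/\pi\notin\mathbb Q$. With $u(m,j)=q_m\la(m,j)\ra^{-\alpha}$, the first-order amplitude terms $\bigl(\eta(n\pm e_1)/\eta(n)-1\bigr)q_{m\pm1}/q_m$ sum to roughly $-\alpha m\la n\ra^{-2}(q_{m+1}-q_{m-1})/q_m$. Since $q_{m+1}-q_{m-1}=2\sin k\cos(km+\theta)$ has modulus near $2\sin k$ where $q_m$ is small, this is of size $\la n\ra^{-1}/|q_m|$ along $j=0$, not $\la n\ra^{-1}$. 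For rational $k/\pi$, $|q_m|$ is bounded below and the naive amplitude would in fact work, but not in general.
\end{itemize}

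The paper's construction supplies the two missing ideas.

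First, it uses the Fermi point with $\xi_2=0$. Write $\lambda=2+2\cos k$ with $k\in(0,\pi)$. The free solution $q_m=\sin(km+\theta)$ then depends only on $m$, and $\theta$ is chosen off a countable set so that $q_m\neq0$ for all $m$. The transverse direction is non-oscillatory, so it can carry an amplitude in $|j|$ at cost $O(1/F)$. Your diagonal point makes both directions oscillatory, which is what creates zero lines in both variables.

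Second, the amplitude moves only where the oscillation is not small. Set $\Phi(m)=\sum_{\ell=0}^{m-1}q_\ell^2q_{\ell+1}^2$ (symmetrically for $m<0$), $F=1+\Phi(m)+|j|$, $\eta=F^{-\alpha}$ and $u=\eta q_m$.
\begin{itemize}
\item One has $|\eta(m\pm1,j)/\eta(m,j)-1|\lesssim q_m^2q_{m\pm1}^2/F$. After multiplying by $q_{m\pm1}/q_m$, the error is $\lesssim|q_m||q_{m\pm1}|^3/F\le C/F$, however small $q_m$ is.
\item The main terms cancel exactly by $q_{m+1}+q_{m-1}+2q_m=\lambda q_m$.
\item Expanding $\sin^2x\sin^2(x+k)$ gives $\Phi(m)=\mu|m|+O(1)$ with $\mu>0$, hence $F\asymp\la n\ra$. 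This yields both $|u|\lesssim\la n\ra^{-\alpha}$ and $|V|\lesssim\la n\ra^{-1}$.
\end{itemize}
No sector decomposition or gluing is needed.
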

	We now arrive at the most central result of this paper.
	\begin{theorem}\label{thm1:absence}
		Suppose that $V:\Z^2\to\mathbb C$ satisfies
		\[
		|V(n)|\leq C\la n\ra^{-\kappa}
		\]
		for some $\kappa>1$.  If
		\[
		(\Delta+V)u=\lambda u,
		\qquad \lambda\in(-4,0)\cup(0,4),
		\qquad
		u\in\ell^2(\Z^2),
		\]
		then $u=0$.  Consequently, $\lambda$ is not an eigenvalue of
		$\Delta+V$.
	\end{theorem}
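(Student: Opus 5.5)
The plan follows the two ingredients listed in Table~\ref{tab:sharp-d2}: a discrete Carleman estimate first upgrades the decay of $u$ from $\ell^2$ to super-polynomial, and a Hilbert-space-valued forward iteration (as in the proof of Theorem~\ref{thm:absence}) then forces $u\equiv0$.

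\emph{Step 1 (a priori decay).} First I would show that $u$ decays faster than any polynomial. Fix $\lambda\in(-4,0)\cup(0,4)$ and write $(\Delta-\lambda)u=-Vu$. I would prove a discrete Carleman estimate with weight $\phi_\tau(n)=\tau\log\la n\ra$. The inequality I aim for is
\[
\tau\,\norm{\la n\ra^{\tau-1/2}w}_{\ell^2}\;\lesssim\;\norm{\la n\ra^{\tau+1/2}(\Delta-\lambda)w}_{\ell^2},
\]
uniformly in $\tau$ outside a discrete exceptional set, for $w$ supported in $\{|n|\ge R\}$.

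The proof would conjugate $\Delta-\lambda$ by $e^{\phi_\tau}$ and split the result into symmetric and antisymmetric parts. The commutator term is positive on the characteristic set precisely because the Fermi curve $\{p_\lambda=0\}$ has nonvanishing gradient and nonzero curvature when $0<|\lambda|<4$. This is where the noncriticality enters; at $\lambda=0$ the critical points $(\pm\pi,0),(0,\pm\pi)$ destroy the positivity.

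Applying the estimate to $w=\chi u$, with a cutoff $\chi$, the gain of $\la n\ra^{-1}$ beats $|V|\le C\la n\ra^{-\kappa}$, $\kappa>1$, once $|n|\ge R$ is large. Letting $\tau\to\infty$ along a bootstrap then gives $\la n\ra^{N}u\in\ell^2$ for every $N$. I must make sure the bootstrap can start from mere $\ell^2$. For this I would use a first Agmon-type step based on the limiting absorption bound quoted above: the weight $\la n\ra^{-1/2-\varepsilon}$ suffices at regular energies, and $\kappa>1$ leaves exactly the room needed, giving $\la n\ra^{\delta}u\in\ell^2$ for some $\delta>0$.

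\emph{Step 2 (from rapid decay to vanishing).} With super-polynomial decay in hand, I would pass to diagonal coordinates, or simply slice $\Z^2$ by the lines $\{n_1=k\}$. I would view $U_k=u(k,\cdot)\in\ell^2(\Z)$ as a Hilbert-space-valued sequence satisfying the recursion
\[
U_{k+1}+U_{k-1}=(\lambda-\Delta_1-V_k)U_k,
\]
where $\Delta_1$ is the one-dimensional adjacency operator. The free transfer matrices are Chebyshev polynomials of $\lambda-\Delta_1$, and the Chebyshev growth bounds control the iteration exactly as in the forward iteration for Theorem~\ref{thm:absence}.

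On the spectral subspace where $|\lambda-2\cos\theta|>2$ the free dynamics is hyperbolic. On the elliptic part, rapid decay of $U_k$ as $k\to+\infty$, together with $\sum_k\norm{V_k}_\infty<\infty$ (which holds since $\kappa>1$), forces the elliptic components to vanish asymptotically. Running the recursion from infinity with the summable perturbation then yields $U_k=0$ for $k\ge k_0$. Unique continuation along the lattice then gives $u\equiv0$: the equation at $(k,m)$ determines $u(k-1,m)$ from the values on columns $k$ and $k+1$.

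\emph{Main obstacle.} I expect the Carleman estimate in Step 1 to be the hardest part. The lattice has no radial foliation, so the pseudoconvexity must be verified directly on the symbol $p_\lambda(\xi+i\nabla\phi)$. Moreover, the $\la n\ra^{-1}$ gain must be uniform in $\tau$; this is delicate because discrete conjugation produces $\cosh$ and $\sinh$ factors rather than polynomial ones. My first attempt would use a weight that is linear in a direction adapted to the curve, of the form $\tau\log(n_1+n_2+c)$, restricted to sectors. I would then patch the sectors together using the fact that the Fermi curve is strictly convex for $0<|\lambda|<4$.
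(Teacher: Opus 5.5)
Your Step 2 does not close, and it is where the real difficulty of the theorem lies. After slicing, the free transfer operators are $K_r=\mathcal U_{r-1}(A_\lambda/2)$ with $A_\lambda=\lambda I-(S+S^*)$. For $\lambda\neq0$ its spectrum $[\lambda-2,\lambda+2]$ is \emph{not} contained in $[-2,2]$. Take $0<\lambda<4$; the negative case follows by the checkerboard symmetry. On the nonempty set $\{\theta:\lambda-2\cos\theta>2\}$ the free recursion has modes $e^{\pm t(\theta)k}$ with $2\cosh t(\theta)=\lambda-2\cos\theta$. Here $t(\theta)$ ranges up to $a_\lambda=\operatorname{arcosh}(1+\lambda/2)$, so $\norm{K_r}\sim e^{a_\lambda r}$.

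This has three consequences. First, under merely super-polynomial decay of $U_\ell$, the forward series $\sum_{\ell>m}K_{\ell-m}V_\ell U_\ell$ need not converge. Second, the decaying modes $e^{-t(\theta)k}$ are not excluded by decay as $k\to+\infty$. Third, $V_k$ is a multiplication operator in $j$, so it couples all frequencies $\theta$, and you cannot dispose of the elliptic and hyperbolic parts separately as you suggest. Even on the elliptic part, the $\lambda=0$ argument you invoke needs both $\norm{K_r}\leq r$ and $\kappa>2$. With $1<\kappa\leq2$ the sum $\sum_{\ell>m}(\ell-m)\la\ell\ra^{-\kappa}$ diverges, so no contraction is available.

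What is missing is decay at exactly the critical exponential rate, $\sup_m e^{a_\lambda m}\norm{U_m}_{\cH}<\infty$. Subcritical rates $e^{-am}$ with $a<a_\lambda$ are not enough either. The paper reaches this rate in two steps.

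\emph{Exponential Carleman weights.} The Carleman estimate is proved with weights $e^{a\la n\ra}\la n\ra^k$ for every $a<a_\lambda$, not only $\tau\log\la n\ra$. Positivity comes from the bracket
\[
\{q,s\}=\overline w^{\,T}D^2\phi\,w\approx aR^{-1}|\omega^\perp\cdot\nabla p_\lambda(\xi+ia\omega)|^2 .
\]
This is nonvanishing on the complex characteristic set exactly when $a<a_\lambda$, which is where the threshold enters. The curvature of the real Fermi curve plays no role. The result is $e^{a\la n\ra}u\in\ell^2$ for all $a<a_\lambda$.

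\emph{Uniform Green kernel and endpoint.} The paper bounds the Green kernel of $(L_sf)_m=s^{-1}f_{m+1}+sf_{m-1}-b(\theta)f_m$ uniformly for $s\in[s_0,\rho]$, with $\rho=e^{a_\lambda}$. This lets one let $s\uparrow\rho$ and obtain the endpoint bound. At $s=\rho$ every characteristic root has modulus at least $1$, so the kernel is one-sided and bounded. The tail contraction then closes using only $\sum\la\ell\ra^{-\kappa}<\infty$, i.e.\ $\kappa>1$.

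Incidentally, the super-polynomial decay you single out as the main obstacle is the easy part. Since $\nabla p_\lambda\neq0$ on $\{p_\lambda=0\}$, dividing $p_\lambda\widehat u=-\widehat{Vu}$ in Sobolev spaces gains $\kappa-1$ derivatives per step, starting from $u\in\ell^2$. No Carleman estimate or limiting absorption bound is needed for that.
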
 
	
	A simple symmetry argument allows us to simply restrict to positive energies. Namely, the
	checkerboard transformation on $\Z^d$,
	\[
	(\mathcal Ju)(n)=(-1)^{n_1+\cdots+n_d}u(n),
	\]
	satisfies $\mathcal J\Delta\mathcal J=-\Delta$ and hence
	\begin{equation}\label{eq:checkerboard-symmetry}
	(\Delta+V)u=\lambda u
	\quad\Longleftrightarrow\quad
	(\Delta-V)\mathcal Ju=-\lambda\mathcal Ju.
	\end{equation}
	Consequently, it suffices to consider nonnegative energies. 
	\begin{remark}
 It is worth pointing out that  the case of existence of eigenvalues  outside the essential spectrum $[-4,4] $  is significantly easier,  due to the decay of the  Green's function. For instance, even carefully chosen potentials supported on a single point suffice  to create such an eigenvalue. 
	\end{remark}
	\subsection{Carleman estimate}
	
	As already indicated, a central ingredient in the proof of the above is a Carleman-type estimate which could be of independent interest and is recorded below. We need a bit of notational preparation first. 
	For $0<\lambda<4$, set
	\begin{equation}\label{criticalgrowth12}
	a_\lambda=\operatorname{arcosh}\left(1+\frac{\lambda}{2}\right).
	\end{equation}
	
	As is well known to experts, to state the estimate we first need to choose an appropriate weight function which we do next. 
	We also take this opportunity to define the setup to carry out the semiclassical analysis which lies at the heart of the estimate.

	Fix \(k\geq1\), and set \(h=k^{-1}\), and write \(x=hn\).  The scaled
	lattice is \(\Lambda_h=h\Z^2\).  Although \(x=hn\) lies in \(\Lambda_h\),
	the weights and symbols introduced below are defined on \(\mathbb R^2\),
	with \(x\) treated as a continuous variable. \ 
	Fix \(g\in\Z^2\). For $v\in \ell^2( \Lambda_h)$, define
	\begin{equation*}
	(T_gv)(x)=v(x+hg).
	\end{equation*}
	{Set \(P_\lambda:=\Delta-\lambda\). Then
		{	\begin{equation*}
			P_\lambda=\sum_{|e|_1=1}T_e-\lambda
			\end{equation*}}
		on \(\Lambda_h\) whose Fourier symbol is}
	\begin{equation*}
	p_\lambda(\xi):=2\cos\xi_1+2\cos\xi_2-\lambda,
	\qquad \xi\in\T^2.
	\end{equation*}
	
	Letting
	\begin{equation}\label{norm12}
	R_h(x)=\sqrt{h^2+|x|^2},
	\end{equation} 
	for \(\rho>0\), we define the following annuli 
	\begin{align}\label{annuli453}
	\Omega_\rho
	&=\{x\in\mathbb R^2:\rho/2\leq R_h(x)\leq4\rho\},\\
	\Omega_\rho^*
	&=\{x\in\mathbb R^2:\rho/4\leq R_h(x)\leq8\rho\}.
	\end{align}
	Since
	\begin{equation}
	R_h(hn)=h\la n\ra,
	\label{eq:scaled-radius}
	\end{equation}
	we have the following equivalence
	\begin{equation}	R_h(hn)\geq B
	\quad\Longleftrightarrow\quad
	\la n\ra\geq Bk.
	\label{eq:scaled-exterior-equivalence}
	\end{equation}	
	For \(a\geq0\), define
	\begin{equation}\label{def345}
	\phi_{a,h}(x)=aR_h(x)+\log R_h(x).
	\end{equation}
	Recalling that \(h=k^{-1}\), and since \eqref{eq:scaled-radius},
	\begin{equation}
	e^{\phi_{a,h}(hn)/h}
	=h^k e^{a\la n\ra}\la n\ra^k.
	\label{eq:weight-scaling}
	\end{equation}
	\nin	
	For \(|e|_1=1\), define
	\begin{equation}\label{gde}
	d_e(x)
	=\frac{\phi_{a,h}(x+he)-\phi_{a,h}(x)}h.
	\end{equation}
	Let \(M_g\) denote multiplication by \(g(x)\).  Then
	\begin{equation*}
	e^{\phi_{a,h}/h}T_ee^{-\phi_{a,h}/h}=M_{e^{-d_e}}T_e.
	\end{equation*}
	Define
	\begin{equation*}
	\mathcal A_{a,h}
	=e^{\phi_{a,h}/h}P_\lambda e^{-\phi_{a,h}/h}.
	\end{equation*}
	Thus
	\begin{equation}
	\mathcal A_{a,h}
	=\sum_{|e|_1=1}M_{e^{-d_e}}T_e-\lambda.
	\label{eq:exact-conjugation}
	\end{equation}
	
	Below, \(I\Subset(0,a_\lambda)\) denotes a compact interval  
	contained in \((0,a_\lambda)\).
	
	\begin{theorem}[Carleman estimates]\label{thm:carleman-estimates}
		Let \(I\Subset(0,a_\lambda)\).  There are \(B_I,C_I>0\) and \(h_I>0\)
		such that, for \(a\in I\), \(0<h\leq h_I\), \(\rho\geq B_I\), and
		every finitely supported \(v\) satisfying
		\(\operatorname{supp}v\subset\Omega_\rho\), one has
		\begin{equation}
		h\rho^{-1}\norm v_2^2
		\leq C_I\norm{\mathcal A_{a,h}v}_2^2.
		\label{eq:theorem-A-positive-a}
		\end{equation}
		There are \(B,C>0\) and \(h_0>0\) such that, for \(0<h\leq h_0\),
		\(\rho\geq B\), and every finitely supported \(v\) satisfying
		\(\operatorname{supp}v\subset\Omega_\rho\), one has
		\begin{equation}
		h\rho^{-2}\norm v_2^2
		\leq C\norm{\mathcal A_{0,h}v}_2^2.
		\label{eq:theorem-A-zero}
		\end{equation}
		
	\end{theorem}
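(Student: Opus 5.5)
I would prove both estimates in \eqref{eq:theorem-A-positive-a}--\eqref{eq:theorem-A-zero} by the usual symmetric/antisymmetric splitting of the conjugated operator, followed by a positivity argument for the resulting commutator on the characteristic set.

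\textbf{Step 1: splitting.} Write $\mathcal A_{a,h}=\mathcal S+i\mathcal B$ with $\mathcal S=\tfrac12(\mathcal A_{a,h}+\mathcal A_{a,h}^*)$ and $\mathcal B=\tfrac1{2i}(\mathcal A_{a,h}-\mathcal A_{a,h}^*)$. For finitely supported $v$,
\begin{equation*}
\norm{\mathcal A_{a,h}v}_2^2=\norm{\mathcal Sv}_2^2+\norm{\mathcal Bv}_2^2+\ip{i[\mathcal S,\mathcal B]v}{v}.
\end{equation*}
By \eqref{eq:exact-conjugation}, and using $(M_{e^{-d_e}}T_e)^*=M_{e^{-d_e(\cdot-he)}}T_{-e}$, both $\mathcal S$ and $\mathcal B$ are explicit finite-difference operators.

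\textbf{Step 2: symbols.} On $\Omega_\rho^*$ a Taylor expansion gives
\begin{equation*}
d_e(x)=\nabla\phi_{a,h}(x)\cdot e+O(h\rho^{-2}),\qquad \nabla\phi_{a,h}(x)=\Bigl(a+\frac1{R_h(x)}\Bigr)\frac{x}{R_h(x)}=:\eta(x).
\end{equation*}
The principal symbols are therefore
\begin{align*}
s(x,\xi)&=2\cosh\eta_1\cos\xi_1+2\cosh\eta_2\cos\xi_2-\lambda,\\
b(x,\xi)&=-2\sinh\eta_1\sin\xi_1-2\sinh\eta_2\sin\xi_2,
\end{align*}
up to errors of size $O(h\rho^{-2})$. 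The commutator $i[\mathcal S,\mathcal B]$ then equals $h\,\mathrm{Op}(\{s,b\})$ plus remainders of order $h^2\rho^{-2}$ (the derivatives of $\eta$ are $O(\rho^{-1})$, the second derivatives $O(\rho^{-2})$). Because the operators are exact finite differences with smooth coefficients, I would do this calculus directly: expand the commutators of $M_f$ with $T_e$ via $f(x+he)-f(x)=h\,\partial_ef+O(h^2\rho^{-2})$, rather than invoking a general semiclassical calculus.

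\textbf{Step 3: Hörmander positivity.} The key claim is that on the characteristic set $\{s=b=0\}$,
\begin{equation*}
\{s,b\}\gtrsim \text{(Hessian of $\phi_{a,h}$)}\sim \rho^{-1} \ (a>0),\qquad \{s,b\}\gtrsim\rho^{-2}\ (a=0).
\end{equation*}
\begin{itemize}[label=\textup{$\cdot$}]
\item For $a>0$, $\phi_{a,h}$ has tangential curvature $\sim a/\rho$, and $\{s,b\}=\sum_{j,k}\partial_{\xi_j}s\,\partial^2_{jk}\phi\,\partial_{\xi_k}(\text{Re-analytic part})+\dots$. This is the usual identity $\{\operatorname{Re}p(\xi+i\nabla\phi),\operatorname{Im}p(\xi+i\nabla\phi)\}=2\,\mathrm{Hess}\phi(\nabla_\xi p,\overline{\nabla_\xi p})$ on the characteristic set, so it is positive definite provided $\nabla_\xi p(\xi+i\eta)\neq0$ there.
\item The restriction $|\eta|\le a+O(\rho^{-1})<a_\lambda$ is exactly what keeps the complexified Fermi curve away from its critical points. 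One checks that $p_\lambda(\xi+i\eta)=0=\nabla_\xi p_\lambda(\xi+i\eta)$ forces $\sin(\xi_j+i\eta_j)=0$, and hence $\pm2\cosh\eta_1\pm2\cosh\eta_2=\lambda$. For $\lambda\in(0,4)$ this needs $\max_j|\eta_j|\ge\operatorname{arcosh}(1+\lambda/2)=a_\lambda$, which is excluded. Compactness of $I\Subset(0,a_\lambda)$ then gives uniform bounds.
\item For $a=0$, $\eta=O(\rho^{-1})$, the Hessian of $\log R_h$ is $\sim\rho^{-2}$ tangentially, and $\nabla_\xi p_\lambda\neq0$ on the real Fermi curve since $\lambda\neq0,\pm4$. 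This gives the $\rho^{-2}$ gain.
\end{itemize}

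\textbf{Step 4: away from the characteristic set.} Away from $\{s=b=0\}$, ellipticity gives $\norm{\mathcal Sv}^2+\norm{\mathcal Bv}^2\gtrsim\norm v^2$, and this dominates. I would combine the two regions by the standard device of a G\aa rding-type inequality applied to $\{s,b\}+\mu\rho^{-1}h^{-1}(s^2+b^2)$ (with $\rho^{-2}$ in place of $\rho^{-1}$ when $a=0$), which is positive everywhere for a large constant $\mu$. Since the symbols are trigonometric polynomials in $\xi$ with slowly varying coefficients, a sharp G\aa rding inequality on $\Lambda_h$ costs only $O(h^2\rho^{-2})$. Taking $h\le h_I$ and $\rho\ge B_I$ then absorbs all remainders (including the $O(h^2\rho^{-2})$ errors from Step 1 and the $1/R$ correction in $\eta$) into $h\rho^{-1}\norm v^2$, respectively $h\rho^{-2}\norm v^2$.

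\textbf{Main obstacle.} I expect the hardest part to be Step 3 together with the G\aa rding step: the positivity of $\{s,b\}$ must be uniform along the whole characteristic set, near the points where $\nabla_\xi p$ becomes small as $a\to a_\lambda$. I would handle this by proving the nondegeneracy claim above quantitatively on the compact parameter set $I$, and by keeping the discrete commutator remainders explicit, so that no general pseudodifferential calculus on $h\Z^2$ is needed.
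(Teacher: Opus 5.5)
Your overall architecture is the paper's: split into self-adjoint parts, use $i[\mathcal Q,\mathcal S]\approx h\operatorname{Op}_h(\{q,s\})$, add a multiple of $|A^0_{a,h}|^2$, apply a G\aa rding/Friedrichs step, and absorb the remainders. The gap is in Step~3, which carries the whole estimate, because it uses the wrong nondegeneracy condition.

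With $w=\nabla p_\lambda(\xi+i\nabla\phi_{a,h}(x))$ one has exactly $\{q,s\}=\overline w^{\,T}D^2\phi_{a,h}\,w$ (your $\{s,b\}$). The Hessian is approximately $aR^{-1}(\mathrm{Id}-\omega\otimes\omega)+R^{-2}(\mathrm{Id}-2\omega\otimes\omega)$, which is \emph{not} positive definite. It vanishes to leading order in the radial direction $\omega$, and its correction there is negative, about $-R^{-2}$. Hence $R\{q,s\}\to a|\omega^\perp\cdot\nabla p_\lambda(\xi+ia\omega)|^2$. What must be shown is that the \emph{tangential} component $\omega^\perp\cdot w$ stays away from zero on $\{p_\lambda(\xi+ia\omega)=0\}$; $w\neq0$ is not enough.

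Your verification of $w\neq0$ is also miscomputed. The equation $\sin(\xi_j+i\eta_j)=0$ forces $\eta_j=0$ and $\xi_j\in\pi\Z$, since the complex zeros of sine are real. So for $0<\lambda<4$ the full gradient never vanishes on the complexified Fermi curve, for any $\eta$, and the threshold $a_\lambda$ cannot come out of that condition. Concretely, take $a=a_\lambda$, $\omega=e_1$, $\xi=(0,\pi)$. Then $p_\lambda(\xi+ia\omega)=0$ and $w=(-2i\sinh a_\lambda,0)\neq0$, yet $\omega^\perp\cdot w=0$. So the leading part of the bracket vanishes there even though $w\neq0$.

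The missing ingredient is the paper's tangential lemma. Suppose $p_\lambda(\xi+ib\omega)=0$ and $\omega^\perp\cdot\nabla p_\lambda(\xi+ib\omega)=0$. The imaginary part of the first equation and the real part of the second form a $2\times2$ linear system in $(\sin\xi_1\cosh(b\omega_1),\sin\xi_2\cosh(b\omega_2))$. Its determinant is $-\omega_1\tanh(b\omega_1)-\omega_2\tanh(b\omega_2)\neq0$; this uses that the imaginary shift $b\omega$ is parallel to the direction in which the Hessian degenerates. Hence $\sin\xi_1=\sin\xi_2=0$. The real part of $p_\lambda=0$ then forces, up to relabeling, $\cosh(b\omega_1)=\cosh(b\omega_2)+\lambda/2$, so $b\geq a_\lambda$. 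Compactness over $I\Subset(0,a_\lambda)$ then yields $\{q,s\}+\rho^{-1}|A^0_{a,h}|^2\geq c_I\rho^{-1}$.

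A related, smaller problem affects your $a=0$ case off the characteristic set. There $D^2\log R_h=R^{-2}(\mathrm{Id}-2\,x\otimes x/R^2)$ is indefinite, so $R^2\{q,s\}\to|\nabla p_\lambda|^2-2|\omega\cdot\nabla p_\lambda|^2$, which is negative where $\nabla p_\lambda\parallel\omega$ on the real Fermi curve. Since $R\operatorname{Im}A^0_{0,h}\to\omega\cdot\nabla p_\lambda$, the negative part must be compensated by $|\operatorname{Im}A^0_{0,h}|^2$ with a weight of order one; the paper uses $\{q,s\}+2(\rho^{-2}|q|^2+|s|^2)$. Your weight $\mu\rho^{-2}h^{-1}$ on the imaginary part gives a positive symbol only when $\rho^2h\lesssim\mu$. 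That is not uniform in the required range $\rho\geq B$, $0<h\leq h_0$.
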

	
	\subsection{Brief overview of the ideas}	
	We next outline the main ideas behind the proof of the three parts of the main result. \\

	\nin	
	At the spectral
	edges, the relevant object is the threshold Green function introduced in
	Subsection~\ref{subsec:edge-green-function}.  In two dimensions
	the normalized potential kernel grows logarithmically.  Taking a finite difference
	of order $K-1$ cancels the first $K-1$ terms in its far-field expansion and produces
	a solution with decay $|n|^{-(K-1)}$.  Placing the corresponding $K$ sources at
	the support of $V$ gives an edge eigenfunction for every $K\geq3$.  Conversely,
	the asymptotic {expansion of the Green function converts faster decay into moment
		conditions on those sources}.  {For two support points, the first two moment
		conditions force both source coefficients to vanish.}  This {approach is closely
		related to the threshold and ground-state questions studied} in
	\cite{HundertmarkJexLange2023,JexStampach2025}, but here the emphasis is on the
	optimal support size and the precise Green-function cancellation.
	
	At the interior critical energy $\lambda=0$, both existence and absence {reflect
		the factorization \eqref{eq:zero-fermi-factorization}.  Existence is treated in
		Section~\ref{sec:existence-zero}, and absence in
		Section~\ref{sec:absence-zero}}.  For existence, we
	construct a nonvanishing square-summable solution whose oscillation is adapted to
	one of the two components and define
	\begin{equation}\label{eq:construct-potential}
	V(n)=\frac{\lambda u(n)-\Delta u(n)}{u(n)}.
	\end{equation}
	The factorization produces the cancellation that yields
	$V(n)=O(\la n\ra^{-2})$.  For absence, we organize the equation in one
	coordinate as a Hilbert-space-valued recurrence.

	For the regular interior energies, the existence construction in
	Section~\ref{sec:existence-noncritical} belongs to the
	Wigner--von Neumann tradition, which remains an active source of sharp examples;
	see \cite{FrankSimon2017,IonescuJerison2003,Kato1959,WignerVonNeumann1929}.
	One begins with a nonvanishing oscillatory free solution along one coordinate,
	multiplies it by a carefully chosen algebraically decaying amplitude, and again
	solves \eqref{eq:construct-potential}.  
	Piecewise and gluing versions
	of this principle have produced single, multiple, and even prescribed families
	of embedded eigenvalues in one-dimensional, periodic, and geometric settings;
	see \cite{JitomirskayaLiu2019,LiuCriteria2021,LiuOng2020}.
	
	The absence theorem for $0<|\lambda|<4$ is the most delicate part of the paper.
	It begins with a Carleman estimate for the discrete operator, proved in
	Section~\ref{sec:carleman-estimates}.  Carleman estimates
	are a basic tool for unique continuation and have been used directly to exclude
	embedded eigenvalues in the continuous setting, notably by Ionescu--Jerison and
	Koch--Tataru \cite{IonescuJerison2003,KochTataru2006}.  Their adaptation to the
	lattice is not formal. 
	{Conjugating a difference operator by an exponential
		weight produces weighted translations rather than a differential operator.}
  
	We develop a semiclassical calculus for the conjugated operator
	and first obtain decay faster than any power in
	Section~\ref{sec:weighted-estimates} and then, in
	Section~\ref{sec:exponential-decay}, the crucial \emph{exponential decay}:
	\begin{equation}\label{eq:subcritical-exponential}
	e^{a\la n\ra}u\in\ell^2(\Z^2)
	\qquad\text{for every }a<a_\lambda,
	\end{equation}
	where
	\begin{equation}\label{eq:a-lambda}
	a_\lambda=\operatorname{arcosh}\!\left(1+\frac{|\lambda|}{2}\right),
	\qquad 0<|\lambda|<4.
	\end{equation}	

	The strict inequality in \eqref{eq:subcritical-exponential} is not sufficient for
	the final argument.  When the equation is iterated in one coordinate direction,
	the free recurrence grows at the exact exponential rate $e^{a_\lambda r}$.
	Thus any loss in the eigenfunction decay leaves an exponentially growing factor,
	which cannot be compensated by algebraic decay of $V$.  Reaching the endpoint
	$a=a_\lambda$ is therefore crucial.  The Carleman estimate alone
	cannot reach this endpoint, because its constants deteriorate as
	$a\uparrow a_\lambda$.  However, to prevent the exposition from getting overly technical, we refrain from further elaborating on how we bridge this gap. \\

	\nin

	The final comment that we make before proceeding to the formal arguments is that although the complete result is stated in dimension two, several ingredients work in general dimensions including a version of the Carleman estimate. Nonetheless, what turns out to be  
	special in two
	dimensions is that the geometry of the three Fermi regimes can be classified
	completely and estimates establishing the exact transition locations can be obtained.

	\subsection{Acknowledgements} SG was supported by a Miller Research Professorship at the Miller Institute for Basic Research in Science and NSF Career grant-1945172 and NSF grant-2553225. This collaboration began while WL was a Visiting Miller Professor at the Miller Institute for Basic Research in Science at UC Berkeley in Fall 2025, and was completed during his visit to the Simons Institute for the Theory of Computing in Fall 2026. He expresses his gratitude to the Department of Mathematics at UC Berkeley, the Miller Institute, and the Simons Institute for their warm hospitality. He also thanks Yunlei Wang for bringing reference \cite{TadanoTaira2019} to his attention and June Vuong for helpful discussions.
He was supported in part by NSF grant DMS-2246031. \\

\nin
\textbf{AI usage:} An LLM was instrumental in bringing to the authors' attention the effectiveness of Carleman-type estimates in the study of spectral properties of operators and in the subsequent proof of  Theorem~\ref{thm:carleman-estimates} using semiclassical analysis. This then was the key input used to deliver the weighted estimates Propositions \ref{prop:polynomial-rigorous} and \ref{prop:exponential-rigorous}.

	\section{Absence and existence of a zero-energy eigenvalue}
	\subsection{Proof of Theorem \ref{thm:absence}}\label{sec:absence-zero}
	Write $n=(m,j)\in\Z^2$ and introduce the $\ell^2(\Z)$-valued sequence
	\[
	U_m=(u(m,j))_{j\in\Z}\in\cH:=\ell^2(\Z).
	\]
	Let $S$ be the shift on $\cH$, given by $(Sw)_j=w_{j+1}$, and set
	\[
	A=-(S+S^*).
	\]
	Recalling that $V=\{V(m,j)\}_{(m,j)\in \Z^2}$ is the potential, for each $m\in\Z$, define $V_m:\cH\to\cH$ by
	\[
	(V_mw)_j=V(m,j)w_j.
	\]
	The above is clearly well defined since $V(m,j)$ decays to $0$ as $j\to \infty.$ Given this, 
	the equation
	$(\Delta+V)u=0$ is equivalent to
	\begin{equation}\label{eq:recurrence}
	U_{m+1}+U_{m-1}-AU_m=-V_mU_m.
	\end{equation}
	The operator $A$ is self-adjoint and $\sigma(A)=[-2,2]$.
	
	Since $u\in\ell^2(\Z^2)$,
	\[
	\sum_{m\in\Z}\norm{U_m}_{\cH}^2<\infty.
	\]
	In particular, $U_m\to0$ in $\cH$ as $m\to+\infty$.  For $M\geq1$, set
	\[
	S_M:=\sup_{\ell\geq M}\norm{U_\ell}_{\cH}<\infty.
	\]
	For every $\ell\geq M$, the decay of $V$ by \eqref{gaug271} gives
	\begin{equation}\label{eq:potential-bound}
	\norm{V_\ell U_\ell}_{\cH}
	\leq C\la \ell\ra^{-\kappa}\norm{U_\ell}_{\cH}
	\leq C\la \ell\ra^{-\kappa}S_M,
	\end{equation}
	because $\la(\ell,j)\ra\geq\la \ell\ra$ for every $j\in\Z$.
	
	We now iterate \eqref{eq:recurrence} toward $+\infty$.  Define
	\begin{equation}\label{eq:kernel-recurrence}
	K_0=0,
	\qquad K_1=I,
	\qquad K_{r+1}=AK_r-K_{r-1}\quad(r\geq1).
	\end{equation}
	If $\mathcal U_k$ denotes the {Chebyshev polynomial of the second kind} \cite{WikipediaChebyshev},
	then
	\[
	K_r=\mathcal U_{r-1}(A/2).
	\]
	Since $|\mathcal U_{r-1}(x)|\leq r$ on $[-1,1]$, the spectral theorem
	implies
	\begin{equation}\label{eq:kernel-bound}
	\norm{K_r}_{\cH\to\cH}\leq r.
	\end{equation}
	Moreover, since $\kappa>2$, direct computation gives
	\begin{equation}\label{eq:weighted-tail}
	\sum_{\ell=m+1}^{\infty}(\ell-m)\la \ell\ra^{-\kappa}
	\leq C\la m\ra^{2-\kappa},
	\qquad m\geq1.
	\end{equation}
	Iterating \eqref{eq:recurrence} toward $+\infty$ formally gives
	\begin{equation}\label{eq:forward-iteration}
	U_m=-\sum_{\ell=m+1}^{\infty}K_{\ell-m}V_\ell U_\ell.
	\end{equation}
	By
	\eqref{eq:potential-bound}--\eqref{eq:weighted-tail}, the series on the
	right converges absolutely in $\cH$; for completeness the identity itself is proved in
	Subsection~\ref{subsec:partial-fourier}.
	By \eqref{eq:forward-iteration},  for every $m\geq M$,
	\begin{align}
	\norm{U_m}_{\cH}
	&\leq \sum_{\ell=m+1}^{\infty}
	\norm{K_{\ell-m}}_{\cH\to\cH}\norm{V_\ell U_\ell}_{\cH} \nonumber\\
	&\leq CS_M
	\sum_{\ell=m+1}^{\infty}(\ell-m)\la \ell\ra^{-\kappa} \nonumber\\
	&\leq C\la m\ra^{2-\kappa}S_M \nonumber\\
	&\leq C\la M\ra^{2-\kappa}S_M.\label{g18}
	\end{align}
	Taking the supremum over $m\geq M$ in \eqref{g18} yields
	\begin{equation}\label{eq:contraction}
	S_M\leq C \la M\ra^{2-\kappa}S_M.
	\end{equation}
	Because $\kappa>2$, we may choose $M$ so large that
	\[
	C\la M\ra^{2-\kappa}<1.
	\]
	It follows from \eqref{eq:contraction} that $S_M=0$.  Therefore
	\[
	U_m=0
	\qquad\text{for every }m\geq M.
	\]
	In particular,  both $U_M$ and $U_{M+1}$ vanish.
	Equation \eqref{eq:recurrence} then implies that $U_m=0$ for every
	$m\in\Z$, and consequently $u=0$.\\

	We finish with the proof of \eqref{eq:forward-iteration}.
	
	\subsection{Uniqueness}\label{subsec:partial-fourier}
	
	Let $Y_m$ denote the right-hand side of \eqref{eq:forward-iteration}.
	By
	\eqref{eq:potential-bound}--\eqref{eq:weighted-tail}, the series on the
	right converges absolutely in $\cH$ and tends to zero as $m\to\infty$.
	Direct computation shows
	\[
	Y_{m+1}+Y_{m-1}-AY_m=-V_mU_m.
	\]
	Thus $W_m:=U_m-Y_m$ satisfies 
	\begin{equation}\label{eq:free-recurrence}
	W_{m+1}+W_{m-1}-AW_m=0
	\end{equation}
	and $W_m\to0$ as $m\to\infty$.
	
	To prove \eqref{eq:forward-iteration}, it remains to show that
	$W_m=0$ for every $m\in\Z$.
	We prove that $W$ vanishes in Fourier space.
	Equip $\T$ with the normalized measure $d\theta/(2\pi)$, and use the
	same notation $\mathcal F$ for the one-dimensional unitary Fourier transform
	\[
	\mathcal F:\ell^2(\Z)\longrightarrow L^2(\T),
	\qquad
	(\mathcal Fv)(\theta)=\sum_{j\in\Z}v_j e^{-ij\theta}.
	\]

	Write
	\[
	w_m(\theta):=(\mathcal FW_m)(\theta),
	\qquad \theta\in\T.
	\]
	Under this transform, $A=-(S+S^*)$ becomes multiplication by
	$-2\cos\theta$. Therefore \eqref{eq:free-recurrence} becomes
	\begin{equation}\label{eq:scalar-free-recurrence}
	w_{m+1}(\theta)+w_{m-1}(\theta)
	+2\cos\theta\,w_m(\theta)=0.
	\end{equation}
	The two characteristic roots of \eqref{eq:scalar-free-recurrence} are
	\[
	r_+(\theta)=-e^{i\theta},
	\qquad
	r_-(\theta)=-e^{-i\theta}.
	\]
	In particular,
	\[
	|r_+(\theta)|=|r_-(\theta)|=1,
	\qquad
	r_+(\theta)-r_-(\theta)=-2i\sin\theta.
	\]
	As will be apparent shortly, it will be convenient to fix $0<\delta<1$ and consider the set
	\[
	\Omega_\delta
	:=\{\theta\in\T:|\sin\theta|\geq\delta\}.
	\]
	{For $\theta\in\Omega_\delta$,} \eqref{eq:scalar-free-recurrence} gives
	\[
	w_m(\theta)
	=d_1(\theta)r_+(\theta)^m+d_2(\theta)r_-(\theta)^m,
	\]
	where
	\[
	d_1=\frac{w_1-r_-w_0}{r_+-r_-},
	\qquad
	d_2=\frac{r_+w_0-w_1}{r_+-r_-},
	\]
	and
	\begin{equation}\label{eq:free-coefficients}
	d_1r_+^m=\frac{w_{m+1}-r_-w_m}{r_+-r_-},
	\qquad
	d_2r_-^m=\frac{r_+w_m-w_{m+1}}{r_+-r_-}.
	\end{equation}
	Since $|r_\pm|=1$ and $|r_+-r_-|\geq2\delta$ on $\Omega_\delta$,
	\eqref{eq:free-coefficients} implies
	\[
	\norm{d_\nu}_{L^2(\Omega_\delta)}
	\leq \frac{1}{2\delta}
	\bigl(\norm{w_{m+1}}_{L^2(\T)}
	+\norm{w_m}_{L^2(\T)}\bigr),
	\qquad \nu=1,2.
	\]
	By Plancherel's theorem, the right-hand side tends to zero as
	$m\to\infty$.  Hence $d_1=d_2=0$ almost everywhere on
	$\Omega_\delta$, and therefore $w_m=0$ almost everywhere on
	$\Omega_\delta$ for every $m$.  Since $\delta>0$ is arbitrary and
	$\{\theta\in\T:\sin\theta=0\}$ has measure zero, $w_m=0$ almost everywhere
	on $\T$.  The inverse Fourier transform gives $W_m=0$ for every $m$, which
	proves \eqref{eq:forward-iteration}.

	\subsection{Existence of a zero-energy eigenvalue}\label{sec:existence-zero}

	\begin{proof}[Proof of Theorem \ref{thm:critical-construction}]
		Fix any $\alpha>1$. Recalling the notation from \eqref{norms245}, define
		\[
		\eta(n)=\la n\ra^{-\alpha}.
		\]
		Set
		\[
		u(n)=(-1)^{n_2}\eta(n).
		\]
		Since $\alpha>1$,
		\[
		\sum_{n\in\Z^2}|u(n)|^2
		=\sum_{n\in\Z^2}\la n\ra^{-2\alpha}<\infty.
		\]
		Thus $u\in\ell^2(\Z^2)$.  In addition, $u(n)\neq0$ for every
		$n\in\Z^2$.
		
		Define the potential pointwise by
		\[
		V(n)=-\frac{\Delta u(n)}{u(n)}.
		\]
		It is real-valued, and its definition immediately gives
		\[
		(\Delta+V)u=0.
		\]
		
		The factor $(-1)^{n_2}$ is unchanged by shifts in the first coordinate
		and changes sign under shifts in the second coordinate.  Hence
		
		\begin{equation}\label{gaug272}
		V(n)=
		\frac{
			\eta(n+e_2)+\eta(n-e_2)
			-\eta(n+e_1)-\eta(n-e_1)
		}{\eta(n)}.
		\end{equation}

		For $\nu=1,2$,
		\[
		\frac{\eta(n\pm e_\nu)}{\eta(n)}
		=
		\left(1+\frac{\pm2n_\nu+1}{\la n\ra^2}\right)^{-\alpha/2}.
		\]
		Taylor expansion at zero, with the two signs paired, gives
		\begin{equation}\label{gaug273}
		\frac{\eta(n+e_\nu)+\eta(n-e_\nu)}{\eta(n)}
		=2+ O(\la n\ra^{-2}),
		\end{equation}
		uniformly for large $|n|$.  By \eqref{gaug272} and \eqref{gaug273}, we conclude that
		\[
		|V(n)|\leq C \la n\ra^{-2}.
		\]

	\end{proof}
	\section{Spectral edges}\label{sec:spectral-edges}

	\subsection{The Green function}\label{subsec:edge-green-function}
	
	Note that
	\[
	p(\xi):=2\sum_{\nu=1}^d(\cos\xi_\nu-1),
	\qquad \xi\in\mathbb T^d,
	\]
	is the Fourier symbol of \(\Delta-2d\).  When \(d\geq3\), we define
	\begin{equation}\label{eq:green-high-d-definition}
	G(n)=\frac{1}{(2\pi)^d}
	\int_{\mathbb T^d}\frac{e^{in\cdot\xi}}{p(\xi)}\,d\xi.
	\end{equation}
	The singularity of \(p(\xi)^{-1}\) at \(\xi=0\) is of order
	\(|\xi|^{-2}\), so this integral is finite for \(d\geq3\).  In particular,
	\begin{equation}\label{eq:green-origin-high-d}
	G(0)=\frac{1}{(2\pi)^d}
	\int_{\mathbb T^d}\frac{1}{p(\xi)}\,d\xi<0.
	\end{equation}
	When \(d=2\), the integral in
	\eqref{eq:green-high-d-definition} diverges.  We instead use the normalized
	potential kernel
	\begin{equation}\label{eq:green-two-d-definition}
	G(n)=\frac{1}{(2\pi)^2}
	\int_{\mathbb T^2}
	\frac{\cos(n\cdot\xi)-1}{p(\xi)}\,d\xi.
	\end{equation}
	{This integral is finite, and the normalization ensures that} 
	\begin{equation}\label{eq:green-origin-two-d}
	G(0)=0.
	\end{equation}
	In both cases,  
	\begin{equation}\label{eq:green-equation}
	(\Delta-2d)G=\delta_0.
	\end{equation}
	\begin{lemma}[{\cite[Theorem~1]{KozmaSchreiber2004}}]\label{lem:green-asymptotics}
		Let
		\[
		r=|n|,
		\qquad
		\omega=\frac{n}{|n|}\in \mathbb S^{d-1}.
		\]
		For every integer $N\geq1$, as $r\to\infty$,
		\begin{align}
		G(n)
		&=c_2\log r+c_0+
		\sum_{m=1}^{N}r^{-m}A_m(\omega)
		+O(r^{-N-1}),
		&& d=2,\label{eq:green-expansion-2}\\
		G(n)
		&=c_dr^{-(d-2)}
		+\sum_{m=1}^{N}r^{-(d-2+m)}A_m(\omega)
		+O\bigl(r^{-(d-2+N+1)}\bigr),
		&& d\geq3,\label{eq:green-expansion-high}
		\end{align}
		where $c_2,c_d\neq0$ and $A_1,\ldots,A_N$ are smooth functions on
		$S^{d-1}$.  The constants in the remainder estimates are uniform in
		$\omega$.
	\end{lemma}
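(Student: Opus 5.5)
The plan is to obtain the expansion directly from the Fourier representation \eqref{eq:green-high-d-definition}--\eqref{eq:green-two-d-definition}. Split the integrand with a smooth cutoff $\chi$ equal to $1$ near $\xi=0$ and supported in a small ball. Since $p$ is real-analytic and vanishes on $\T^d$ only at $\xi=0$, the piece $(1-\chi)/p$ is smooth on $\T^d$. Repeated summation by parts (equivalently, integration by parts on the torus) then shows that its contribution to $G(n)$ is $O(r^{-L})$ for every $L$. In $d=2$ the $-1$ in the numerator contributes only an $n$-independent constant, which is absorbed into $c_0$. So everything reduces to the local integral $\int\chi(\xi)e^{in\cdot\xi}p(\xi)^{-1}\,d\xi$.

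Near the origin, Taylor's formula gives
\[
p(\xi)=-|\xi|^2+q_4(\xi)+q_6(\xi)+\cdots,
\]
where each $q_{2k}$ is a homogeneous polynomial of degree $2k$. Only even degrees occur, since $\cos$ is even. Expand
\[
\frac1{p}=-\frac1{|\xi|^2}\Bigl(1-\frac{q_4+q_6+\cdots}{|\xi|^2}\Bigr)^{-1}
\]
as a finite sum $\sum_{j=0}^{J}h_{2j-2}(\xi)+R_J(\xi)$. Here each $h_{2j-2}$ is a sum of terms $\xi^\beta/|\xi|^{2+2i}$, hence smooth on $\R^d\setminus\{0\}$ and homogeneous of degree $2j-2$. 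The remainder $R_J$ is $O(|\xi|^{2J})$, together with its derivatives scaling accordingly.

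For each homogeneous piece, I will replace $\chi h_{2j-2}$ by the tempered distribution $h_{2j-2}$ on $\R^d$. The error $(\chi-1)h_{2j-2}$ is smooth away from the origin, and its oscillatory integral against $e^{in\cdot\xi}$ decays faster than any power of $r$. The Fourier transform of $h_{2j-2}$, restricted to $x\neq0$, is the classical computation for homogeneous distributions. For degree $2j-2>-d$ it is homogeneous of degree $-(d-2+2j)$, with a smooth angular profile, because the symbol is smooth on the sphere. The single exceptional case is $d=2$, $j=0$: there $h_{-2}=-|\xi|^{-2}$ has the critical degree $-d$, and its (renormalized) transform is $c_2\log r+\mathrm{const}$, with $c_2=\frac1{2\pi}$ up to sign. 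The renormalization matches the subtraction of $1$ in \eqref{eq:green-two-d-definition}. The leading term $j=0$ for $d\geq3$ gives $c_dr^{-(d-2)}$ with $c_d=-\Gamma(\frac d2-1)/(4\pi^{d/2})\neq0$. Collecting terms produces \eqref{eq:green-expansion-2}--\eqref{eq:green-expansion-high}, with $A_m$ smooth. In fact only even $m$ appear, and the odd $A_m$ vanish, which is allowed.

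The main obstacle is the remainder $\chi R_J$, together with uniformity in $\omega$. $R_J$ is only finitely smooth at the origin: it behaves like a homogeneous function of degree $2J$ plus higher-order terms, so it is of class $C^{2J-1}$ but not $C^\infty$. I would handle it by a dyadic decomposition $\chi R_J=\sum_k\psi(2^k\xi)R_J$. On scales $2^{-k}\gtrsim r^{-1}$, integrate by parts $L$ times in the direction $\omega$, using $|\partial^\gamma R_J|\lesssim|\xi|^{2J-|\gamma|}$. On scales $2^{-k}\lesssim r^{-1}$, use the trivial bound. Summing gives a bound $O(r^{-(d+2J)})$ up to harmless logarithms if $L$ is chosen larger than $d+2J$, uniformly in $\omega$ since only $|n|$ enters. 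Choosing $J$ with $2J\geq N+1$ then yields the stated $O(r^{-N-1})$ and $O(r^{-(d-2+N+1)})$ errors. As the paper indicates, this is exactly \cite[Theorem~1]{KozmaSchreiber2004}, so the argument may alternatively be cited.
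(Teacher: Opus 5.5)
Your argument is correct. However, the paper contains no proof to compare it with step by step: the lemma is imported from \cite[Theorem~1]{KozmaSchreiber2004} and used as a black box.

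What you supply instead is a self-contained Fourier-analytic derivation:
\begin{enumerate}
\item Discard the region away from $\xi=0$. This contributes $O(r^{-\infty})$, plus a constant when $d=2$.
\item Expand $1/p$ at the origin as $\sum_{j\le J}h_{2j-2}+R_J$ with $|\partial^\gamma R_J|\lesssim|\xi|^{2J-|\gamma|}$. This is most cleanly seen by Taylor-expanding the even, smooth function $t\mapsto t^2/p(t\omega)$.
\item Use that a homogeneous function of degree $2j-2>-d$, smooth off the origin, has a Fourier transform that is smooth and homogeneous of degree $-(d-2+2j)$ on $\R^d\setminus\{0\}$.
\item Control $\chi R_J$ by the dyadic integration by parts. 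This in fact gives $O(r^{-(d+2J)})$ with no logarithm once $L>d+2J$, uniformly in $\omega$.
\end{enumerate}
Your constants are right: $c_d=-\Gamma(\frac d2-1)/(4\pi^{d/2})$, and $c_2=+1/(2\pi)$, which is positive, consistent with $G\geq0$ in $d=2$.

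The one step that deserves an explicit line is the critical term $d=2$, $j=0$. There the $-1$ cannot be separated from $\chi h_{-2}$, because each piece diverges. You should therefore evaluate $\int\chi(\xi)(1-\cos(n\cdot\xi))|\xi|^{-2}\,d\xi$ directly. For radial $\chi=\chi_0(|\xi|)$ it equals $2\pi\int_0^\infty\chi_0(s/r)(1-J_0(s))s^{-1}\,ds=2\pi\log r+C+O(r^{-\infty})$. The rapid decay of the error follows from the Hankel asymptotics of $J_0$ on $s\gtrsim r$. For $j\geq1$ and for $R_J$, the $-1$ splits off harmlessly as an $n$-independent constant absorbed into $c_0$.

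As for what each route buys: the citation keeps the paper short. Your derivation makes the uniformity in $\omega$ transparent and identifies $c_2,c_d$ explicitly. It also shows that only even $m$ occur, i.e.\ $A_m\equiv0$ for odd $m$. That refinement of the stated form is never needed in the paper. In the two-site argument, the $|n|^{-1}$ dipole term comes from expanding $\log|n-h_j|$, and any $A_1$ contributions would cancel to $O(|n|^{-2})$ anyway because $a_1+a_2=0$.
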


	\subsection{Proof of Theorem~\ref{thm:existence}}
	
	\begin{proof}
		By \eqref{eq:checkerboard-symmetry}, it suffices to consider
		the upper edge $\lambda=2d$.
		Fix a large integer \(L\geq1\) and set
		\begin{equation}\label{eq:sites-and-weights}
		a_j=-jLe_1,
		\qquad
		t_j=(-1)^{K-1-j}\binom{K-1}{j},
		\qquad j=0,\ldots,K-1.
		\end{equation}
		{
			For a function \(f\) on \(\R^d\), define the forward difference
			in the \(e_1\)-direction by
			\[
			(D_L^1f)(x):=f(x+Le_1)-f(x).
			\]
			Its \((K-1)\)-fold iterate is
			\[
			D_L^{K-1}f(x)
			=\sum_{j=0}^{K-1}(-1)^{K-1-j}\binom{K-1}{j}
			f(x+jLe_1)
			=\sum_{j=0}^{K-1}t_jf(x-a_j).
			\]
			If \(f(x)=x_1^m\) with \(0\leq m\leq K-2\), then
			\(D_L^{K-1}f=0\), since each forward difference lowers the
			degree in \(x_1\) by one.  Evaluating at \(x=0\) yields
			\begin{equation}\label{eq:binomial-moments}
			\sum_{j=0}^{K-1} t_jj^m=0,
			\qquad m=0,\ldots,K-2.
			\end{equation}
		}
		Define
		\begin{equation}\label{eq:uL-definition}
		u_L(n):=\sum_{j=0}^{K-1} t_jG(n-a_j)
		=\sum_{j=0}^{K-1} t_jG(n+jLe_1).
		\end{equation}
		Using \eqref{eq:green-equation}, we immediately obtain
		\begin{equation}\label{eq:uL-source}
		(\Delta-2d)u_L=\sum_{j=0}^{K-1} t_j\delta_{a_j}.
		\end{equation}
		
		To estimate \(u_L\), take \(N=K-1\) in
		Lemma~\ref{lem:green-asymptotics}.  Every power-law term in the resulting
		expansion can be written as
		\[
		F_{\beta,A}(x)=|x|^{-\beta}
		A\left(\frac{x}{|x|}\right),
		\]
		where \(\beta>0\) and \(A\in C^\infty(\mathbb S^{d-1})\).  For every multi-index
		\(\alpha\),
		\begin{equation}\label{gF}
		|\partial^\alpha F_{\beta,A}(x)|
		\leq C_{\alpha,A}|x|^{-\beta-|\alpha|},
		\qquad |x|\geq1.
		\end{equation}
		Taylor expansion in the \(e_1\)-direction through order \(K-2\),
		together with \eqref{eq:binomial-moments} and \eqref{gF}, gives
		\begin{equation}\label{eq:finite-difference-bound}
		\left|\sum_{j=0}^{K-1} t_jF_{\beta,A}(n+jLe_1)\right|
		\leq C_L|n|^{-\beta-(K-1)}.
		\end{equation}
		Similarly, using   \eqref{eq:binomial-moments},
		\begin{equation}\label{gaug274}
		\left|\sum_{j=0}^{K-1} t_j\log|n+jLe_1|\right|
		\leq C_L|n|^{-(K-1)}.
		\end{equation}
		
		Applying \eqref{eq:finite-difference-bound} and \eqref{gaug274}
		to the finite expansion in Lemma~\ref{lem:green-asymptotics} and bounding
		the remainder at the finitely many shifted points, yields
		\begin{equation}\label{eq:uL-decay}
		|u_L(n)|\leq C_L
		\begin{cases}
		\la n\ra^{-(K-1)},&d=2,\\
		\la n\ra^{-(d+K-3)},&d\geq3.
		\end{cases}
		\end{equation}
		Thus \(u_L\) has the required decay.

		For \(j=0,\ldots,K-1\), evaluation of \eqref{eq:uL-definition} at \(a_j\)
		gives
		\begin{equation}\label{eq:uL-at-sites}
		u_L(a_j)
		=t_jG(0)
		+\sum_{\substack{0\leq\ell\leq K-1\\ \ell\neq j}}
		t_\ell G\bigl((\ell-j)Le_1\bigr).
		\end{equation}
		
		Suppose first that \(d\geq3\).  Since \(G(n)\to0\) as
		\(|n|\to\infty\),  
		\eqref{eq:uL-at-sites} implies that, as $L\to\infty$,
		\[
		u_L(a_j)=t_jG(0)+o(1).
		\]
		By \eqref{eq:green-origin-high-d}, \(G(0)<0\), and \(t_j\neq0\).
		Consequently, for large $L$, \(u_L(a_j)\neq0\) for
		$j=0,1,\ldots,K-1$.
		
		Now suppose that \(d=2\).  Our normalization gives \(G(0)=0\).
		For \(\ell\neq j\), Lemma~\ref{lem:green-asymptotics} gives
		\[
		G\bigl((\ell-j)Le_1\bigr)
		=c_2\log L+c_2\log|\ell-j|+c_0+o(1).
		\]
		Since \(\sum_{\ell=0}^{K-1}t_\ell=0\), we have
		\(\sum_{\ell\neq j}t_\ell=-t_j\).  Substitution into
		\eqref{eq:uL-at-sites} therefore gives
		\begin{equation}\label{eq:uL-at-sites-two-d}
		u_L(a_j)=-c_2t_j\log L+O(1).
		\end{equation}
		Because \(c_2t_j\neq0\), this is nonzero for sufficiently large \(L\).
		
		We conclude that for large $L$,  
		\begin{equation}\label{eq:uL-site-nonzero}
		u_L(a_j)\neq0,
		\qquad j=0,\ldots,K-1.
		\end{equation}
		
		Define
		\begin{equation}\label{eq:potential-construction}
		V(a_j)=-\frac{t_j}{u_L(a_j)},
		\qquad j=0,\ldots,K-1,
		\end{equation}
		and set \(V(n)=0\) for $n\neq a_j$.  Then
		\begin{equation}\label{eq:Vu-source}
		Vu_L=-\sum_{j=0}^{K-1} t_j\delta_{a_j}.
		\end{equation}
		Combining \eqref{eq:uL-source} and \eqref{eq:Vu-source}, we obtain
		\[
		(\Delta+V)u_L=2du_L.
		\]
		
		This completes the construction.
	\end{proof}
	
	\subsection{Proof of Theorem~\ref{thm:two-dimensional-threshold}}
	
	\begin{proof}
		By \eqref{eq:checkerboard-symmetry}, it suffices to consider
		the upper edge $\lambda=2d=4$.
		Part \textup{(i)} follows from Theorem~\ref{thm:existence}.	We now prove part \textup{(ii)}.  
		Suppose, toward a contradiction, that
		\[
		(\Delta+V)u=4u,
		\qquad
		0\neq u\in\ell^2(\Z^2),
		\]
		and \(V\) is supported on at most two sites.  Set
		\[
		f=-Vu.
		\]
		Then \(f\) is supported on at most two sites and
		\[
		(\Delta-4)u=f.
		\]
		Write $f$ as
		\[
		f=\sum_{j=1}^2 a_j\delta_{h_j},
		\]
		where $h_1$ and $h_2$ are distinct.
		Define
		\[
		w(n)=\sum_{j=1}^2 a_jG(n-h_j).
		\]
		By \eqref{eq:green-equation}, \((\Delta-4)w=f\).  Hence \(v:=u-w\)
		satisfies \((\Delta-4)v=0\).  Lemma~\ref{lem:green-asymptotics} gives
		\[
		w(n)=O(\log(2+|n|)).
		\]
		Since \(u\in\ell^2(\Z^2)\) is bounded, it follows that
		\(v(n)=O(\log(2+|n|))\).  By the standard Liouville
		theorem (discrete harmonic functions with sublinear growth must be constant), \(v\) is
		constant.  Thus, for some \(C\),
		\begin{equation}\label{gaug279}
		u(n)=C+\sum_{j=1}^2 a_jG(n-h_j).
		\end{equation}
		
		The leading Green-function asymptotic gives
		\[
		\sum_{j=1}^2 a_jG(n-h_j)
		=c_2 (a_1+a_2)\log|n|+O(1).
		\]
		Because $u\in\ell^2$, this forces 
		\begin{equation}\label{gaug278}
		a_1+a_2=0.
		\end{equation}		
		With this cancellation, Lemma~\ref{lem:green-asymptotics} also gives
		\(w(n)=O(|n|^{-1})\).   This leads to $C=0$ in \eqref{gaug279}.
		Recall that  \(\omega=n/|n|\).  Using Lemma~\ref{lem:green-asymptotics} and\(a_1+a_2=0\), we obtain
		\begin{equation}\label{eq:dipole-asymptotic}
		u(n)=-\frac{c_2}{|n|}\,
		\omega\cdot\left(\sum_{j=1}^2 a_jh_j\right)
		+O(|n|^{-2}).
		\end{equation}
		Since \(u\in\ell^2(\Z^2)\), the leading term must vanish, and hence
		\begin{equation}\label{eq:zero-first-source-moment}
		a_1h_1+a_2h_2=0.
		\end{equation}
		By \eqref{gaug278} and \eqref{eq:zero-first-source-moment}, one has that \(a_1=a_2=0\) and
		\(f=0\).
		This proves that \(\lambda=4\) cannot be an eigenvalue when
		$V$ is supported on at most two sites.
	\end{proof}
	
	\section{Existence of eigenvalues in $(-4,0)\cup(0,4)$}\label{sec:existence-noncritical}

	\begin{proof}[Proof of Theorem \ref{thm:noncritical-construction}]
		By \eqref{eq:checkerboard-symmetry}, it suffices to consider
		$\lambda\in(0,4)$.
		Write $n=(m,j)\in\Z^2$ as before.
		Choose \(k\in(0,\pi)\) such that
		\[
		\lambda=2+2\cos k,
		\]
		and choose \(\theta\) so that
		\[
		q_m:=\sin(km+\theta)\neq0\,\,\,
		{ \text{for all}}\quad(m\in\Z).
		\]
		Then, using $\sin(x)+\sin(y)=2\cos(\frac{x-y}{2})\sin(\frac{x+y}{2}),$ we have
		\begin{equation}\label{gaug2713}
		q_{m+1}+q_{m-1}+2q_m=\lambda q_m.
		\end{equation}
		Set
		\[
		w_m=q_m^2q_{m+1}^2
		\]
		and define
		\[
		\Phi(0)=0,\qquad
		\Phi(m)=
		\begin{cases}
		\displaystyle\sum_{\ell=0}^{m-1}w_\ell,&m\geq1,\\[2mm]
		\displaystyle\sum_{\ell=m}^{-1}w_\ell,&m\leq-1.
		\end{cases}
		\]
		We claim that there is a constant \(\mu>0\) such that
		\[
		\Phi(m)=\mu|m|+O(1).
		\]
		Indeed, one may check that
		\[
		\sin^2x\sin^2(x+k)
		=
		\frac14+\frac18\cos(2k)
		-\frac14\cos(2x)
		-\frac14\cos(2x+2k)
		+\frac18\cos(4x+2k).
		\]
		Substituting \(x=k\ell+\theta\) gives, when \(k\neq\pi/2\),
		\[
		w_\ell
		=
		\mu_k
		-\frac14\cos(2k\ell+2\theta)
		-\frac14\cos(2k\ell+2\theta+2k)
		+\frac18\cos(4k\ell+4\theta+2k),
		\]
		where
		\[
		\mu_k=\frac14+\frac18\cos(2k)\geq\frac18.
		\]
		The partial sums of all three nonconstant cosine terms are uniformly
		bounded.  Hence
		\[
		\sum_{\ell=0}^{M-1}w_\ell=\mu_kM+O(1).
		\]
		If \(k=\pi/2\), then
		\[
		w_\ell=\sin^2\theta\cos^2\theta>0
		\]
		for every \(\ell\), since \(q_\ell\neq0\) for all \(\ell\).  
		
		Fix \(\alpha>1\) and define
		\[
		F(m,j)=1+\Phi(m)+|j|,
		\qquad
		\eta(m,j)=F(m,j)^{-\alpha},
		\qquad
		u(m,j)=\eta(m,j)q_m.
		\]
		Since \(q_m\neq0\), we may define
		\[
		V(m,j)=\lambda-\frac{(\Delta u)(m,j)}{u(m,j)}.
		\]
		Equivalently,
		\begin{align}
		V(m,j)=\lambda
		&-\frac{\eta(m+1,j)q_{m+1}}{\eta(m,j)q_m}
		-\frac{\eta(m-1,j)q_{m-1}}{\eta(m,j)q_m}\\
		&-\frac{\eta(m,j+1)}{\eta(m,j)}
		-\frac{\eta(m,j-1)}{\eta(m,j)}.\label{gaug2714}
		\end{align}
		By the construction,
		\[
		(\Delta+V)u=\lambda u.
		\]
		
		We now estimate \(V\).  
		A direct computation shows
		\begin{equation}\label{gaug2710}
		\left|
		\left(\frac{\eta(m+1,j)}{\eta(m,j)}-1\right)
		\frac{q_{m+1}}{q_m}
		\right|
		\leq
		\frac{C}{F(m,j)}
		|q_m|\,|q_{m+1}|^3
		\leq\frac{C}{F(m,j)},
		\end{equation}
		\begin{equation}\label{gaug2711}
		\left|
		\left(\frac{\eta(m-1,j)}{\eta(m,j)}-1\right)
		\frac{q_{m-1}}{q_m}
		\right|
		\leq
		\frac{C}{F(m,j)}
		|q_m|\,|q_{m-1}|^3
		\leq\frac{C}{F(m,j)},
		\end{equation}
		and 
		\begin{equation}\label{gaug2712}
		\left|
		\frac{\eta(m,j\pm 1)}{\eta(m,j)}
		-\frac{\eta(m,j)}{\eta(m,j)}
		\right|
		\leq\frac{C}{F(m,j)},
		\end{equation}
		
		By \eqref{gaug2713}, \eqref{gaug2714} and \eqref{gaug2710}-\eqref{gaug2712},  we conclude that
		\[
		|V(m,j)|
		\leq\frac{C}{F(m,j)}
		\leq C\la(m,j)\ra^{-1}.
		\]
		
		Moreover,
		\[
		|u(m,j)|
		\leq C\la(m,j)\ra^{-\alpha}.
		\]
		Multiplying $u$ by a nonzero constant does not change $V$, so we may
		assume that $|u(n)|\leq\la n\ra^{-\alpha}$.
		Since \(\alpha>1\), this implies
		\[
		u\in\ell^2(\Z^2).
		\]

	\end{proof}

	\section{Carleman estimates}\label{sec:carleman-estimates}
	The rest of the paper is devoted to the proof of Theorem \ref{thm1:absence}. As already indicated, the first key step is to deliver the Carleman estimate recorded in Theorem \ref{thm:carleman-estimates}.
	
	{
		The proof of Theorem~\ref{thm:carleman-estimates} follows the classical
		positive commutator approach to Carleman estimates, using semiclassical
		analysis.  We first write the conjugated operator as
		\(\mathcal A_{a,h}=\operatorname{Op}_h(A_{a,h})\), where \(A_{a,h}\)
		is its symbol.  We then introduce the leading comparison symbol
		\(A_{a,h}^0(x,\xi)=p_\lambda(\xi+i\nabla\phi_{a,h}(x))\),
		whose explicit expression is simpler.  Estimating the error reduces
		our problem to studying the comparison operator
		\(\mathcal A_{a,h}^0=\operatorname{Op}_h(A_{a,h}^0)\).
		
		Next, we split \(\mathcal A_{a,h}^0\) into self-adjoint real and
		imaginary parts.  Their commutator, multiplied by \(i/h\), is close
		to the quantization of the Poisson bracket of the real and imaginary
		parts of \(A_{a,h}^0\).  We calculate this bracket and show that, after a suitable modification and rescaling, the resulting expression is uniformly positive {which after some routine rearranging yields the statement of Theorem \ref{thm:carleman-estimates}.}}
	
	\subsection{Quantization}
	
	{ We rely on the analogue of the standard semiclassical quantization for the rescaled
		lattice \(\Lambda_h=h\Z^2\);
		see \cite[Chapter~4]{Zworski2012} for the corresponding continuous
		construction.}

	Let \(\Gamma\subset\Z^2\) be finite, {and let the symbol
		\(b=b(x,\xi): \R^2 \times \T^2 \to \C \)}  have the finite Fourier expansion
	\begin{equation*}
	b(x,\xi)=\sum_{g\in\Gamma}\widehat b(x,g)e^{ig\cdot\xi},
	\qquad
	\widehat b(x,g)
	=\frac1{(2\pi)^2}\int_{\T^2}b(x,\xi)e^{-ig\cdot\xi}\,d\xi,
	\end{equation*}
	{
		For a finitely supported function \(v\) on \(\Lambda_h\), define the
		quantization of \(b\) by
		\begin{equation*}
		\bigl(\operatorname{Op}_h(b)v\bigr)(x)
		=\frac1{(2\pi)^2}\int_{\T^2}
		\sum_{y\in\Lambda_h}e^{i(x-y)\cdot\xi/h}
		b(x,\xi)v(y)\,d\xi,
		\qquad x\in\Lambda_h.
		\end{equation*}
		Writing \(x=hn\) and \(y=hm\), we have
		\begin{equation*}
		\frac1{(2\pi)^2}\int_{\T^2}
		e^{i(n-m+g)\cdot\xi}\,d\xi
		=\delta_{m,n+g}.
		\end{equation*}}
	Consequently, the Fourier expansion of \(b\) gives
	\begin{equation}\label{gsep11}
	\bigl(\operatorname{Op}_h(b)v\bigr)(x)
	=\sum_{g\in\Gamma}\widehat b(x,g)v(x+hg).
	\end{equation}
	Equivalently,
	\begin{equation*}
	\operatorname{Op}_h(b)
	=\sum_{g\in\Gamma}M_{\widehat b(x,g)}T_g.
	\end{equation*}

	Notice that the quantization of a real-valued symbol need not
	be self-adjoint.  Indeed, if \(b\) is real-valued, then
	\begin{equation*}
	\operatorname{Op}_h(b)^*
	=\sum_{g\in\Gamma}M_{\widehat b(x+hg,g)}T_g,
	\end{equation*}
	{which generally differs from} \(\operatorname{Op}_h(b)\).   
	
	For real-valued symbols \(f,g\), set 	\begin{equation*}
	\{f,g\}
	=\partial_\xi f\cdot\partial_xg
	-\partial_xf\cdot\partial_\xi g,
	\end{equation*}
	where
	\[
	\partial_x=(\partial_{x_1},\partial_{x_2}),
	\qquad
	\partial_\xi=(\partial_{\xi_1},\partial_{\xi_2}).
	\]

	In view of \eqref{eq:exact-conjugation}, define
	\begin{equation}\label{g11}
	A_{a,h}(x,\xi)
	=\sum_{|e|_1=1}e^{-d_e(x)}e^{ie\cdot\xi}-\lambda,
	\end{equation}
	so that
	\begin{equation*}
	\mathcal A_{a,h}=\operatorname{Op}_h(A_{a,h}).
	\end{equation*}
	Define the comparison symbol directly by
	\begin{equation}
	A_{a,h}^0(x,\xi)
	=\sum_{|e|_1=1}e^{-e\cdot\nabla\phi_{a,h}(x)}
	e^{ie\cdot\xi}-\lambda.
	\label{eq:comparison-symbol}
	\end{equation}
	Set
	\begin{equation*}
	\mathcal A_{a,h}^0
	=\operatorname{Op}_h(A_{a,h}^0)
	=\sum_{|e|_1=1}M_{e^{-e\cdot\nabla\phi_{a,h}}}T_e-\lambda.
	\end{equation*}
	The next lemma compares $\mathcal A_{a,h}$ and $\mathcal A_{a,h}^0.$\\

	The following derivative estimates will be used.  For every multi-index \(\beta\) with \(|\beta|\geq1\),
	\begin{equation}
	|\partial_x^\beta R_h(x)|
	\leq C_\beta R_h(x)^{1-|\beta|},
	\qquad
	|\partial_x^\beta\log R_h(x)|
	\leq C_\beta R_h(x)^{-|\beta|}.
	\label{eq:radius-derivative-bounds}
	\end{equation}
	Recalling \eqref{def345}, by \eqref{eq:radius-derivative-bounds}  and \eqref{gde}, for every \(a\geq0\),
	\begin{equation}
	|\partial_x^\beta\phi_{a,h}(x)|
	\leq C_\beta\left(
	aR_h(x)^{1-|\beta|}+R_h(x)^{-|\beta|}\right),
	\qquad |\beta|\geq1.
	\label{eq:weight-derivative-bounds}
	\end{equation}
	
	\begin{lemma}
		\label{lem:comparison-symbols}
		Fix \(\bar a>0\).  There are \(B_{\bar a},C_{\bar a}>0\) such that, for
		\(0\leq a\leq\bar a\), \(0<h\leq1\), \(\rho\geq B_{\bar a}\),
		and \(v\in\ell^2(\Lambda_h)\) with
		\(\operatorname{supp}v\subset\Omega_\rho^*\), one has
		\begin{equation}
		\norm{(\mathcal A_{a,h}-\mathcal A_{a,h}^0)v}_2
		\leq C_{\bar a}h\rho^{-1}\norm v_2.
		\label{eq:comparison-operator-error}
		\end{equation}
		There are also \(B,C>0\), independent of \(\bar a\), such that, for
		\(0<h\leq1\), \(\rho\geq B\), and \(v\in\ell^2(\Lambda_h)\) with
		\(\operatorname{supp}v\subset\Omega_\rho^*\),
		\begin{equation}
		\norm{(\mathcal A_{0,h}-\mathcal A_{0,h}^0)v}_2
		\leq Ch\rho^{-2}\norm v_2.
		\label{eq:comparison-operator-error-zero}
		\end{equation}
	\end{lemma}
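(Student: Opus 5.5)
Both operators are sums of weighted translations with the same translations $T_e$, $|e|_1=1$:
\[
\mathcal A_{a,h}-\mathcal A_{a,h}^0=\sum_{|e|_1=1}M_{e^{-d_e}-e^{-e\cdot\nabla\phi_{a,h}}}T_e .
\]
Since $T_e$ is unitary on $\ell^2(\Lambda_h)$, it suffices to bound the multiplier $e^{-d_e(x)}-e^{-e\cdot\nabla\phi_{a,h}(x)}$ uniformly on $\Omega_\rho^*$. (For $v$ supported in $\Omega_\rho^*$, $M_{\cdot}T_ev$ only sees the multiplier at points $x$ with $x+he\in\Omega_\rho^*$; since $h\leq1$ and $\rho\geq B$ is large, such $x$ lie in a slightly enlarged annulus $\{\rho/8\leq R_h\leq16\rho\}$. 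On that annulus $R_h\asymp\rho$, so the same bounds apply.)

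Next I compare exponents via Taylor's theorem with integral remainder along the segment from $x$ to $x+he$:
\[
d_e(x)=e\cdot\nabla\phi_{a,h}(x)+h\int_0^1(1-t)\,\partial_e^2\phi_{a,h}(x+the)\,dt .
\]
The segment stays in the enlarged annulus, where $R_h\asymp\rho$. So \eqref{eq:weight-derivative-bounds} with $|\beta|=2$ gives
\[
|d_e-e\cdot\nabla\phi_{a,h}|\leq Ch\bigl(a\rho^{-1}+\rho^{-2}\bigr).
\]
With $|\beta|=1$, it also gives $|e\cdot\nabla\phi_{a,h}|\leq C(\bar a+\rho^{-1})$, and the same bound holds for $d_e$, which is an average of $\partial_e\phi_{a,h}$ along the segment. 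Both exponents are therefore bounded by a constant depending only on $\bar a$. The mean value theorem for $t\mapsto e^{-t}$ then yields
\[
\bigl|e^{-d_e}-e^{-e\cdot\nabla\phi_{a,h}}\bigr|\leq C_{\bar a}\,h\bigl(a\rho^{-1}+\rho^{-2}\bigr)\leq C_{\bar a}h\rho^{-1},
\]
using $\rho\geq1$. Summing over the four directions proves \eqref{eq:comparison-operator-error}.

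For \eqref{eq:comparison-operator-error-zero} I set $a=0$. Then $\phi_{0,h}=\log R_h$, and the second-derivative bound is $C\rho^{-2}$ by \eqref{eq:radius-derivative-bounds}. The exponents are $O(\rho^{-1})$ with absolute constants, so the same argument gives $Ch\rho^{-2}$ with $B,C$ independent of $\bar a$.

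The only delicate point is the bookkeeping of supports: verifying that the shifted points, and the Taylor segments, remain in a region where $R_h\asymp\rho$. This follows because $|R_h(x+he)-R_h(x)|\leq h\leq1\ll\rho$ once $B\geq16$.
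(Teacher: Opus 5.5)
Your proof is correct and follows essentially the same route as the paper. You expand $d_e$ about $e\cdot\nabla\phi_{a,h}$ by Taylor's formula with integral remainder, apply the second-derivative bounds \eqref{eq:weight-derivative-bounds} on a region where $R_h\asymp\rho$ (your $\{\rho/8\leq R_h\leq16\rho\}$ plays the role of the paper's $\rho/4-h\leq R_h(x+the)\leq8\rho+h$), bound both exponents uniformly, and conclude with the mean value theorem for the exponential, exactly as the paper does.
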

	
	\begin{proof}
		For \(|e|_1=1\), put
		\begin{equation*}
		d_e^0(x)=e\cdot\nabla\phi_{a,h}(x).
		\end{equation*}
		{By the fundamental theorem of calculus,
			\begin{equation}
			d_e(x)=\int_0^1e\cdot\nabla\phi_{a,h}(x+the)\,dt.
			\label{eq:discrete-slope-integral}
			\end{equation}}		Taylor's formula at \(x\) gives
		\begin{equation*}
		d_e(x)-d_e^0(x)
		=h\int_0^1(1-t)e^TD^2\phi_{a,h}(x+the)e\,dt.
		\end{equation*}
		If \(v(x+he)\neq0\), then \(x+he\in\Omega_\rho^*\), and hence
		\begin{equation*}
		\rho/4-h\leq R_h(x+the)\leq8\rho+h,
		\qquad 0\leq t\leq1.
		\end{equation*}
		Thus, by \eqref{eq:weight-derivative-bounds},
		\begin{equation}
		|d_e(x)-d_e^0(x)|
		\leq C_{\bar a}h\rho^{-1}.
		\label{eq:discrete-slope-remainder-bound}
		\end{equation}
		For \(a=0\), the same argument gives
		\begin{equation}
		|d_e(x)-d_e^0(x)|
		\leq Ch\rho^{-2}.
		\label{eq:discrete-slope-remainder-bound-zero}
		\end{equation}
		Moreover, \eqref{eq:discrete-slope-integral} and
		\eqref{eq:weight-derivative-bounds} imply
		\begin{equation}\label{g12}
		|d_e(x)|+|d_e^0(x)|\leq C_{\bar a}.
		\end{equation}
		When \(a=0\), the same estimate holds with a constant \(C\)
		independent of \(\bar a\).
		By \eqref{g11} and \eqref{eq:comparison-symbol},
		\begin{equation}\label{g13}
		\bigl((\mathcal A_{a,h}-\mathcal A_{a,h}^0)v\bigr)(x)
		=\sum_{|e|_1=1}
		\left(e^{-d_e(x)}-e^{-d_e^0(x)}\right)v(x+he).
		\end{equation}
		By \eqref{eq:discrete-slope-remainder-bound},
		\eqref{eq:discrete-slope-remainder-bound-zero}, \eqref{g12}, and
		\eqref{g13}, we obtain \eqref{eq:comparison-operator-error} and
		\eqref{eq:comparison-operator-error-zero}.
	\end{proof}
	For future applications, we next record necessary estimates for $A_{a,h}^0,$ the symbol corresponding to ${\mathcal A_{a,h}^0}.$
	
	\begin{proposition} 
		\label{prop:comparison-coefficient-estimates}
		
		Set
		\begin{equation}\label{steps}
		\Gamma_1=\{0,\pm e_1,\pm e_2\}.
		\end{equation}
		Write
		{	\begin{equation}\label{notation34765}
			A_{a,h}^0(x,\xi)
			=\sum_{e\in\Gamma_1}a_e(x)e^{ie\cdot\xi},
			\qquad
			a_0=-\lambda,\qquad
			a_e=e^{-e\cdot\nabla\phi_{a,h}}
			\quad\text{if }|e|_1=1.
			\end{equation}}
		Fix \(\bar a>0\).  There exists \(B_{\bar a}\geq1\) such that, for every
		multi-index \(\beta\), there is a constant \(C_{\bar a,\beta}>0\) for which, whenever
		\(0\leq a\leq\bar a\), \(0<h\leq1\), \(\rho\geq B_{\bar a}\), and
		\(x\in\Omega_\rho^*\), one has
		\begin{equation}
		|\partial_x^\beta a_e(x)|
		\leq C_{\bar a,\beta}\rho^{-|\beta|},
		\qquad e\in\Gamma_1.
		\label{eq:comparison-coefficient-estimates}
		\end{equation}
		
		Let
		\begin{equation*}
		q_{a,h}=\operatorname{Re}A_{a,h}^0,
		\qquad
		s_{a,h}=\operatorname{Im}A_{a,h}^0.
		\end{equation*}
		There is also \(B\geq1\), independent of \(\bar a\), such that, for every
		multi-index \(\beta\), there is \(C_\beta>0\) for which the following
		estimates hold when \(a=0\), \(0<h\leq1\), \(\rho\geq B\), and 
		\(x\in\Omega_\rho^*\):
		\begin{align*}
		q_{0,h}(x,\xi)-p_\lambda(\xi)
		&=2\sum_{j=1}^2
		\bigl(\cosh(\partial_{x_j}\phi_{0,h}(x))-1\bigr)\cos\xi_j,\\
		s_{0,h}(x,\xi)
		&=-2\sum_{j=1}^2
		\sinh(\partial_{x_j}\phi_{0,h}(x))\sin\xi_j,
		\end{align*}
		and hence
		{\begin{align}
			\left|\partial_x^\beta\left(
			q_{0,h}(x,\xi)-p_\lambda(\xi)\right)\right|
			&\leq C_\beta\rho^{-2-|\beta|},
			\notag\\
			|\partial_x^\beta s_{0,h}(x,\xi)|
			&\leq C_\beta\rho^{-1-|\beta|}.
			\label{eq:zero-comparison-symbol-derivatives}
			\end{align}}
	\end{proposition}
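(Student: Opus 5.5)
The estimates follow by direct differentiation of the explicit coefficients, using the derivative bounds \eqref{eq:weight-derivative-bounds} on \(\Omega_\rho^*\), where \(R_h(x)\in[\rho/4,8\rho]\).

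\textbf{Coefficient estimates.} The coefficient \(a_0=-\lambda\) is constant, so there is nothing to prove for it. For \(|e|_1=1\), set \(\psi_e=-e\cdot\nabla\phi_{a,h}=-\partial_{x_j}\phi_{a,h}\) up to sign. By \eqref{eq:weight-derivative-bounds} with \(|\beta|=1\),
\[
|\psi_e|\leq C(\bar a+\rho^{-1})\leq C_{\bar a}
\quad\text{on }\Omega_\rho^*,\qquad \rho\geq B_{\bar a}.
\]
For higher derivatives,
\[
|\partial^\gamma\psi_e|\leq C_\gamma\bigl(\bar a\rho^{-|\gamma|}+\rho^{-|\gamma|-1}\bigr)\leq C_{\bar a,\gamma}\rho^{-|\gamma|},\qquad |\gamma|\geq1.
\]
The Faà di Bruno formula writes \(\partial^\beta e^{\psi_e}\) as \(e^{\psi_e}\) times a sum of products \(\prod_i\partial^{\gamma_i}\psi_e\) with \(\sum_i|\gamma_i|=|\beta|\) and each \(|\gamma_i|\geq1\). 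Each such product is bounded by \(C\rho^{-|\beta|}\), and \(e^{\psi_e}\leq e^{C_{\bar a}}\). This gives \eqref{eq:comparison-coefficient-estimates}. The constant \(B_{\bar a}\) only needs \(\rho^{-1}\lesssim1\), so in fact \(B_{\bar a}=1\) suffices. The relation \(R_h\sim\rho\) on \(\Omega_\rho^*\) holds by definition of the annulus.

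\textbf{Formulas at \(a=0\).} Write \(\partial_j=\partial_{x_j}\phi_{0,h}\), which is real. Pair the terms \(e=\pm e_j\):
\[
e^{-\partial_j}e^{i\xi_j}+e^{\partial_j}e^{-i\xi_j}
=2\cosh(\partial_j)\cos\xi_j-2i\sinh(\partial_j)\sin\xi_j.
\]
Summing over \(j\) and subtracting \(\lambda\) gives
\[
q_{0,h}=\sum_j 2\cosh(\partial_j)\cos\xi_j-\lambda,\qquad
s_{0,h}=-2\sum_j\sinh(\partial_j)\sin\xi_j.
\]
Since \(p_\lambda=\sum_j 2\cos\xi_j-\lambda\), the stated identity for \(q_{0,h}-p_\lambda\) follows.

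\textbf{Bounds at \(a=0\).} Here \(\phi_{0,h}=\log R_h\), so \eqref{eq:radius-derivative-bounds} gives
\[
|\partial^\gamma\partial_j|\leq C_\gamma\rho^{-1-|\gamma|}\quad\text{on }\Omega_\rho^*.
\]
In particular \(|\partial_j|\leq C\rho^{-1}\leq1\) once \(\rho\geq B\). Write \(F(t)=\cosh t-1\), which is even with \(F(t)=O(t^2)\), and \(G(t)=\sinh t\), which is odd with \(G(t)=O(t)\).

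Apply Faà di Bruno to \(\partial^\beta F(\partial_j)\). The result is a sum of terms \(F^{(k)}(\partial_j)\prod_{i=1}^k\partial^{\gamma_i}\partial_j\), with \(\sum_i|\gamma_i|=|\beta|\).
\begin{itemize}
\item The product is \(O\bigl(\rho^{-k-|\beta|}\bigr)\).
\item \(F^{(k)}\) is bounded on \([-1,1]\).
\item For \(k=0\), \(F(\partial_j)=O(\rho^{-2})\); for \(k=1\), \(F'(\partial_j)=\sinh\partial_j=O(\rho^{-1})\).
\end{itemize}
In every case the term is \(O(\rho^{-2-|\beta|})\). The factor \(\cos\xi_j\) does not depend on \(x\) and is bounded, which gives the first line of \eqref{eq:zero-comparison-symbol-derivatives}.

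The same argument for \(G=\sinh\) gives the terms \(O(\rho^{-1-|\beta|})\) (for \(k=0\), \(G(\partial_j)=O(\rho^{-1})\)). This is the second bound. The constants are independent of \(\bar a\) and of \(h\in(0,1]\).

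The only point needing care is the small-argument gain at \(a=0\): the bound for the \(k=0\) and \(k=1\) Faà di Bruno terms uses \(|\partial_j|\lesssim\rho^{-1}\). Everything else is routine.
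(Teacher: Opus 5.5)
Your proof is correct and follows essentially the same route as the paper. You bound $e\cdot\nabla\phi_{a,h}$ and its derivatives on $\Omega_\rho^*$ via \eqref{eq:weight-derivative-bounds}, apply the chain rule to $e^{-e\cdot\nabla\phi_{a,h}}$, pair the $\pm e_j$ terms at $a=0$, and use $|\partial_x^\beta\partial_{x_j}\phi_{0,h}|\le C_\beta\rho^{-1-|\beta|}$. Your Fa\`a di Bruno bookkeeping, with the small-argument gains $\cosh t-1=O(t^2)$ and $\sinh t=O(t)$, simply spells out the step the paper compresses into a single sentence.
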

	
	\begin{proof}
		By \eqref{eq:radius-derivative-bounds} and
		\eqref{eq:weight-derivative-bounds},
		\begin{equation*}
		|e\cdot\nabla\phi_{a,h}(x)|\leq C_{\bar a},
		\qquad
		|\partial_x^\beta(e\cdot\nabla\phi_{a,h}(x))|
		\leq C_{\bar a,\beta}\rho^{-|\beta|},
		\qquad |\beta|\geq1.
		\end{equation*}
		The chain rule applied to
		\(a_e=e^{-e\cdot\nabla\phi_{a,h}}\) proves
		\eqref{eq:comparison-coefficient-estimates}.
		
		If \(a=0\), \eqref{eq:weight-derivative-bounds}, applied with
		derivative order \(|\beta|+1\), gives
		\begin{equation}\label{g40}
		|\partial_x^\beta(e\cdot\nabla\phi_{0,h}(x))|
		\leq C_\beta\rho^{-1-|\beta|}.
		\end{equation}
		Pairing the terms indexed by \(e_j\) and \(-e_j\) gives the
		displayed formulas for \(q_{0,h}\) and \(s_{0,h}\).
		The estimate \eqref{eq:zero-comparison-symbol-derivatives} follows from \eqref{g40}. 
	\end{proof}
	
	\subsection{Symmetrization}
	{ Even when a symbol \(b\) is nonnegative, its quantization
		\(\operatorname{Op}_h(b)\) need not be a nonnegative operator.
		The following lemma compares this operator with a nonnegative quadratic
		form and estimates the difference, so that positivity of the symbol
		yields  a  lower bound of  the corresponding quantization.}
	
	\begin{lemma} 
		\label{lem:friedrichs}
		Let \(\Gamma\subset\Z^2\) be fixed and finite, let \(\alpha\geq0\), and
		let
		\begin{equation}\label{symbol5678}
		b(x,\xi)=\sum_{g\in\Gamma}\widehat b(x,g)e^{ig\cdot\xi}.
		\end{equation}
		There exists \(B_\Gamma\geq1\) such that the following holds.  Let
		\(\rho\geq B_\Gamma\), and suppose
		\begin{equation*}
		b(x,\xi)\geq0,
		\qquad (x,\xi)\in\Omega_\rho^*\times\T^2.
		\end{equation*}
		Suppose also that, for every \(g\in\Gamma\) and every multi-index
		\(\beta\) with \(|\beta|\leq1\),
		\begin{equation}
		|\partial_x^\beta\widehat b(x,g)|
		\leq C\rho^{-\alpha-|\beta|}
		\label{eq:friedrichs-coefficient-bounds}
		\end{equation}
		on \(\Omega_\rho^*\).  Then, for \(0<h\leq1\) and
		\(\operatorname{supp}v\subset\Omega_\rho\),
		\begin{equation}
		\operatorname{Re}\ip{\operatorname{Op}_h(b)v}{v}
		\geq-C_\Gamma\left(
		\rho^{-\alpha-1}+h^2\rho^{-\alpha}\right)\norm v_2^2.
		\label{eq:friedrichs-conclusion}
		\end{equation}
		Here \(C_\Gamma\) is independent of \(h,\rho\); it may depend on
		\(\Gamma,\alpha\), and the constant in
		\eqref{eq:friedrichs-coefficient-bounds}.
	\end{lemma}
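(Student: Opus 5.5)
The plan is a frozen-coefficient localization in the spirit of the sharp G\aa rding inequality, with localization scale of order one in the unscaled variable $x$. Fix a smooth, even, real cutoff $\chi\in C_c^\infty(\R^2)$ supported in the unit ball and normalized by $\int_{\R^2}\chi(z)^2\,dz=1$. For $y\in\R^2$ set $\chi_y(x)=\chi(x-y)$ and $w_y=\chi_y v$.

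\textbf{Step 1 (positive model quantity).} For each fixed $y$, the constant-coefficient operator $\operatorname{Op}_h(b(y,\cdot))=\sum_{g\in\Gamma}\widehat b(y,g)T_g$ is a Fourier multiplier on $\Lambda_h$ with symbol $b(y,\xi/\cdot)$ up to rescaling. Its symbol is nonnegative whenever $y\in\Omega_\rho^*$. Hence
\[
Q:=\int_{\R^2}\ip{\operatorname{Op}_h(b(y,\cdot))w_y}{w_y}\,dy\geq0 .
\]
Only $y$ with $w_y\neq0$ contribute, so $|y-x|\leq1$ for some $x\in\Omega_\rho$. For $\rho\geq B_\Gamma$ these $y$, together with the points $x+hg$ for $g\in\Gamma$ and $h\leq1$, lie in $\Omega_\rho^*$. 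This is where the nonnegativity hypothesis and \eqref{eq:friedrichs-coefficient-bounds} can be used.

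\textbf{Step 2 (expand and compare).} Write $Q=\sum_{g}\sum_{x}\int \widehat b(y,g)\chi(x-y)\chi(x+hg-y)\,v(x+hg)\overline{v(x)}\,dy$. I compare this, term by term, with $\operatorname{Re}\ip{\operatorname{Op}_h(b)v}{v}=\operatorname{Re}\sum_g\sum_x\widehat b(x,g)v(x+hg)\overline{v(x)}$. There are two errors.

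\emph{Coefficient error.} Replace $\widehat b(y,g)$ by $\widehat b(x,g)$. Since $|x-y|\leq1$ on the support of the integrand, the bound on first derivatives in \eqref{eq:friedrichs-coefficient-bounds} gives $|\widehat b(y,g)-\widehat b(x,g)|\leq C\rho^{-\alpha-1}$. Integrating against $\chi\chi\geq$-bounded weights of unit mass and applying Cauchy--Schwarz in $x$ contributes $O(\rho^{-\alpha-1}\norm v_2^2)$.

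\emph{Cutoff error.} Use the identity $\chi(a)\chi(b)=\tfrac12(\chi(a)^2+\chi(b)^2)-\tfrac12(\chi(a)-\chi(b))^2$. The square term is $O(h^2|g|^2)$ pointwise. Combined with $|\widehat b|\leq C\rho^{-\alpha}$, it contributes $O(h^2\rho^{-\alpha}\norm v_2^2)$. For the half-sum term, integration in $y$ gives exactly $\int\chi(x-y)^2dy=\int\chi(x+hg-y)^2dy=1$, so it reproduces $\sum_g\sum_x\widehat b(x,g)v(x+hg)\overline{v(x)}$ exactly.

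The order in which the two replacements are made matters. I first freeze the coefficient at $x$, paying the $\rho^{-\alpha-1}$ error, and only then integrate the half-sum in $y$. That way the second replacement is exact.

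\textbf{Step 3 (conclude).} Taking real parts gives $0\leq Q=\operatorname{Re}\ip{\operatorname{Op}_h(b)v}{v}+O\bigl((\rho^{-\alpha-1}+h^2\rho^{-\alpha})\norm v_2^2\bigr)$, which is \eqref{eq:friedrichs-conclusion}. The constant depends only on $\Gamma$, $\chi$, and the constant in \eqref{eq:friedrichs-coefficient-bounds}.

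\textbf{Main obstacle.} The delicate point is that the cutoff error must come out second order in $h$, not first order. A naive bound $|\chi(x+hg-y)-\chi(x-y)|\leq Ch$ would give only $h\rho^{-\alpha}$. The symmetrization identity in Step 2, together with the exact normalization $\int\chi^2=1$ (translation invariance in $y$), is what removes the first-order term. I would check this carefully, including that no first-order remainder reappears when the coefficient is frozen at $x$ rather than at $x+hg$. The asymmetry there costs only $h\rho^{-\alpha-1}$, which is absorbed into the $\rho^{-\alpha-1}$ term.
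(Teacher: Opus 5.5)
Your proposal is correct and is essentially the paper's proof. The paper also averages the nonnegative frozen-coefficient forms $\ip{B_y(\chi_yv)}{\chi_yv}$ with $B_y=\sum_{g\in\Gamma}\widehat b(y,g)T_g$ over $y$, normalized by $\int\chi^2=1$; it pays $\rho^{-\alpha-1}$ for replacing $\widehat b(y,g)$ by $\widehat b(x,g)$ on $|x-y|\leq1$, and obtains the second-order cutoff error from $1-\kappa_g=\frac12\int|\chi_y(x+hg)-\chi_y(x)|^2\,dy$, which is exactly your symmetrization identity.
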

	\begin{proof}
		One may note that in \eqref{symbol5678} if $\widehat b(x,g)$ does not depend on $x$ then clearly $\operatorname{Op}_h(b)$ is self-adjoint and is nonnegative under the assumption of nonnegativity of $b$. The argument now proceeds by approximating \(\operatorname{Op}_h(b)\) by an average of nonnegative forms.

		Choose a nonnegative function
		\(\chi\in C_c^\infty(\mathbb R^2)\), supported in the unit ball,
		such that \(\int\chi^2=1\), and set
		\begin{equation*}
		\chi_y(x)=\chi(x-y).
		\end{equation*}
		Then
		\begin{equation}\label{g30}
		\int_{\mathbb R^2}\chi_y(x)^2\,dy=1.
		\end{equation}
		By \eqref{eq:radius-derivative-bounds} and
		\(\operatorname{supp}v\subset\Omega_\rho\), the condition
		\(\chi_yv\neq0\) implies that \(y\in\Omega_\rho^*\).
		For each \(y\in\Omega_\rho^*\), let
		\begin{equation*}
		B_y=\sum_{g\in\Gamma}\widehat b(y,g)T_g.
		\end{equation*}
		Since \(b(y,\xi)\geq0\), the lattice Fourier transform gives
		\begin{equation*}
		\ip{B_yu}{u}
		=\frac1{(2\pi)^2}\int_{\T^2}
		b(y,\xi)|\widehat u(\xi)|^2\,d\xi\geq0.
		\end{equation*}
		Therefore the quadratic form
		\begin{equation*}
		\ip{B_h^Fv}{v}
		:=\int_{\mathbb R^2}
		\ip{B_y(\chi_yv)}{\chi_yv}\,dy
		\end{equation*}
		is nonnegative.	Expanding this form gives
		\begin{equation*}
		\ip{B_h^Fv}{v}
		=\sum_{x,g}\widehat b^{\,F}(x,g)
		v(x+hg)\overline{v(x)},
		\end{equation*}
		where 		\begin{equation*}
		\widehat b^{\,F}(x,g)
		=\int_{\mathbb R^2}\widehat b(y,g)
		\chi_y(x)\chi_y(x+hg)\,dy.
		\end{equation*}
		Put
		\begin{equation*}
		\kappa_g=\int_{\mathbb R^2}
		\chi_y(x)\chi_y(x+hg)\,dy.
		\end{equation*}
		By \eqref{g30},
		\begin{equation*}
		|1-\kappa_g|
		=\frac12\int_{\mathbb R^2}
		|\chi_y(x+hg)-\chi_y(x)|^2\,dy
		\leq C_\Gamma h^2.
		\end{equation*}
		Also, on the support of
		\(\chi_y(x)\chi_y(x+hg)\), we have \(|x-y|\leq1\).  Therefore
		\eqref{eq:friedrichs-coefficient-bounds} implies
		\begin{equation}\label{gsep12}
		\begin{aligned}
		|\widehat b^{\,F}(x,g)-\widehat b(x,g)|
		&\leq\int_{\mathbb R^2}
		|\widehat b(y,g)-\widehat b(x,g)|
		\chi_y(x)\chi_y(x+hg)\,dy\\
		&\quad+|1-\kappa_g|\,|\widehat b(x,g)|\\
		&\leq C_\Gamma\left(
		\rho^{-\alpha-1}+h^2\rho^{-\alpha}\right).
		\end{aligned}
		\end{equation}
		{Since \(\Gamma\) is finite}, \eqref{gsep11} and \eqref{gsep12} imply
		\begin{equation*}
		\left|
		\ip{(B_h^F-\operatorname{Op}_h(b))v}{v}
		\right|
		\leq C_\Gamma\left(
		\rho^{-\alpha-1}+h^2\rho^{-\alpha}\right)\norm v_2^2.
		\end{equation*}
		The nonnegativity of \(B_h^F\) now proves
		\eqref{eq:friedrichs-conclusion}.
	\end{proof}

	\subsection{Operator comparisons}
	Set
	\begin{equation}\label{abbrev67}
	A=A_{a,h}^0=q+is,
	\qquad
	\mathcal A=\mathcal A_{a,h}^0,
	\qquad
	\mathcal Q=\frac{\mathcal A+\mathcal A^*}{2},
	\qquad
	\mathcal S=\frac{\mathcal A-\mathcal A^*}{2i}.
	\end{equation}
	Then both \(\mathcal Q\) and \(\mathcal S\) are self-adjoint, and
	\begin{align}
	\ip{i[\mathcal Q,\mathcal S]v}{v}
	&=\frac12\left(
	\norm{\mathcal Av}_2^2-\norm{\mathcal A^*v}_2^2\right),
	\notag\\
	\norm{\mathcal Qv}_2^2+\norm{\mathcal Sv}_2^2
	&=\frac12\left(
	\norm{\mathcal Av}_2^2+\norm{\mathcal A^*v}_2^2\right),
	\label{eq:quadratic-exact-identity}
	\end{align}
	
	{	The next lemma contains the comparison estimates between self-adjoint versions of the operator $\mathcal A$ and quantization of natural counterparts of its symbol. }

	\begin{lemma}[Operator comparison]
		\label{lem:operator-comparison}
		Fix \(\bar a>0\).  There are \(B_{\bar a},C_{\bar a}>0\) such that,
		whenever \(0\leq a\leq\bar a\), \(0<h\leq1\),
		\(\rho\geq B_{\bar a}\), and
		\(\operatorname{supp}v\subset\Omega_\rho\), one has
		\begin{align}
		\norm{(\mathcal Q-\operatorname{Op}_h(q))v}_2
		+\norm{(\mathcal S-\operatorname{Op}_h(s))v}_2
		&\leq C_{\bar a}h\rho^{-1}\norm v_2,
		\notag\\
		\norm{(\mathcal Q^2-\operatorname{Op}_h(q^2))v}_2
		+\norm{(\mathcal S^2-\operatorname{Op}_h(s^2))v}_2
		&\leq C_{\bar a}h\rho^{-1}\norm v_2.
		\label{eq:quadratic-operator-comparison}
		\end{align}
		Moreover,
		\begin{equation}
		\norm{\left(
			i[\mathcal Q,\mathcal S]-h\operatorname{Op}_h(\{q,s\})
			\right)v}_2
		\leq C_{\bar a}h^2\rho^{-2}\norm v_2.
		\label{eq:commutator-operator-comparison}
		\end{equation}
		There are also \(B,C>0\), independent of \(\bar a\), such that, when
		\(a=0\), \(0<h\leq1\), \(\rho\geq B\), and
		\(\operatorname{supp}v\subset\Omega_\rho\), the estimates in
		\eqref{eq:quadratic-operator-comparison} improve to
		{\begin{align}
			\norm{(\mathcal Q-\operatorname{Op}_h(q))v}_2
			+\norm{(\mathcal S-\operatorname{Op}_h(s))v}_2
			&\leq Ch\rho^{-2}\norm v_2,
			\notag\\
			\rho^{-2}\norm{(\mathcal Q^2-\operatorname{Op}_h(q^2))v}_2
			+\norm{(\mathcal S^2-\operatorname{Op}_h(s^2))v}_2
			&\leq Ch\rho^{-3}\norm v_2.
			\label{eq:quadratic-operator-comparison-zero}
			\end{align}}
		In this case, \eqref{eq:commutator-operator-comparison} also holds with
		\(C\) in place of \(C_{\bar a}\).
	\end{lemma}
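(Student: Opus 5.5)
The plan is to work entirely at the level of the finite translation expansions \eqref{gsep11}, since $\mathcal A=\sum_{e\in\Gamma_1}M_{a_e}T_e$ with the coefficients $a_e$ of \eqref{notation34765}, which are real because $\phi_{a,h}$ is real. The coefficient bounds come from Proposition~\ref{prop:comparison-coefficient-estimates}. Throughout, we use that $\operatorname{supp}v\subset\Omega_\rho$ and $h\leq1$. Hence every point $x+hg$, with $g$ in a fixed finite set, at which a coefficient is evaluated lies in $\Omega_\rho^*$ once $\rho\geq B_{\bar a}$. Therefore \eqref{eq:comparison-coefficient-estimates} applies there, and for $a=0$ so does the sharper bound \eqref{g40}, namely $|\partial_x^\beta a_e|\leq C\rho^{-1-|\beta|}$ for $|\beta|\geq1$. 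Any operator $\sum_{g\in\Gamma}M_{c_g}T_g$ with $\Gamma$ fixed and $\sup|c_g|\leq\varepsilon$ on the relevant set has norm at most $C_\Gamma\varepsilon$ on such $v$. So every estimate reduces to a sup bound on coefficient differences.

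\emph{Step 1: first-order comparison.} We have $\mathcal A^*=\sum_e T_e^*M_{a_e}=\sum_e M_{a_e(x-he)}T_{-e}$. By contrast, the symbol $\overline{A}=\sum_e a_e(x)e^{-ie\cdot\xi}$ quantizes to $\sum_e M_{a_e(x)}T_{-e}$. Hence
\[
\mathcal Q=\operatorname{Op}_h(q)+E_Q,\qquad \mathcal S=\operatorname{Op}_h(s)+E_S,
\]
where $E_Q,E_S$ are combinations of $M_{a_e(x-he)-a_e(x)}T_{-e}$. The mean value theorem bounds these coefficients by $h\sup|\nabla a_e|$. This is $\leq C_{\bar a}h\rho^{-1}$ in general and $\leq Ch\rho^{-2}$ when $a=0$, which gives the first lines of \eqref{eq:quadratic-operator-comparison} and \eqref{eq:quadratic-operator-comparison-zero}.

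\emph{Step 2: squares.} For the composition $\operatorname{Op}_h(b_1)\operatorname{Op}_h(b_2)=\sum_{g,g'}M_{\widehat b_1(x,g)\widehat b_2(x+hg,g')}T_{g+g'}$, the difference from $\operatorname{Op}_h(b_1b_2)$ has coefficients $\widehat b_1(x,g)\bigl(\widehat b_2(x+hg,g')-\widehat b_2(x,g')\bigr)=O(h\sup|\widehat b_1|\sup|\nabla\widehat b_2|)$. Expanding $\mathcal Q^2=(\operatorname{Op}_h(q)+E_Q)^2$, we get a composition error of order $h\rho^{-1}$ plus the cross terms $\operatorname{Op}_h(q)E_Q$, $E_Q\operatorname{Op}_h(q)$, $E_Q^2$, each of order $h\rho^{-1}$; $\mathcal S^2$ is handled the same way. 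For $a=0$, the coefficients of $s$ are $O(\rho^{-1})$ with gradients $O(\rho^{-2})$ by Proposition~\ref{prop:comparison-coefficient-estimates}, and $E_S=O(h\rho^{-2})$. So every term in $\mathcal S^2-\operatorname{Op}_h(s^2)$ is $O(h\rho^{-3})$. For $\mathcal Q^2$ only $O(h\rho^{-1})$ is needed, which is the weighted form stated in \eqref{eq:quadratic-operator-comparison-zero}.

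\emph{Step 3: commutator (the main obstacle).} Here we need precision $h^2\rho^{-2}$, so a second-order Taylor expansion is required. Write $a_e(x-he)=a_e(x)-h\,e\cdot\nabla a_e(x)+O(h^2\rho^{-2})$, using second derivatives $O(\rho^{-2})$. Then $\mathcal Q=\operatorname{Op}_h(q)+h\operatorname{Op}_h(r_Q)+O(h^2\rho^{-2})$, and similarly for $\mathcal S$. The symbols $r_Q,r_S$ have coefficients $O(\rho^{-1})$ with gradients $O(\rho^{-2})$. Next, expand the commutator of two quantizations to second order. The coefficient of $T_{g+g'}$ in $[\operatorname{Op}_h(b_1),\operatorname{Op}_h(b_2)]$ is
\[
h\bigl(\widehat b_1(x,g)\,g\cdot\nabla\widehat b_2(x,g')-\widehat b_2(x,g')\,g'\cdot\nabla\widehat b_1(x,g)\bigr)+O(h^2\|\nabla^2\|\,\|\cdot\|).
\]
Since $g\leftrightarrow-i\partial_\xi$ on $e^{ig\cdot\xi}$, the first-order term is exactly the coefficient of $-ih\{b_1,b_2\}$. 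Applied to $b_1=q$, $b_2=s$, this yields $i[\operatorname{Op}_h(q),\operatorname{Op}_h(s)]=h\operatorname{Op}_h(\{q,s\})+O(h^2\rho^{-2})$. The correction terms $h[\operatorname{Op}_h(q),\operatorname{Op}_h(r_S)]$ and $h[\operatorname{Op}_h(r_Q),\operatorname{Op}_h(s)]$ are themselves commutators, hence carry another factor $h$ times a product of a coefficient and a gradient. These products are $O(\rho^{-2})$, so these terms are $O(h^2\rho^{-2})$, and the $O(h^2\rho^{-2})$ remainders composed with bounded operators stay of that size. The delicate points are keeping track of the sign conventions between $\{\cdot,\cdot\}$ and the commutator, and ensuring that no first-order-in-$h$ term sneaks in through $r_Q,r_S$ with only $\rho^{-1}$ decay. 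The case $a=0$ uses the same computation with $\bar a$-independent constants.
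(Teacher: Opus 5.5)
Your proposal is correct and is essentially the paper's proof. Steps 1 and 2 reproduce its argument: the adjoint defect written as coefficient shifts $a_e(x-he)-a_e(x)$, the composition defect $\widehat b_1(x,g)\bigl(\widehat b_2(x+hg,g')-\widehat b_2(x,g')\bigr)$, and the cross terms for the squares, with the sharper $\rho^{-2}$, $\rho^{-3}$ bookkeeping when $a=0$.

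The only difference is organizational and lies in Step 3. The paper expands $\mathcal A^*\mathcal A$ and $\mathcal A\mathcal A^*$ to first order and uses $i[\mathcal Q,\mathcal S]=\tfrac12(\mathcal A^*\mathcal A-\mathcal A\mathcal A^*)$ together with $\{\overline A,A\}=2i\{q,s\}$, so the first-order adjoint correction of $\mathcal A^*$ appears in both products and cancels. That is the same mechanism as your observation that $r_Q,r_S$ enter only through commutators and therefore contribute only $O(h^2\rho^{-2})$.
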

	
	\begin{proof}
		Put
		\begin{equation*}
		\mathcal Q_1=\operatorname{Op}_h(q),
		\qquad
		\mathcal S_1=\operatorname{Op}_h(s).
		\end{equation*}
		Thus \(\mathcal A=\mathcal Q_1+i\mathcal S_1\), whereas
		\(\mathcal A=\mathcal Q+i\mathcal S\) with \(\mathcal Q,\mathcal S\)
		self-adjoint.  Therefore
		\begin{align}
		\mathcal Q-\mathcal Q_1
		&=\frac12(\mathcal Q_1^*-\mathcal Q_1)
		+\frac i2(\mathcal S_1-\mathcal S_1^*),
		\notag\\
		\mathcal S-\mathcal S_1
		&=\frac1{2i}(\mathcal Q_1-\mathcal Q_1^*)
		+\frac12(\mathcal S_1^*-\mathcal S_1).
		\label{eq:QS-versus-left-quantizations}
		\end{align}
		
		Recalling $\Gamma_1$ from \eqref{steps}, write
		\begin{equation}\label{symbol12}
		q(x,\xi)=\sum_{e\in\Gamma_1}q_e(x)e^{ie\cdot\xi},
		\qquad
		s(x,\xi)=\sum_{e\in\Gamma_1}s_e(x)e^{ie\cdot\xi}.
		\end{equation}
		We use the following direct calculation.  If
		\(b(x,\xi)=\sum_eb_e(x)e^{ie\cdot\xi}\) is real-valued, then
		\(b_{-e}=\overline{b_e}\), and hence
		\begin{equation}
		\bigl((\operatorname{Op}_h(b)^*
		-\operatorname{Op}_h(b))v\bigr)(x)
		=\sum_e\bigl(b_e(x+he)-b_e(x)\bigr)v(x+he).
		\label{eq:real-symbol-adjoint-difference}
		\end{equation}
		By Proposition~\ref{prop:comparison-coefficient-estimates},
		\begin{equation*}
		|q_e(x+he)-q_e(x)|+|s_e(x+he)-s_e(x)|
		\leq C_{\bar a}h\rho^{-1}.
		\end{equation*}
		Thus
		\eqref{eq:real-symbol-adjoint-difference} yields
		\begin{equation*}
		\norm{(\mathcal Q_1^*-\mathcal Q_1)v}_2
		+\norm{(\mathcal S_1^*-\mathcal S_1)v}_2
		\leq C_{\bar a}h\rho^{-1}\norm v_2.
		\end{equation*}
		Together with \eqref{eq:QS-versus-left-quantizations}, this proves
		the first estimate in \eqref{eq:quadratic-operator-comparison}.
		
		For the square terms, direct multiplication gives
		\begin{equation}
		\bigl((\operatorname{Op}_h(b)^2
		-\operatorname{Op}_h(b^2))v\bigr)(x)
		=\sum_{e,f}b_e(x)
		\bigl(b_f(x+he)-b_f(x)\bigr)v(x+h(e+f)).
		\label{eq:left-quantization-square-difference}
		\end{equation}
		It follows from
		Proposition~\ref{prop:comparison-coefficient-estimates} that
		\begin{equation*}
		\norm{(\mathcal Q_1^2-\operatorname{Op}_h(q^2))v}_2
		+\norm{(\mathcal S_1^2-\operatorname{Op}_h(s^2))v}_2
		\leq C_{\bar a}h\rho^{-1}\norm v_2.
		\end{equation*}
		Moreover, \(\mathcal Q_1,\mathcal S_1,\mathcal Q,\mathcal S\) are
		uniformly bounded and
		\begin{equation*}
		\mathcal Q^2-\mathcal Q_1^2
		=\mathcal Q(\mathcal Q-\mathcal Q_1)
		+(\mathcal Q-\mathcal Q_1)\mathcal Q_1,
		\qquad
		\mathcal S^2-\mathcal S_1^2
		=\mathcal S(\mathcal S-\mathcal S_1)
		+(\mathcal S-\mathcal S_1)\mathcal S_1.
		\end{equation*}
		The first estimate in \eqref{eq:quadratic-operator-comparison}
		therefore gives
		\begin{equation*}
		\norm{(\mathcal Q^2-\mathcal Q_1^2)v}_2
		+\norm{(\mathcal S^2-\mathcal S_1^2)v}_2
		\leq C_{\bar a}h\rho^{-1}\norm v_2.
		\end{equation*}
		This proves the second estimate in
		\eqref{eq:quadratic-operator-comparison}.\\

		\nin
		We now prove \eqref{eq:commutator-operator-comparison}.  
		Recall from \eqref{abbrev67} and \eqref{notation34765} that
		\begin{equation*}
		A(x,\xi)=\sum_{e\in\Gamma_1}a_e(x)e^{ie\cdot\xi}.
		\end{equation*}
		Since only finitely many shifts occur, all coefficient values used
		below lie in \(\Omega_\rho^*\).  The identities
		\begin{equation*}
		(M_\alpha T_e)(M_\beta T_f)
		=M_{\alpha(x)\beta(x+he)}T_{e+f},
		\qquad
		(M_\alpha T_e)^*
		=M_{\overline{\alpha(x-he)}}T_{-e}
		\end{equation*}
		and Taylor's formula give
		\begin{align}
		\mathcal A^*\mathcal A
		&=\operatorname{Op}_h\left(
		|A|^2+\frac h i\left[
		A\sum_{j=1}^2
		\partial_{x_j}\partial_{\xi_j}\overline A
		+\partial_\xi\overline A\cdot\partial_xA
		\right]\right)+R_1,\label{gaug2721}\\
		\mathcal A\mathcal A^*
		&=\operatorname{Op}_h\left(
		|A|^2+\frac h i\left[
		A\sum_{j=1}^2
		\partial_{x_j}\partial_{\xi_j}\overline A
		+\partial_\xi A\cdot\partial_x\overline A
		\right]\right)+R_2,\label{gaug2722}
		\end{align}
		where Proposition~\ref{prop:comparison-coefficient-estimates}
		implies
		\begin{equation*}
		\norm{R_1v}_2+\norm{R_2v}_2
		\leq C_{\bar a}h^2\rho^{-2}\norm v_2.
		\end{equation*}
		When \(a=0\), the constant on the right-hand side can be chosen
		independently of \(\bar a\).
		Since
		\begin{equation*}
		i[\mathcal Q,\mathcal S]=\frac12\left(
		\mathcal A^*\mathcal A-\mathcal A\mathcal A^*\right)
		\end{equation*}
		and \(\{\overline A,A\}=2i\{q,s\}\), subtracting the two
		expansions  (\eqref{gaug2721} and \eqref{gaug2722}) gives
		\begin{equation*}
		i[\mathcal Q,\mathcal S]
		=h\operatorname{Op}_h(\{q,s\})
		+\frac12(R_1-R_2).
		\end{equation*}
		This proves \eqref{eq:commutator-operator-comparison}.

		For \(a=0\), the proof is similar, using the sharper estimates
		\eqref{eq:zero-comparison-symbol-derivatives}.
	\end{proof}
	\subsection{Positivity of symbols} 
	We now prove the following proposition.  
	\begin{proposition}\label{thm:symbol-positivity}
		Let \(I\Subset(0,a_\lambda)\).  There are \(B_I,c_I,h_I>0\) such that
		\begin{equation}
		\{\operatorname{Re}A_{a,h}^0,\operatorname{Im}A_{a,h}^0\}
		+\rho^{-1}|A_{a,h}^0|^2
		\geq c_I\rho^{-1}
		\label{eq:theorem-B-positive-a}
		\end{equation}
		
		for every
		\begin{equation*}
		a\in I,\quad 0<h\leq h_I,\quad \rho\geq B_I,\quad
		x\in\Omega_\rho^*,\quad \xi\in\T^2.
		\end{equation*}
		There are \(B,c>0\) such that
		\begin{equation}
		\{\operatorname{Re}A_{0,h}^0,\operatorname{Im}A_{0,h}^0\}
		+2\left(
		\rho^{-2}|\operatorname{Re}A_{0,h}^0|^2
		+|\operatorname{Im}A_{0,h}^0|^2
		\right)
		\geq c\rho^{-2}
		\label{eq:theorem-B-zero}
		\end{equation}
		for every
		\begin{equation*}
		0<h\leq1,\quad \rho\geq B,\quad
		x\in\Omega_\rho^*,\quad \xi\in\T^2.
		\end{equation*}
	\end{proposition}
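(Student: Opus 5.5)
The plan is to compute the Poisson bracket explicitly and then argue by compactness on the complex characteristic set, treating \(a\in I\) and \(a=0\) separately. Write \(\psi_j=\partial_{x_j}\phi_{a,h}\) and \(H=D^2\phi_{a,h}\). Then \(A_{a,h}^0=p_\lambda(\xi+i\nabla\phi_{a,h})\), so
\[
q=2\sum_j\cosh\psi_j\cos\xi_j-\lambda,\qquad s=-2\sum_j\sinh\psi_j\sin\xi_j .
\]
Differentiating and collecting terms, I expect to obtain
\[
\{q,s\}=4\bigl(X^THX+Y^THY\bigr),\qquad X_j=\cosh\psi_j\sin\xi_j,\quad Y_j=\sinh\psi_j\cos\xi_j .
\]
This is the holomorphic identity \(\{\operatorname{Re}p,\operatorname{Im}p\}=H(\partial_\zeta p,\overline{\partial_\zeta p})\) written out. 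Set \(\omega=x/R_h(x)\) up to \(O(h/\rho)\). Then \(\nabla\phi_{a,h}=(a+R_h^{-1})\omega\) and
\[
H=\frac{a}{R_h}(I-\omega\omega^T)+\frac{1}{R_h^2}(I-2\omega\omega^T)+O(h\rho^{-2}).
\]
For \(a>0\), \(H\) is therefore positive of size \(a/\rho\) in the tangential direction \(\omega^\perp\), and negative only of size \(O(\rho^{-2})\) in the radial direction. Consequently
\[
\{q,s\}\ \ge\ \frac{4a}{R_h}\bigl(|X\cdot\omega^\perp|^2+|Y\cdot\omega^\perp|^2\bigr)-C\rho^{-2}.
\]

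For \(a\in I\), I rescale by \(\rho\). Up to errors \(O(\rho^{-1})\), uniformly in \(h\le h_I\), the quantities \(q,s,X,Y\) are continuous functions of \((a,\omega,\xi)\) ranging over the compact set \(I\times\mathbb S^1\times\T^2\), because \(\psi=a\omega+O(\rho^{-1})\). The key claim is the following: on the compact set \(Z=\{(a,\omega,\xi):p_\lambda(\xi+ia\omega)=0\}\), the tangential components \(X\cdot\omega^\perp\) and \(Y\cdot\omega^\perp\) do not both vanish.

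Granting the claim, continuity gives \(\delta>0\) with \(|X\cdot\omega^\perp|^2+|Y\cdot\omega^\perp|^2\ge\delta\) wherever \(|p_\lambda(\xi+ia\omega)|<\delta\). On that region the bracket alone gives \(c\rho^{-1}\). On the complement, \(\rho^{-1}|A^0|^2\ge\delta^2\rho^{-1}/2\), and this dominates the negative part \(-C\rho^{-2}\) once \(\rho\ge B_I\). This proves \eqref{eq:theorem-B-positive-a}.

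The main obstacle is proving the claim, and this is exactly where \(a<a_\lambda\) enters. Suppose both tangential components vanish at a point of \(Z\), so the complex vector \(\partial_\zeta p=-2\sin\zeta\) is parallel to \(\omega\). In the model case \(\omega=e_1\), this forces \(\sin\xi_2=0\), and \(s=0\) forces \(\sin\xi_1=0\). Then \(q=\pm2\cosh a\pm2-\lambda=0\), which requires \(\cosh a=1+\lambda/2\), that is, \(a=a_\lambda\), a contradiction. For general \(\omega\), I would first try to show that \(p_\lambda(\zeta)=0\) together with \(\sin\zeta\parallel\omega\) and \(\operatorname{Im}\zeta=a\omega\) means \(\zeta\) is a critical point of \(p_\lambda\) restricted to the complex line through \(\zeta\) in direction \(\omega^\perp\). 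I would then show, by a Lagrange-multiplier or extremal argument, that the minimal imaginary part \(a\) attainable on the complex Fermi curve in direction \(\omega\) is at least \(a_\lambda\), so that such points cannot occur for \(a<a_\lambda\). If this general argument does not go through cleanly, I would instead solve the two equations \(\sin\zeta_1\,\omega_2=\sin\zeta_2\,\omega_1\) and \(p_\lambda(\zeta)=0\) directly by parametrizing \(\zeta\).

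For \(a=0\), the Hessian is indefinite, but the \(|s|^2\) term compensates. Here \(\psi=\omega/R_h=O(\rho^{-1})\), so \(X=\sin\xi+O(\rho^{-2})\) and \(Y^THY=O(\rho^{-4})\), while \(s=-2R_h^{-1}\,\omega\cdot\sin\xi+O(\rho^{-3})\). Hence
\[
\{q,s\}+2|s|^2\ \ge\ \frac{4}{R_h^2}\bigl(|\sin\xi\cdot\omega^\perp|^2-|\sin\xi\cdot\omega|^2\bigr)+\frac{8}{R_h^2}|\sin\xi\cdot\omega|^2-C\rho^{-3}=\frac{4}{R_h^2}|\sin\xi|^2-C\rho^{-3}.
\]
By Proposition~\ref{prop:comparison-coefficient-estimates}, \(2\rho^{-2}|q|^2=2\rho^{-2}|p_\lambda(\xi)|^2+O(\rho^{-4})\). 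It then remains to check that \(|\sin\xi|^2+|p_\lambda(\xi)|^2\) is bounded below on \(\T^2\). If both vanished, then \(\sin\xi_1=\sin\xi_2=0\) and \(\pm2\pm2=\lambda\), which is impossible for \(0<\lambda<4\). Using \(R_h\le8\rho\), this yields \eqref{eq:theorem-B-zero} for \(\rho\ge B\).
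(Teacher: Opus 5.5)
Your framework is the paper's: the bracket identity $\{q,s\}=\overline w^{T}D^2\phi_{a,h}\,w$ (your $4(X^THX+Y^THY)$), the limit $R_hD^2\phi_{a,h}\to a(\mathrm{Id}-\omega\otimes\omega)$, and a compactness split according to the size of $|p_\lambda(\xi+ia\omega)|$. For $a=0$ you use the cancellation $\{q,s\}+2|s|^2\approx R_h^{-2}|\nabla p_\lambda(\xi)|^2$, and that part of your argument is complete.

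\textbf{The gap.} It lies in the key claim for $a\in I$: that $\omega^\perp\cdot\nabla p_\lambda(\xi+ia\omega)\neq0$ on the complex characteristic set. This is the only place $a<a_\lambda$ is used. You prove it only for $\omega=e_1$ and offer a plan for general $\omega$, and the plan is not sound as written. The values of $b$ attainable on $\{p_\lambda(\zeta)=0,\ \operatorname{Im}\zeta=b\omega\}$ are not bounded below by $a_\lambda$:
\begin{itemize}
\item for $\omega=e_1$, every $b\in[0,a_\lambda]$ occurs, with $\xi_1=0$ and $\cos\xi_2=\lambda/2-\cosh b$;
\item for $\omega=(1,1)/\sqrt2$, every $b\geq0$ occurs, with $\xi_2=-\xi_1$ and $\cos\xi_1=\lambda/(4\cosh(b/\sqrt2))$.
\end{itemize}
So an extremal principle in $b$ cannot be what excludes tangency. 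If you restrict the statement to tangency points only, it is just the claim restated.

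\textbf{The missing observation.} Your model-case computation is linear algebra and generalizes directly. Set $X_j=\cosh(a\omega_j)\sin\xi_j$. Then $s=0$ reads $\sum_j\tanh(a\omega_j)X_j=0$, and vanishing of the real tangential component reads $\omega_2X_1-\omega_1X_2=0$. This homogeneous $2\times2$ system has determinant $-\sum_j\omega_j\tanh(a\omega_j)$, which is strictly negative for $a>0$, since each $\omega_j\tanh(a\omega_j)\geq0$ and $|\omega|=1$.
\begin{itemize}
\item Hence $\sin\xi_1=\sin\xi_2=0$, so $\cos\xi_j=\pm1$.
\item Then $q=0$ with $0<\lambda<4$ forces $\cos\xi_1=-\cos\xi_2$ and $|\cosh(a\omega_1)-\cosh(a\omega_2)|=\lambda/2$.
\item So $\cosh(a|\omega_j|)\geq1+\lambda/2$ for some $j$, giving $a\geq a|\omega_j|\geq a_\lambda$, a contradiction.
\end{itemize}
You never need $Y\cdot\omega^\perp=0$. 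This is exactly the paper's Lemma~\ref{lem:tangential-rigorous}; once it is supplied, the rest of your proof goes through as written.
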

	Before proving Proposition~\ref{thm:symbol-positivity}, we need some
	preliminary results.
	Everything in this section concerns the comparison symbol
	\eqref{eq:comparison-symbol}.
	{By the definition of \(p_\lambda\)} and \eqref{eq:comparison-symbol},
	{\begin{equation*}
		A_{a,h}^0(x,\xi)
		=p_\lambda\bigl(\xi+i\nabla\phi_{a,h}(x)\bigr).
		\end{equation*}
	}

	For \(\omega=(\omega_1,\omega_2)\in\mathbb S^1\), set
	\begin{equation*}
	\omega^\perp=(-\omega_2,\omega_1).
	\end{equation*}
	
	\begin{lemma}\label{lem:tangential-rigorous}
		Let \(I\Subset(0,a_\lambda)\).  There is \(c_I>0\) such that
		\begin{equation}
		|\omega^\perp\cdot\nabla p_\lambda(\xi+ib\omega)|^2
		\geq c_I
		\label{eq:transverse-gradient-bound}
		\end{equation}
		whenever
		\begin{equation*}
		p_\lambda(\xi+ib\omega)=0,
		\qquad \xi\in\T^2,
		\qquad \omega\in\mathbb S^1,
		\qquad b\in I.
		\end{equation*}
	\end{lemma}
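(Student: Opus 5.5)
The plan is a compactness argument combined with an explicit algebraic check that the tangential derivative cannot vanish on the complexified Fermi curve.

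\textbf{Reduction to nonvanishing.} The set
\[
Z_I=\{(\xi,\omega,b)\in\T^2\times\mathbb S^1\times I:\ p_\lambda(\xi+ib\omega)=0\}
\]
is compact, and $(\xi,\omega,b)\mapsto|\omega^\perp\cdot\nabla p_\lambda(\xi+ib\omega)|^2$ is continuous. Hence \eqref{eq:transverse-gradient-bound} follows once I show that this quantity never vanishes on $Z_I$. The constant $c_I$ is then its minimum over $Z_I$, with any positive value if $Z_I$ is empty.

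\textbf{Rewriting the conditions.} Write $z_j=\xi_j+ib\omega_j$ and use
\[
\cos z_j=\cos\xi_j\cosh(b\omega_j)-i\sin\xi_j\sinh(b\omega_j),\qquad \sin z_j=\sin\xi_j\cosh(b\omega_j)+i\cos\xi_j\sinh(b\omega_j).
\]
The condition $p_\lambda=0$ splits into two real equations:
\[
2\cos\xi_1\cosh(b\omega_1)+2\cos\xi_2\cosh(b\omega_2)=\lambda,\qquad \sin\xi_1\sinh(b\omega_1)+\sin\xi_2\sinh(b\omega_2)=0 .
\]
Since $\nabla p_\lambda=(-2\sin z_1,-2\sin z_2)$, the vanishing of the tangential derivative means $\omega_2\sin z_1=\omega_1\sin z_2$. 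This also splits into two real equations:
\[
\omega_2\sin\xi_1\cosh(b\omega_1)=\omega_1\sin\xi_2\cosh(b\omega_2),\qquad \omega_2\cos\xi_1\sinh(b\omega_1)=\omega_1\cos\xi_2\sinh(b\omega_2).
\]
I will derive a contradiction from these four equations.

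\textbf{Generic case $\omega_1\omega_2\neq0$.}
\begin{itemize}
\item Suppose first that $\sin\xi_1\sin\xi_2\neq0$. Divide the first tangential equation by the imaginary Fermi equation. This gives $\omega_2\coth(b\omega_1)=-\omega_1\coth(b\omega_2)$, or equivalently
\[
\omega_1\tanh(b\omega_1)+\omega_2\tanh(b\omega_2)=0.
\]
Both terms are strictly positive, which is impossible.
\item So $\sin\xi_1\sin\xi_2=0$. The imaginary Fermi equation, together with $\sinh(b\omega_j)\neq0$, then forces $\sin\xi_1=\sin\xi_2=0$, so $\cos\xi_j=\varepsilon_j\in\{\pm1\}$.
\item Put $s_j=\sinh(b\omega_j)/\omega_j>0$. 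The second tangential equation becomes $\varepsilon_1s_1=\varepsilon_2s_2$, so $\varepsilon_1=\varepsilon_2=\varepsilon$.
\item The real Fermi equation becomes $\varepsilon\bigl(\cosh(b\omega_1)+\cosh(b\omega_2)\bigr)=\lambda/2$. Since $\lambda>0$ we need $\varepsilon=1$, but then the left side is at least $2$, while $\lambda/2<2$. This is a contradiction.
\end{itemize}

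\textbf{Degenerate case, say $\omega_2=0$, $\omega_1=\pm1$.}
\begin{itemize}
\item The tangential condition reduces to $\sin z_2=0$ with $z_2=\xi_2$ real, so $\cos\xi_2=\varepsilon_2=\pm1$.
\item The imaginary Fermi equation gives $\sin\xi_1\sinh b=0$. Since $b>0$, this forces $\cos\xi_1=\varepsilon_1=\pm1$.
\item The real Fermi equation becomes $\varepsilon_1\cosh b=\lambda/2-\varepsilon_2$. If $\varepsilon_2=1$, the right side has modulus less than $1$, which is impossible.
\item If $\varepsilon_2=-1$, we get $\cosh b=1+\lambda/2$, that is $b=a_\lambda$. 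This is excluded because $b\in I\Subset(0,a_\lambda)$.
\end{itemize}

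The case $\omega_1=0$ is symmetric. Together with the compactness step, this proves the lemma.

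\textbf{Main obstacle.} The delicate point is the degenerate axis directions. There the zero set genuinely approaches a critical point as $b\uparrow a_\lambda$. This is exactly why $c_I$ depends on $I$ and degenerates at the endpoint, matching the loss of the Carleman estimate described in the introduction. The only real care needed is to handle the divisions in the generic case correctly, splitting into subcases according to which of $\sin\xi_j$ vanish.
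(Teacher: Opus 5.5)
Your proof is correct. The compactness reduction is the same as the paper's; the difference lies in the algebraic nonvanishing step.

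The paper does not split into cases. It treats the imaginary part of $p_\lambda=0$ and the real part of the tangential condition as a $2\times2$ linear system in $(\sin\xi_1\cosh(b\omega_1),\sin\xi_2\cosh(b\omega_2))$. The determinant of that system is $-\omega_1\tanh(b\omega_1)-\omega_2\tanh(b\omega_2)$, which is nonzero for every $\omega\in\mathbb S^1$, axis directions included. This gives $\sin\xi_1=\sin\xi_2=0$ in one stroke; your first generic subcase is this determinant computation in disguise.

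The paper then uses only the real part of $p_\lambda=0$. Since $0<\lambda/2<2$, the signs $\cos\xi_j=\pm1$ must be opposite. After relabeling, $\cosh(b\omega_1)\geq1+\lambda/2$, hence $b\geq|b\omega_1|\geq a_\lambda$. The imaginary part of the tangential condition, which you use to force $\varepsilon_1=\varepsilon_2$, is never needed there.

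The paper's version is shorter and uniform in $\omega$, but it invokes $b<a_\lambda$ in every direction. Yours gives more information. Off the axes, the tangential derivative never vanishes on the complex Fermi set for any $b>0$. On the axes, it vanishes only at $b=a_\lambda$ exactly. So the endpoint degeneration is confined to $\omega\in\{\pm e_1,\pm e_2\}$, and your argument gives the conclusion for any compact $I\subset(0,\infty)\setminus\{a_\lambda\}$. This substantiates your closing remark about the axis directions.
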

	
	\begin{proof}
		Suppose that, for some $b\in I$, $\omega\in\mathbb S^1$, and
		$\xi\in\T^2$,
		\(\omega^\perp\cdot\nabla p_\lambda(\xi+ib\omega)=0\) and
		\(p_\lambda(\xi+ib\omega)=0\).  Then
		\begin{align}
		2\sum_{j=1}^2\left(
		\cos\xi_j\cosh(b\omega_j)
		-i\sin\xi_j\sinh(b\omega_j)\right)-\lambda&=0,
		\label{eq:characteristic-vanishing}\\
		\omega_2\left(
		\sin\xi_1\cosh(b\omega_1)
		+i\cos\xi_1\sinh(b\omega_1)\right)
		-\omega_1\left(
		\sin\xi_2\cosh(b\omega_2)
		+i\cos\xi_2\sinh(b\omega_2)\right)&=0.
		\label{eq:transverse-vanishing}
		\end{align}
		The imaginary part of \eqref{eq:characteristic-vanishing} and the
		real part of \eqref{eq:transverse-vanishing} give
		\begin{equation*}
		\begin{pmatrix}
		\tanh(b\omega_1)&\tanh(b\omega_2)\\
		\omega_2&-\omega_1
		\end{pmatrix}
		\begin{pmatrix}
		\sin\xi_1\cosh(b\omega_1)\\
		\sin\xi_2\cosh(b\omega_2)
		\end{pmatrix}
		=0.
		\end{equation*}
		The determinant of the matrix is
		\begin{equation*}
		-\omega_1\tanh(b\omega_1)
		-\omega_2\tanh(b\omega_2)\neq0,
		\end{equation*}
		because \(b>0\), \(|\omega|=1\), and
		\(\omega_j\tanh(b\omega_j)\geq0\).  Thus
		\(\sin\xi_1=\sin\xi_2=0\), so
		\(\cos\xi_j\in\{-1,1\}\).  Now the real part of
		\eqref{eq:characteristic-vanishing}  becomes
		\begin{equation*}
		\cos \xi_1\cosh(b\omega_1)+
		\cos \xi_2\cosh(b\omega_2)=\frac\lambda2.
		\end{equation*}
		Because \(0<\lambda/2<2\),   $\cos\xi_1$ and
		$\cos\xi_2$ have opposite signs.  Without loss of generality, assume that
		$\cos\xi_1=1$ and $\cos\xi_2=-1$.  Then
		\begin{equation*}
		\cosh(b\omega_1)
		=\cosh(b\omega_2)+\frac\lambda2
		\geq1+\frac\lambda2.
		\end{equation*}
		Therefore
		\begin{equation*}
		b\geq|b\omega_1|
		\geq\operatorname{arcosh}\left(1+\frac\lambda2\right)=a_\lambda,
		\end{equation*}
		contrary to \(b\in\overline I\).  Hence the continuous function
		\begin{equation*}
		|\omega^\perp\cdot\nabla p_\lambda(\xi+ib\omega)|^2
		\end{equation*}
		is strictly positive on the compact set where
		$p_\lambda(\xi+ib\omega)=0$.  Compactness now yields
		\eqref{eq:transverse-gradient-bound}.
	\end{proof}

	\begin{proof}[\bf Proof of Proposition~\ref{thm:symbol-positivity}]
		Using $A_{a,h}^0=p_\lambda\bigl(\xi+i\nabla\phi_{a,h}(x)\bigr)$, set
		\begin{equation*}
		q_{a,h}=\operatorname{Re}A_{a,h}^0,
		\qquad
		s_{a,h}=\operatorname{Im}A_{a,h}^0,
		\qquad
		w=\nabla p_\lambda\bigl(\xi+i\nabla\phi_{a,h}(x)\bigr).
		\end{equation*}
		Put $H=D^2\phi_{a,h}$.  The chain rule gives
		\[
		\partial_\xi A_{a,h}^0=w,
		\qquad
		\partial_xA_{a,h}^0=iHw.
		\]
		Therefore, by the definition of the Poisson bracket,
		\begin{align}
		\{q_{a,h},s_{a,h}\}
		&=\operatorname{Re}w\cdot\operatorname{Im}(iHw)
		-\operatorname{Re}(iHw)\cdot\operatorname{Im}w\notag\\
		&=(\operatorname{Re}w)^TH\operatorname{Re}w
		+(\operatorname{Im}w)^TH\operatorname{Im}w\notag\\
		&=\overline w^{\,T}D^2\phi_{a,h}w.
		\label{eq:poisson-bracket-identity}
		\end{align}
		For \(x\in\Omega_\rho^*\) and \(\rho\) sufficiently large,
		we have \(x\neq0\).  Write
		\begin{equation*}
		R=R_h(x),
		\qquad
		\omega=\frac{x}{|x|}.
		\end{equation*}
		Then, uniformly for \(0<h\leq1\),
		\begin{equation*}
		\frac{x}{R}\longrightarrow\omega
		\qquad\text{as }\rho\longrightarrow\infty.
		\end{equation*}
		For \(x=(x_1,x_2)\), let
		\begin{equation*}
		x\otimes x
		=\begin{pmatrix}
		x_1^2&x_1x_2\\
		x_1x_2&x_2^2
		\end{pmatrix}.
		\end{equation*}
		Direct differentiation gives
		\begin{align}\label{posa}
		\nabla\phi_{a,h}
		&=\left(a+R^{-1}\right)\frac{x}{R},\\
		R D^2\phi_{a,h}
		&=a\left(\operatorname{Id}-\frac{x\otimes x}{R^2}\right)
		+R^{-1}\left(\operatorname{Id}-2\frac{x\otimes x}{R^2}\right).
		\end{align}
		Indeed, since $x/R\to\omega$, the formulas above give
		\begin{equation*}
		\nabla\phi_{a,h}\longrightarrow a\omega,
		\qquad
		R D^2\phi_{a,h}\longrightarrow
		a\left(\operatorname{Id}-\omega\otimes\omega\right).
		\end{equation*}
		Moreover,
		\begin{equation*}
		A_{a,h}^0(x,\xi)\longrightarrow
		p_\lambda(\xi+ia\omega).
		\end{equation*}
		Consequently, \eqref{eq:poisson-bracket-identity} gives, uniformly for
		\(a\in I\), \(x\in\Omega_\rho^*\),
		\(\xi\in\T^2\), and \(0<h\leq1\),
		\begin{equation}
		R\{q_{a,h},s_{a,h}\}
		\longrightarrow
		a|\omega^\perp\cdot
		\nabla p_\lambda(\xi+ia\omega)|^2
		\label{eq:positive-bracket-limit}
		\end{equation}
		as \(\rho\to\infty\).
		By Lemma~\ref{lem:tangential-rigorous}, the right-hand side of
		\eqref{eq:positive-bracket-limit} is uniformly positive on
		\begin{equation*}
		\left\{(a,\omega,\xi):
		a\in I, \omega\in\mathbb S^1,
		\ p_\lambda(\xi+ia\omega)=0\right\}.
		\end{equation*}
		The two nonnegative terms below do not vanish simultaneously, so a
		compactness argument gives \(c_I>0\) such that
		\begin{equation}\label{gaug2731}
		a|\omega^\perp\cdot
		\nabla p_\lambda(\xi+ia\omega)|^2
		+\frac14|p_\lambda(\xi+ia\omega)|^2
		\geq 2c_I.
		\end{equation}
		Since \(1/4\leq R/\rho\leq8\) on \(\Omega_\rho^*\), \eqref{eq:positive-bracket-limit}  and \eqref{gaug2731}  imply, for sufficiently large \(\rho\),
		\begin{equation*}
		R\left(\{q_{a,h},s_{a,h}\}
		+\rho^{-1}|A_{a,h}^0|^2\right)
		=R\{q_{a,h},s_{a,h}\}
		+\frac R\rho|A_{a,h}^0|^2\geq c_I.
		\end{equation*}
		This proves
		\eqref{eq:theorem-B-positive-a}.
		
		For \(a=0\), note that  in \eqref{posa} the leading order vanishes. However, the same calculation gives
		\begin{equation*}
		\nabla\phi_{0,h}=\frac{x}{R^2},
		\qquad
		R^2D^2\phi_{0,h}=\operatorname{Id}
		-2\frac{x\otimes x}{R^2}.
		\end{equation*}
		Hence, by \eqref{eq:poisson-bracket-identity} and  $A_{0,h}^0=p_\lambda\bigl(\xi+i\nabla\phi_{0,h}(x)\bigr)$, uniformly for
		\(x\in\Omega_\rho^*\), \(\xi\in\T^2\),
		and \(0<h\leq1\),
		\begin{align*}
		R^2\{q_{0,h},s_{0,h}\}
		&\longrightarrow
		|\nabla p_\lambda(\xi)|^2
		-2|\nabla p_\lambda(\xi)\cdot\omega|^2,\\
		q_{0,h}&\longrightarrow p_\lambda(\xi),\\
		Rs_{0,h}&\longrightarrow
		\nabla p_\lambda(\xi)\cdot\omega.
		\end{align*}
		
		Consequently,
		\begin{align*}
		&R^2\left[
		\{q_{0,h},s_{0,h}\}
		+2\left(\rho^{-2}|q_{0,h}|^2+|s_{0,h}|^2\right)
		\right]\\
		&\qquad=
		|\nabla p_\lambda(\xi)|^2
		+2\frac{R^2}{\rho^2}|p_\lambda(\xi)|^2+o(1)
		\end{align*}
		uniformly as \(\rho\to\infty\).  Since
		\(1/4\leq R/\rho\leq8\), the expression on the right is bounded
		below by
		\[
		|\nabla p_\lambda(\xi)|^2
		+\frac18|p_\lambda(\xi)|^2+o(1).
		\]
		This is uniformly positive for sufficiently large $\rho$, because
		\(p_\lambda\) and \(\nabla p_\lambda\) do not vanish simultaneously. So
		\begin{equation}
		R^2\left[
		\{q_{0,h},s_{0,h}\}
		+2\left(\rho^{-2}|q_{0,h}|^2+|s_{0,h}|^2\right)
		\right]\geq c.
		\end{equation}
		This  proves
		\eqref{eq:theorem-B-zero}.
	\end{proof}

	\subsection{Proof of Theorem~\ref{thm:carleman-estimates}}
	Choose \(\bar a\in(0,a_\lambda)\) such that \(I\subset(0,\bar a]\), and use the
	notation of Lemma~\ref{lem:operator-comparison}.  All constants depending
	on \(\bar a\) are henceforth absorbed into constants carrying the
	subscript \(I\).
	
	\medskip
	\noindent\emph{{The case \(a\in I\)}.}
	{The nonnegative symbol (by Proposition \ref{thm:symbol-positivity})
		\begin{equation*}
		b=\{q,s\}+\rho^{-1}(q^2+s^2)-c_I\rho^{-1}
		\end{equation*}
		has fixed finite Fourier support which is immediate from the definitions \eqref{symbol12}.  }Proposition~\ref{prop:comparison-coefficient-estimates}
	and the product rule show that all its Fourier coefficients satisfy
	\begin{equation*}
	|\partial_x^\beta\widehat b(x,g)|
	\leq C_I\rho^{-1-|\beta|}
	\qquad \text{where }\quad |\beta|\leq1.
	\end{equation*}
	Hence Lemma~\ref{lem:friedrichs},
	Proposition~\ref{thm:symbol-positivity},
	\eqref{eq:commutator-operator-comparison}, and
	\eqref{eq:quadratic-operator-comparison} give
	\begin{align*}
	c_Ih\rho^{-1}\norm v_2^2
	&\leq h\operatorname{Re}\ip{
		\operatorname{Op}_h(\{q,s\})v}{v}
	+h\rho^{-1}\operatorname{Re}\ip{
		\operatorname{Op}_h(q^2+s^2)v}{v}
	+C_I\left(h\rho^{-2}+h^3\rho^{-1}\right)\norm v_2^2\\
	&\leq \ip{i[\mathcal Q,\mathcal S]v}{v}
	+h\rho^{-1}
	\left(\norm{\mathcal Qv}_2^2+\norm{\mathcal Sv}_2^2\right)
	+C_I\left(h\rho^{-2}+h^3\rho^{-1}\right)\norm v_2^2\\
	&=\left(\frac12+\frac12h\rho^{-1}\right)
	\norm{\mathcal Av}_2^2
	-\left(\frac12-\frac12h\rho^{-1}\right)
	\norm{\mathcal A^*v}_2^2
	+C_I\left(h\rho^{-2}+h^3\rho^{-1}\right)\norm v_2^2\\
	&\leq\norm{\mathcal Av}_2^2
	+C_I\left(h\rho^{-2}+h^3\rho^{-1}\right)\norm v_2^2,
	\end{align*}
	{		where the equality follows from}
	\eqref{eq:quadratic-exact-identity} and the last inequality follows since $h \rho^{-1}$ is small.  Absorbing the last term gives
	\begin{equation}
	h\rho^{-1}\norm v_2^2
	\leq C_I\norm{\mathcal A_{a,h}^0v}_2^2.
	\label{eq:comparison-Carleman-positive-a}
	\end{equation}
	By Lemma~\ref{lem:comparison-symbols} and
	\eqref{eq:comparison-Carleman-positive-a}, we obtain
	\eqref{eq:theorem-A-positive-a}.
	
	\medskip
	\noindent\emph{The case \(a=0\).}
	The proof is the same, using \eqref{eq:theorem-B-zero},
	Lemma~\ref{lem:friedrichs} with \(\alpha=2\),
	\eqref{eq:quadratic-operator-comparison-zero}, and
	\eqref{eq:comparison-operator-error-zero} in place of their
	positive-\(a\) counterparts.  Since the constants in these estimates are
	independent of \(I\) and \(\bar a\), the resulting \(B,C,h_0\) are also
	independent of \(I\).  This gives \eqref{eq:theorem-A-zero}.
	\qed

	\subsection{Weighted estimates}\label{sec:weighted-estimates}
	
	As consequences of Theorem~\ref{thm:carleman-estimates}, we obtain the
	following two weighted estimates.
	
	\begin{proposition}\label{prop:polynomial-rigorous}
		There are constants \(C,B>0\)   and an integer  \(k_0\geq1\) such that for every integer \(k\geq k_0\) 
		and every finitely supported \(f\),
		\begin{equation*}
		\norm{\la n\ra^k f}_2
		\leq C\norm{\la n\ra^{k+1}P_\lambda f}_2
		+(Bk)^k\norm f_2.
		\end{equation*}
	\end{proposition}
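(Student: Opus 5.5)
The plan is to convert the $a=0$ Carleman estimate \eqref{eq:theorem-A-zero} into the weighted inequality, using the scaling \eqref{eq:weight-scaling} and a dyadic partition of unity in the radius $R_h$. Set $h=k^{-1}$ and $x=hn$. Then $e^{\phi_{0,h}(hn)/h}=h^k\la n\ra^k$, and for $v=e^{\phi_{0,h}/h}g$ we have $\mathcal A_{0,h}v=e^{\phi_{0,h}/h}P_\lambda g$.

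\emph{Partition of unity.} Fix smooth cutoffs $\chi_\rho$, with $\rho$ ranging over dyadic values $\rho=2^\ell\geq B$, such that:
\begin{itemize}
\item $\chi_\rho$ is supported in $\Omega_\rho$, and the supports overlap a bounded number of times;
\item $\sum_\rho\chi_\rho^2=1$ on $\{R_h\geq 2B\}$;
\item $|\partial^\beta\chi_\rho|\leq C_\beta\rho^{-|\beta|}$.
\end{itemize}
Here $B$ is the constant from Theorem~\ref{thm:carleman-estimates}. Apply \eqref{eq:theorem-A-zero} to $v_\rho=\chi_\rho e^{\phi_{0,h}/h}f$. The operator $\mathcal A_{0,h}$ consists of weighted shifts $M_{e^{-d_e}}T_e$, so
\[
\mathcal A_{0,h}(\chi_\rho w)=\chi_\rho\mathcal A_{0,h}w+\sum_{|e|_1=1}M_{e^{-d_e}}\bigl(\chi_\rho(\cdot+he)-\chi_\rho\bigr)T_ew .
\]
Since $|\chi_\rho(x+he)-\chi_\rho(x)|\leq Ch\rho^{-1}$ and $|d_e|\leq C$ by \eqref{g12}, the commutator term has norm at most $Ch\rho^{-1}\norm{e^{\phi_{0,h}/h}f}_{\ell^2(\Omega_\rho^*)}$. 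Multiplying \eqref{eq:theorem-A-zero} through by $\rho h^{-1/2}$ then gives
\[
\norm{v_\rho}_2\leq C\rho h^{-1/2}\norm{\chi_\rho e^{\phi_{0,h}/h}P_\lambda f}_2+Ch^{1/2}\norm{e^{\phi_{0,h}/h}f}_{\ell^2(\Omega_\rho^*)}.
\]

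\emph{Summing and absorbing.} On $\Omega_\rho^*$ we have $\rho\leq 4R_h=4h\la n\ra$. Hence $\rho h^{-1/2}e^{\phi_{0,h}/h}\leq 4h^{1/2}\,h^k\la n\ra^{k+1}\leq 4h^k\la n\ra^{k+1}$. Square the local bounds and sum over $\rho$, using the bounded overlap. This yields
\[
\norm{h^k\la n\ra^kf}_{\ell^2(R_h\geq2B)}\leq C\norm{h^k\la n\ra^{k+1}P_\lambda f}_2+Ch^{1/2}\norm{h^k\la n\ra^kf}_{\ell^2(R_h\geq B/4)}.
\]
Split the last norm into the parts where $R_h\geq2B$ and where $R_h<2B$. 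By \eqref{eq:scaled-exterior-equivalence}, on the inner region $\la n\ra<2Bk$, so there $h^k\la n\ra^k\leq(2B)^k$. Choose $k_0$ so large that $Ch^{1/2}=Ck^{-1/2}\leq1/2$. The outer part can then be absorbed into the left-hand side.

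\emph{Conclusion.} Add the trivial bound on $\{\la n\ra<2Bk\}$, namely $\norm{\la n\ra^kf}\leq(2Bk)^k\norm f_2$, and divide by $h^k$. This gives the claimed inequality after renaming $B$; the constant $C$ does not depend on $k$. The delicate step is the absorption of the cutoff commutator. It works only because the Carleman gain $h\rho^{-2}$ beats the commutator loss $(h\rho^{-1})^2$ by the factor $h$, which is small once $k\geq k_0$. Some care is also needed with the overlap of $\Omega_\rho$ and $\Omega_\rho^*$, and with the fact that the commutator terms near the innermost cutoff land in the region $\la n\ra\lesssim Bk$, where the trivial bound is used instead of absorption.
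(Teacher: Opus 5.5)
Your argument is correct and is essentially the paper's proof. The paper first packages the dyadic-partition and commutator-absorption step into the exterior estimate of Lemma~\ref{lem:exterior-carleman} (for functions supported in $\{R_h\geq B\}$), then applies a separate inner cutoff with the trivial bound on $\{\la n\ra\lesssim Bk\}$; you fold both cutoffs into one partition summing to $1$ on $\{R_h\geq 2B\}$. The only omission is that $k_0$ must also satisfy $k_0^{-1}\leq h_0$, so that \eqref{eq:theorem-A-zero} applies.
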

	
	\begin{proposition}
		\label{prop:exponential-rigorous}
		Let \(I\Subset(0,a_\lambda)\).  There are constants \(C_I,B_I>0\) and
		an integer \(k_I\geq1\) such that, for every \(a\in I\), every integer
		\(k\geq k_I\), and every finitely supported \(f\),
		\begin{equation*}
		\norm{e^{a\la n\ra}\la n\ra^k f}_2
		\leq C_I\norm{e^{a\la n\ra}\la n\ra^{k+1}P_\lambda f}_2
		+(B_Ik)^k\norm{e^{a\la n\ra}f}_2.
		\end{equation*}
	\end{proposition}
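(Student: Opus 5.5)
The plan is to derive the estimate from the first Carleman estimate in Theorem~\ref{thm:carleman-estimates}. I will pass to the rescaled lattice, localize with a dyadic partition of unity in $R_h$, and absorb the commutator errors. The region of small radius is handled trivially; it produces the term $(B_Ik)^k\norm{e^{a\la n\ra}f}_2$.

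\emph{Setup.} Fix $a\in I$ and an integer $k\geq k_I$, where $k_I$ is chosen so that $h=k^{-1}\leq h_I$. Given a finitely supported $f$ on $\Z^2$, define $v$ on $\Lambda_h$ by $v(hn)=e^{\phi_{a,h}(hn)/h}f(n)$. By \eqref{eq:weight-scaling},
\[
v(hn)=h^ke^{a\la n\ra}\la n\ra^kf(n),
\qquad
(\mathcal A_{a,h}v)(hn)=h^ke^{a\la n\ra}\la n\ra^k(P_\lambda f)(n).
\]
The second identity holds because $\mathcal A_{a,h}$ is the conjugate of $P_\lambda$ by $e^{\phi_{a,h}/h}$.

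\emph{Cutoffs.} Choose $\psi\in C_c^\infty([1/2,4])$ so that $\sum_{j\geq0}\psi(2^{-j}t)^2=1$ for $t\geq1$, with finite overlap. Put $\chi_\rho(x)=\psi(R_h(x)/\rho)$ for dyadic $\rho=2^jB$, where $B\geq B_I$ is a large constant fixed later. Then $\operatorname{supp}\chi_\rho v\subset\Omega_\rho$.

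\emph{Localized Carleman bound.} Apply \eqref{eq:theorem-A-positive-a} to $\chi_\rho v$:
\[
h\rho^{-1}\norm{\chi_\rho v}_2^2\leq 2C_I\norm{\chi_\rho\mathcal A_{a,h}v}_2^2+2C_I\norm{[\mathcal A_{a,h},\chi_\rho]v}_2^2 .
\]
From \eqref{eq:exact-conjugation} the commutator is explicit:
\[
([\mathcal A_{a,h},\chi_\rho]v)(x)=\sum_{|e|_1=1}e^{-d_e(x)}\bigl(\chi_\rho(x+he)-\chi_\rho(x)\bigr)v(x+he).
\]
We have $|d_e|\leq C_{\bar a}$ by \eqref{g12}, and $|\chi_\rho(x+he)-\chi_\rho(x)|\leq Ch\rho^{-1}$ by \eqref{eq:radius-derivative-bounds}. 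Hence
\[
\norm{[\mathcal A_{a,h},\chi_\rho]v}_2\leq C_Ih\rho^{-1}\norm{v\mathbf 1_{\Omega_\rho^*}}_2 .
\]
Multiplying through by $\rho/h$ gives
\[
\norm{\chi_\rho v}_2^2\leq C_I\rho h^{-1}\norm{\chi_\rho\mathcal A_{a,h}v}_2^2+C_Ih\rho^{-1}\norm{v\mathbf 1_{\Omega^*_\rho}}_2^2 .
\]
On $\operatorname{supp}\chi_\rho$ we have $\rho\sim R_h(hn)=h\la n\ra$. Therefore $\rho h^{-1}\lesssim\la n\ra$ there, and the first term is bounded by $C_I h^{2k}\norm{e^{a\la n\ra}\la n\ra^{k+1/2}P_\lambda f}_2^2$. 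This is at most the required $\la n\ra^{k+1}$-weighted norm.

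\emph{Summation and absorption.} Summing over dyadic $\rho\geq B$ (the overlaps are bounded) gives
\[
\norm{v\mathbf 1_{\{R_h\geq B\}}}_2^2\leq C_Ih^{2k}\norm{e^{a\la n\ra}\la n\ra^{k+1}P_\lambda f}_2^2+C_IhB^{-1}\norm{v}_2^2 .
\]
Split $\norm v_2^2=\norm{v\mathbf 1_{\{R_h\geq B\}}}_2^2+\norm{v\mathbf 1_{\{R_h<B\}}}_2^2$. The error $C_IhB^{-1}\norm v_2^2$ falls on both pieces. Choosing $B$ large makes $C_IhB^{-1}\leq\frac12$, so the first piece can be absorbed into the left-hand side. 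The second piece is the inner region $\la n\ra< Bk$ by \eqref{eq:scaled-exterior-equivalence}. There $h^k\la n\ra^k\leq B^k$, so
\[
\norm{v\mathbf 1_{\{R_h<B\}}}_2\leq B^k\norm{e^{a\la n\ra}f}_2 .
\]
Adding this inner piece to the outer estimate and dividing by $h^{k}=k^{-k}$ yields
\[
\norm{e^{a\la n\ra}\la n\ra^kf}_2\leq C_I\norm{e^{a\la n\ra}\la n\ra^{k+1}P_\lambda f}_2+C(Bk)^k\norm{e^{a\la n\ra}f}_2 .
\]
The constant $C$ in front of $(Bk)^k$ is absorbed by enlarging $B$ to $B_I$.

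\emph{Main obstacle.} The delicate part is the bookkeeping in the absorption. One must check three things. First, the commutator error is genuinely of relative size $h\rho^{-1}$, with constants uniform in $a\in I$. Second, the smallest dyadic shell near $R_h\sim B$ is either covered by the partition or cleanly assigned to the inner region. Third, every constant ($C_I$, $B_I$, $h_I$) depends only on $I$ and not on $k$. All three come from Lemma~\ref{lem:comparison-symbols} and from the uniformity in Theorem~\ref{thm:carleman-estimates}.
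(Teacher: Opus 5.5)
Your proposal is correct and follows essentially the paper's route. You conjugate to $v=e^{\phi_{a,h}/h}f$, localize with a dyadic partition in $R_h$, apply the first Carleman estimate on each shell, absorb the $O(h\rho^{-1})$ commutator errors by taking $B$ large, and bound the region $R_h<B$ (i.e.\ $\la n\ra<Bk$) trivially by $(Bk)^k\norm{e^{a\la n\ra}f}_2$. The only difference is cosmetic: the paper packages the dyadic step as Lemma~\ref{lem:exterior-carleman} and then applies it to $\chi v$ with a separate inner cutoff, whereas you let the innermost shell's commutator error spill directly into the inner region (and your bookkeeping $\rho h^{-1}\lesssim\la n\ra$ even yields the slightly stronger weight $\la n\ra^{k+1/2}$).
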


	Towards proving the above, we first present a preliminary
	lemma (recall the definition of $R_h$ from \eqref{norm12}).

	\begin{lemma}\label{lem:exterior-carleman}
		Let \(I\Subset(0,a_\lambda)\).  There are \(B_I\geq1\) and \(C_I,h_I>0\)
		such that, for \(a\in I\), \(0<h\leq h_I\), and every finitely supported
		\(w\) satisfying
		\(\operatorname{supp}w\subset\{R_h\geq B_I\}\), one has
		\begin{equation}
		h^{1/2}\norm w_2
		\leq C_I\norm{R_h\mathcal A_{a,h}w}_2.
		\label{eq:exterior-carleman}
		\end{equation}
		
		There are \(B\geq1\) and \(C,h_0>0\), independent of \(I\), such that,
		for \(0<h\leq h_0\) and every finitely supported \(w\) satisfying
		\(\operatorname{supp}w\subset\{R_h\geq B\}\), one has  
		\begin{equation}
		h^{1/2}\norm w_2
		\leq C\norm{R_h\mathcal A_{0,h}w}_2.
		\label{eq:exterior-carleman-zero}
		\end{equation}
		where on the right hand side $R_h$ acts multiplicatively.
	\end{lemma}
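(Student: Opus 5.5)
The plan is a dyadic decomposition of the exterior region. Each dyadic piece is handled with the annular Carleman estimate, Theorem~\ref{thm:carleman-estimates}, and the pieces are summed with a commutator error that is absorbed.

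\emph{Step 1: partition.} Fix a smooth partition of unity on $\{R_h\geq B\}$ adapted to dyadic annuli. Concretely, take $\rho_k=2^k B$ and functions $\chi_k(x)=\chi(R_h(x)/\rho_k)$ with $\chi$ supported in $[1/2,4]$ and $\sum_k\chi_k^2=1$ on $\{R_h\geq B\}$. Then $w_k:=\chi_kw$ is supported in $\Omega_{\rho_k}$. Each point lies in at most a bounded number of supports, and $\|w\|_2^2=\sum_k\|w_k\|_2^2$.

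\emph{Step 2: Carleman on each annulus.} Apply \eqref{eq:theorem-A-positive-a} to each $w_k$:
\[
h\rho_k^{-1}\|w_k\|_2^2\leq C_I\|\mathcal A_{a,h}w_k\|_2^2.
\]
Multiply by $\rho_k^2$ and use $R_h\simeq\rho_k$ on the support of $\mathcal A_{a,h}w_k$. This support lies in $\Omega_{\rho_k}^*$ once $h\ll\rho_k$, which holds since $\rho_k\geq B\geq1\geq h$. The result is
\[
h\rho_k\|w_k\|_2^2\leq C_I\|R_h\mathcal A_{a,h}w_k\|_2^2.
\]
Since $\rho_k\geq B\geq 1$, the left side dominates $h\|w_k\|_2^2$. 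Note that we have a spare factor $\rho_k$, and it will be used to absorb errors.

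\emph{Step 3: commutator.} Write $\mathcal A_{a,h}w_k=\chi_k\mathcal A_{a,h}w+[\mathcal A_{a,h},\chi_k]w$. By \eqref{eq:exact-conjugation}, $\mathcal A_{a,h}$ is a sum of weighted nearest-neighbour shifts $M_{e^{-d_e}}T_e$ whose coefficients are bounded by \eqref{g12}. Hence
\[
[\mathcal A_{a,h},\chi_k]w(x)=\sum_e e^{-d_e(x)}\bigl(\chi_k(x+he)-\chi_k(x)\bigr)w(x+he),
\]
and each difference is $O(h/\rho_k)$. Therefore $\|R_h[\mathcal A_{a,h},\chi_k]w\|_2\leq Ch\|\tilde\chi_kw\|_2$, where $\tilde\chi_k$ is the indicator of a slightly enlarged annulus. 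Summing over $k$ with bounded overlap gives
\[
\sum_k h\rho_k\|w_k\|_2^2\leq C_I\|R_h\mathcal A_{a,h}w\|_2^2+C_Ih^2\|w\|_2^2.
\]
The left side is at least $hB\|w\|_2^2$. The error term $C_Ih^2\|w\|_2^2$ is at most $\tfrac12 hB\|w\|_2^2$ once $h\leq h_I$ is small; alternatively one can enlarge $B_I$, since the left side carries a factor $B$. Absorbing it yields \eqref{eq:exterior-carleman}.

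\emph{The case $a=0$.} The argument is identical, now using \eqref{eq:theorem-A-zero}. This estimate only gives $h\rho_k^{-2}\|w_k\|^2\leq C\|\mathcal A_{0,h}w_k\|^2$, so after multiplying by $\rho_k^2$ the left side is $h\|w_k\|_2^2$, with no spare factor. The commutator error is $O(h^2)\|w\|^2$ against $h\|w\|^2$. It is therefore absorbed only by taking $h\leq h_0$ small, which is exactly why $h_0$ appears in the statement.

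\emph{Main obstacle.} The obstacle is controlling the commutator in the $a=0$ case, where there is no room to spare in $\rho$ and we must rely on smallness of $h$. One must also check that the coefficients $e^{-d_e}$ stay uniformly bounded, which is given by \eqref{g12}, so that the commutator bound is uniform in $k$.
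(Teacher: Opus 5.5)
Your proposal is correct and follows the paper's proof essentially step by step: a dyadic partition with $\sum\psi_j^2=1$, the annular Carleman estimate on each piece, the commutator bound $O(h\rho_j^{-1})$, bounded overlap, and absorption of the error. For $a\in I$ the absorption uses your spare factor $\rho_k\geq B$, which is the same bookkeeping as the paper's $\sum_j\rho_j^{-1}\lesssim B^{-1}$; for $a=0$ it uses smallness of $h$. One cosmetic point: a single dilated profile $\chi(R_h/\rho_k)$ over $k\geq0$ cannot give $\sum_k\chi_k^2=1$ exactly near $R_h=B$, so the first cutoff must be modified, which is routine.
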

	
	\begin{proof}
		We prove \eqref{eq:exterior-carleman} and
		\eqref{eq:exterior-carleman-zero} simultaneously.  Throughout the proof,
		\(B\) and \(C\) denote the constants for the case under consideration;
		for \eqref{eq:exterior-carleman} they may depend on \(I\), whereas for
		\eqref{eq:exterior-carleman-zero} they are independent of \(I\).
		Set
		\begin{equation}\label{g4}
		\rho_j=2^jB,
		\qquad j=0,1,2,\ldots.
		\end{equation}
		Choose \(\{\psi_j\}_{j\geq0}\) such that
		\begin{equation}\label{g3}
		\sum_{j\geq0}\psi_j^2=1
		\quad\text{on }\{R_h\geq B\},
		\qquad
		\operatorname{supp}\psi_j\subset\Omega_{\rho_j},
		\qquad
		|\partial_x^\beta\psi_j|\leq C\rho_j^{-|\beta|},
		\quad |\beta|=0,1.
		\end{equation}
		By \eqref{eq:exact-conjugation},
		\begin{equation}
		[\mathcal A_{a,h},\psi_j]w(x)
		=\sum_{|e|_1=1}e^{-d_e(x)}
		\bigl(\psi_j(x+he)-\psi_j(x)\bigr)w(x+he).
		\label{eq:annular-commutator-identity}
		\end{equation}
		By \eqref{eq:weight-derivative-bounds}, for $x$ with $R_h(x)\geq B$, 
		\begin{equation}\label{g1}
		|\nabla\phi_{a,h}(x)|
		\leq C.
		\end{equation}
		For $x$ with $R_h(x)\geq B$, \eqref{gde} and \eqref{g1} imply
		\begin{equation}
		|d_e(x)|\leq C.
		\label{eq:annular-exponential-factor}
		\end{equation}
		By \eqref{g3},
		\begin{equation}
		\left|
		\bigl(\psi_j(x+he)-\psi_j(x)\bigr)w(x+he)
		\right|
		\leq Ch\rho_j^{-1}
		\mathbf 1_{\Omega_{\rho_j}^*}(x+he)|w(x+he)|.
		\label{eq:annular-cutoff-factor}
		\end{equation}
		By \eqref{eq:annular-commutator-identity},
		\eqref{eq:annular-exponential-factor}, and
		\eqref{eq:annular-cutoff-factor}, we obtain
		\begin{equation}
		\norm{[\mathcal A_{a,h},\psi_j]w}_2
		\leq Ch\rho_j^{-1}
		\norm{\mathbf 1_{\Omega_{\rho_j}^*}w}_2.
		\label{eq:annular-commutator}
		\end{equation}
		By \eqref{g3} and the support assumption on \(w\),
		\begin{equation}
		\sum_j\norm{\psi_jw}_2^2=\norm w_2^2.
		\label{eq:dyadic-partition-norm}
		\end{equation}
		Let \(a\in I\).  By \eqref{eq:theorem-A-positive-a},
		\eqref{eq:annular-commutator}, \eqref{eq:dyadic-partition-norm}, and
		\begin{equation*}
		\mathcal A_{a,h}(\psi_jw)
		=\psi_j\mathcal A_{a,h}w+[\mathcal A_{a,h},\psi_j]w,
		\end{equation*}
		we obtain
		\begin{equation}
		h\norm w_2^2
		\leq
		C\sum_j\rho_j\norm{\psi_j\mathcal A_{a,h}w}_2^2
		+Ch^2\sum_j\rho_j^{-1}
		\norm{\mathbf 1_{\Omega_{\rho_j}^*}w}_2^2.
		\label{eq:positive-a-dyadic}
		\end{equation}
		By \eqref{g4},
		\begin{align}
		\sum_j\rho_j^{-1}
		\norm{\mathbf 1_{\Omega_{\rho_j}^*}w}_2^2
		&\leq\norm w_2^2\sum_{j\geq0}\rho_j^{-1} \notag\\
		&\leq CB^{-1}\norm w_2^2.
		\label{eq:positive-a-overlap}
		\end{align}
		By \eqref{eq:positive-a-dyadic},
		\eqref{eq:positive-a-overlap}, and
		\(ChB^{-1}\leq1/2\),
		\begin{equation}
		\frac h2\norm w_2^2
		\leq C\sum_j\rho_j
		\norm{\psi_j\mathcal A_{a,h}w}_2^2.
		\label{eq:positive-a-dyadic-final}
		\end{equation}
		Let \(a=0\).  By \eqref{eq:theorem-A-zero},
		\eqref{eq:annular-commutator}, and
		\eqref{eq:dyadic-partition-norm},
		\begin{equation}
		h\norm w_2^2
		\leq
		C\sum_j\rho_j^2\norm{\psi_j\mathcal A_{0,h}w}_2^2
		+Ch^2\sum_j
		\norm{\mathbf 1_{\Omega_{\rho_j}^*}w}_2^2.
		\label{eq:zero-a-dyadic}
		\end{equation}
		Since each $x$ belongs to only finitely many sets
		\(\Omega_{\rho_j}^*\),
		\begin{equation}
		\sum_j\norm{\mathbf 1_{\Omega_{\rho_j}^*}w}_2^2
		\leq C\norm w_2^2.
		\label{eq:zero-a-overlap}
		\end{equation}
		By \eqref{eq:zero-a-dyadic}, \eqref{eq:zero-a-overlap}, and
		\(Ch\leq1/2\),
		\begin{equation}
		\frac h2\norm w_2^2
		\leq C\sum_j\rho_j^2
		\norm{\psi_j\mathcal A_{0,h}w}_2^2.
		\label{eq:zero-a-dyadic-final}
		\end{equation}
		If \(\psi_j(x)\neq0\), then \(x\in\Omega_{\rho_j}\), and
		\begin{equation}\label{g6}
		\rho_j^2|\psi_j(x)|^2
		\leq CR_h(x)^2|\psi_j(x)|^2.
		\end{equation}
		By \eqref{g3}, for every $x$,
		\begin{equation}\label{g7}
		\sum_{j}|\psi_j(x)|^2\leq C.
		\end{equation}
		By \eqref{g6} and \eqref{g7},
		\begin{equation}
		\sum_j\rho_j^\ell|\psi_j(x)|^2
		\leq CR_h(x)^2,
		\qquad \ell=1,2.
		\label{eq:weighted-dyadic-partition}
		\end{equation}
		By \eqref{eq:weighted-dyadic-partition},
		\begin{align}
		\sum_j\rho_j^\ell\norm{\psi_j \mathcal A_{a,h}w}_2^2
		&=\sum_{x\in\Lambda_h}| \mathcal A_{a,h}w(x)|^2
		\sum_j\rho_j^\ell|\psi_j(x)|^2\\
		&\leq C\sum_{x\in\Lambda_h}R_h(x)^2|\mathcal A_{a,h}w(x)|^2\\
		&=C\norm{R_h \mathcal A_{a,h}w}_2^2,
		\qquad \ell=1,2.\label{g8}
		\end{align}
		By \eqref{eq:positive-a-dyadic-final},
		\eqref{eq:zero-a-dyadic-final}, and \eqref{g8},
		\begin{equation*}
		h\norm w_2^2
		\leq C\norm{R_h\mathcal A_{a,h}w}_2^2.
		\end{equation*}
		This proves \eqref{eq:exterior-carleman} and
		\eqref{eq:exterior-carleman-zero}.
	\end{proof}
	
	\begin{proof}[\bf Proofs of Propositions~\ref{prop:polynomial-rigorous} and
		\ref{prop:exponential-rigorous}]
		We prove the cases \(a\in I\) and \(a=0\) simultaneously.  Throughout
		this proof, \(B\) and \(C\) denote the constants for the case under
		consideration; when \(a\neq0\), they may depend on \(I\), whereas when
		\(a=0\), they are independent of \(I\).

		Set \(h=k^{-1}\),
		regard \(f\) as a function on \(\Lambda_h\) by
		setting \(f(hn)=f(n)\), and put
		\begin{equation*}
		u=e^{\phi_{a,h}/h}f.
		\end{equation*}
		Choose \(\chi_0\in C^\infty([0,\infty);[0,1])\) such that
		\begin{equation*}
		\chi_0(t)=0\quad\text{for }t\leq 3,
		\qquad
		\chi_0(t)=1\quad\text{for }t\geq 4,
		\end{equation*}
		and define
		\begin{equation*}
		\chi(x)=\chi_0\left(\frac{R_h(x)}{B}\right).
		\end{equation*}
		Then \(\operatorname{supp}(\chi u)\subset\{R_h\geq B\}\), so
		Lemma~\ref{lem:exterior-carleman} applies to \(\chi u\).\\
		
		By the definition of \(\mathcal A_{a,h}\),
		\begin{equation*}
		\mathcal A_{a,h}u=e^{\phi_{a,h}/h}P_\lambda f.
		\end{equation*}
		Hence
		\begin{equation*}
		\mathcal A_{a,h}(\chi u)
		=\chi e^{\phi_{a,h}/h}P_\lambda f
		+[\mathcal A_{a,h},\chi]u.
		\end{equation*}
		
		{We first estimate the cutoff commutator}.  By
		\eqref{eq:exact-conjugation},
		\begin{equation}
		(R_h[\mathcal A_{a,h},\chi]u)(x)
		=R_h(x)\sum_{|e|_1=1}e^{-d_e(x)}
		\bigl(\chi(x+he)-\chi(x)\bigr)u(x+he).
		\label{eq:cutoff-commutator-identity}
		\end{equation}
		Since \(|\nabla R_h|\leq1\),
		\begin{equation}
		|R_h(x+he)-R_h(x)|\leq h,
		\qquad
		|\chi(x+he)-\chi(x)|\leq ChB^{-1}.
		\label{eq:cutoff-difference}
		\end{equation}
		If $R_h(x)\geq5B$ or $R_h(x)\leq2B$, then
		\eqref{eq:cutoff-difference} gives
		\begin{equation}
		\bigl(\chi(x+he)-\chi(x)\bigr)=0.
		\end{equation}
		Therefore, in \eqref{eq:cutoff-commutator-identity}, we need only
		consider $x$ such that $2B\leq R_h(x)\leq5B$.
		
		Thus \eqref{eq:cutoff-commutator-identity},
		\eqref{eq:cutoff-difference}, and
		\eqref{eq:annular-exponential-factor} give
		\begin{equation}
		\norm{R_h[\mathcal A_{a,h},\chi]u}_2
		\leq Ch
		\norm{\mathbf 1_{\{B\leq R_h\leq 6B\}}u}_2.
		\label{eq:cutoff-commutator}
		\end{equation}

		By \eqref{eq:exterior-carleman}--\eqref{eq:exterior-carleman-zero} and
		\eqref{eq:cutoff-commutator},
		\begin{equation}
		h^{1/2}\norm{\chi u}_2
		\leq C\norm{R_he^{\phi_{a,h}/h}P_\lambda f}_2
		+Ch
		\norm{\mathbf 1_{\{B\leq R_h\leq 6B\}}u}_2.
		\label{eq:cutoff-exterior-estimate}
		\end{equation}
		Since $\chi(x)=1 $ for $R_h(x)\geq 4B$, \eqref{eq:cutoff-exterior-estimate} implies
		\begin{equation}
		h^{1/2}\norm{u}_2
		\leq C\norm{R_he^{\phi_{a,h}/h}P_\lambda f}_2
		+Ch
		\norm{\mathbf 1_{\{B\leq R_h\leq 6B\}}u}_2+h^{1/2}\norm{\mathbf 1_{\{ R_h\leq 4B\}}u}_2.
		\label{eq:cutoff-exterior-estimate1}
		\end{equation}
		At \(x=hn\), \eqref{eq:scaled-radius} and
		\eqref{eq:weight-scaling} give
		\begin{align}
		u(hn)
		&=(e^{\phi_{a,h}/h}f)(hn)=h^ke^{a\la n\ra}\la n\ra^kf(n), \label{crudebound}\\
		R_h(hn)e^{\phi_{a,h}(hn)/h}(P_\lambda f)(n)
		&=h^{k+1}e^{a\la n\ra}\la n\ra^{k+1}(P_\lambda f)(n).
		\label{eq:cutoff-scaling-identities}
		\end{align}
		Thus, by \eqref{eq:scaled-exterior-equivalence} and  \eqref{crudebound}
		\begin{align}
		\norm{\mathbf 1_{ R_h(x)\leq 6B\}}u}_2
		&\leq h^k(Bk)^k\norm{e^{a\la n\ra}f}_2.
		\label{eq:cutoff-interior-bounds}
		\end{align}

		Equations \eqref{eq:cutoff-exterior-estimate1}--
		\eqref{eq:cutoff-interior-bounds} give
		\begin{align*}
		h^{k+1/2}\norm{e^{a\la n\ra}\la n\ra^kf}_2
		&\leq Ch^{k+1}
		\norm{e^{a\la n\ra}\la n\ra^{k+1}P_\lambda f}_2+h^{k+1/2}(Bk)^k
		\norm{e^{a\la n\ra}f}_2.
		\end{align*}
		Dividing by \(h^{k+1/2}\) proves
		Proposition~\ref{prop:exponential-rigorous}. 
		Taking \(a=0\) in the same argument proves
		Proposition~\ref{prop:polynomial-rigorous}.

	\end{proof}
	
	\section{Absence of bulk, non-critical eigenvalues}\label{sec:exponential-decay}
	The main goal of this section is to prove
	Theorem~\ref{thm1:absence}.  Accordingly, throughout the remainder of the
	paper, we assume that $0<\lambda<4$,
	\[
	|V(n)|\leq C\la n\ra^{-\kappa},
	\qquad \kappa>1,
	\]
	and that $u\in\ell^2(\Z^2)$ satisfies
	$(\Delta+V)u=\lambda u$.  We set \(\tau=\kappa-1>0\).
	
	\subsection{Exponentially decaying eigensolutions}
	In this subsection, we prove that every such $\ell^2$ solution decays
	exponentially. We start with a preliminary lemma. 
	\begin{lemma}\label{lem:fixed-division-note}
		For every real \(s\geq1\), there is \(C_{s,\lambda}>0\) such that,
		whenever \(g\in L^2(\T^2)\) and
		\(p_\lambda g\in H^s(\T^2)\), one has
		\(g\in H^{s-1}(\T^2)\) and
		\begin{equation}\label{eq:real-order-division}
		\norm{g}_{H^{s-1}(\T^2)}
		\leq C_{s,\lambda}\norm{p_\lambda g}_{H^s(\T^2)}.
		\end{equation}
	\end{lemma}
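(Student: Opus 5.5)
The plan is a local division argument. It rests on the fact that for $0<|\lambda|<4$ the zero set $Z=\{p_\lambda=0\}\subset\T^2$ is a smooth compact curve on which $\nabla p_\lambda$ does not vanish. Indeed, $\nabla p_\lambda=-2(\sin\xi_1,\sin\xi_2)$ vanishes only where $\cos\xi_j=\pm1$, and there $p_\lambda\in\{4-\lambda,-\lambda,-4-\lambda\}$, none of which is zero. So $|\nabla p_\lambda|\ge c_\lambda>0$ on a neighbourhood $\mathcal N$ of $Z$. Set $F:=p_\lambda g\in H^s(\T^2)$. Since $p_\lambda\neq0$ off $Z$ and $Z$ has measure zero, $g=F/p_\lambda$ a.e. It therefore suffices to show that the map $F\mapsto F/p_\lambda$, restricted to those $F\in H^s$ for which this quotient lies in $L^2$, is bounded $H^s\to H^{s-1}$. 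The bound must be in terms of $\norm F_{H^s}$ only.

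I will localize with a finite smooth partition of unity $\{\chi_0,\chi_1,\dots,\chi_N\}$, where $\operatorname{supp}\chi_0\cap Z=\emptyset$ and each $\chi_k$, $k\ge1$, is supported in a small coordinate patch meeting $Z$. On $\operatorname{supp}\chi_0$, the function $1/p_\lambda$ is smooth, so $\chi_0F/p_\lambda\in H^s$ with norm at most $C\norm F_{H^s}$.

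On a patch meeting $Z$, the implicit function theorem gives a smooth diffeomorphism $\Psi$ onto a box, with coordinates $(y,z)$, in which $p_\lambda\circ\Psi^{-1}(y,z)=y$. Diffeomorphisms and multiplication by smooth cutoffs preserve $H^r$ for every real $r\ge0$. Writing $G=(\tilde\chi F)\circ\Psi^{-1}$ for a slightly larger cutoff $\tilde\chi$, the problem reduces to the following claim: if $G\in H^s$ has compact support in the box and $G/y\in L^2$, then
\[
\norm{G/y}_{H^{s-1}}\le C\norm G_{H^s}.
\]

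To prove the claim, note first that $G/y\in L^2$ forces the trace $G(0,\cdot)$ to vanish; this trace exists because $s\ge1>1/2$. Hadamard's formula then gives
\[
\frac{G(y,z)}{y}=\int_0^1(\partial_yG)(ty,z)\,dt.
\]
Since $\partial_yG\in H^{s-1}$, it remains to show that the averaging operator
\[
\mathcal M H(y,z)=\int_0^1H(ty,z)\,dt
\]
is bounded on $H^r$ for $r\ge0$, after a harmless cutoff. For integer $r$, derivatives commute with $\mathcal M$ up to factors $t^{j}$ with $j\ge0$, so it suffices to bound $\mathcal M$ on $L^2$. By Minkowski's inequality, $\norm{H(t\cdot,\cdot)}_{L^2}=t^{-1/2}\norm H_{L^2}$, and $\int_0^1t^{j-1/2}\,dt<\infty$. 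For fractional $r$, I will use complex interpolation between consecutive integers. Summing over the patches yields \eqref{eq:real-order-division}. Finally, $g\in H^{s-1}$ because $g=F/p_\lambda$ a.e.

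The main obstacle is the trace step. The argument needs to use $g\in L^2$ only qualitatively, to force $G|_{y=0}=0$ and thus justify the Hadamard representation, while the quantitative bound involves $\norm F_{H^s}$ alone. I would handle this first for smooth $G$ vanishing on $\{y=0\}$, and then extend by density in the closed subspace $\{G\in H^s:G|_{y=0}=0\}$, using continuity of the trace map on $H^s$ for $s>1/2$.
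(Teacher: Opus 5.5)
Your proposal is correct and follows essentially the same route as the paper. Both arguments straighten $p_\lambda$ to a coordinate near its nondegenerate zero curve and write the quotient as the Hadamard average $\int_0^1\partial_t F(\theta t,\cdot)\,d\theta$. Both then bound integer-order derivatives via the $\theta^{-1/2}$ dilation estimate and Minkowski's inequality, interpolate to real $s$, and patch together with a partition of unity. The only cosmetic difference is how the vanishing of $F$ on the zero curve is justified: the paper uses Fubini slicing ($F(\cdot,y)\in H^1$ for a.e.\ $y$, while $F(t,y)/t$ must be square integrable in $t$), which is lighter than your trace-plus-density argument.
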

	
	\begin{proof}
		Put \(f=p_\lambda g\) and
		\[
		\Sigma_\lambda=\{\xi\in\T^2:p_\lambda(\xi)=0\}.
		\]
		Since \(0<\lambda<4\), the gradient of \(p_\lambda\) does not vanish on
		\(\Sigma_\lambda\).  Thus, near each point of \(\Sigma_\lambda\), there
		are smooth coordinates \((t,y)\) on a ball
		\(B_\varepsilon\subset\R^2\) centered at the origin such that
		\(t=p_\lambda(\xi)\).  {After inserting a smooth cutoff, write
			\[
			U=(\chi g)\circ\Phi^{-1},
			\qquad
			F=(\chi f)\circ\Phi^{-1}.
			\]}
		Then \(F=tU\), with \(U\in L^2(B_\varepsilon)\) and
		\(F\in H^s(B_\varepsilon)\).  We prove the required estimate in these
		coordinates.
		
		For a smooth function \(F\) on \(B_\varepsilon\), define
		\[
		(\mathcal DF)(t,y)
		=\int_0^1\partial_tF(\theta t,y)\,d\theta.
		\]
		Let \(m\geq1\) be an integer.  If \(j,k\geq0\) and
		\(j+k\leq m-1\), then
		\[
		\partial_t^j\partial_y^k\mathcal DF(t,y)
		=\int_0^1\theta^j
		(\partial_t^{j+1}\partial_y^kF)(\theta t,y)\,d\theta.
		\]
		The ball \(B_\varepsilon\) is invariant under
		\((t,y)\mapsto(\theta t,y)\) for \(0<\theta\leq1\), and a change of
		variable in \(t\) gives
		\[
		\norm{G(\theta\,\cdot,\cdot)}_{L^2(B_\varepsilon)}
		\leq\theta^{-1/2}\norm G_{L^2(B_\varepsilon)}.
		\]
		Applying this estimate with
		\(G=\partial_t^{j+1}\partial_y^kF\), Minkowski's inequality gives
		\begin{align*}
		\norm{\partial_t^j\partial_y^k\mathcal DF}_{L^2(B_\varepsilon)}
		&\leq\int_0^1\theta^j
		\norm{G(\theta\,\cdot,\cdot)}_{L^2(B_\varepsilon)}\,d\theta\\
		&\leq\left(\int_0^1\theta^{j-1/2}\,d\theta\right)
		\norm G_{L^2(B_\varepsilon)}\\
		&=\frac{1}{j+\frac12}
		\norm{\partial_t^{j+1}\partial_y^kF}_{L^2(B_\varepsilon)}.
		\end{align*}
		This implies 
		\[
		\norm{\mathcal DF}_{H^{m-1}(B_\varepsilon)}
		\leq C_m\norm F_{H^m(B_\varepsilon)}.
		\]
		and hence
		\[
		\norm{\mathcal DF}_{H^{m}(B_\varepsilon)}
		\leq C_{m+1}\norm F_{H^{m+1}(B_\varepsilon)}.
		\]
		The interpolation leads to
		\begin{equation}\label{eq:local-real-order-division}
		\norm{\mathcal DF}_{H^{s-1}(B_\varepsilon)}
		\leq C_s\norm F_{H^s(B_\varepsilon)}.
		\end{equation}
		
		It remains to identify \(\mathcal DF\).  Since \(F\in H^1(B_\varepsilon)\),
		Fubini's theorem shows that \(F(\,\cdot\,,y)\in H^1\) for almost every
		\(y\).  If \(F(0,y)\neq0\), then \(U(t,y)=F(t,y)/t\) is not square
		integrable near \(t=0\).  Therefore \(F(0,y)=0\) for almost every \(y\),
		\[
		\mathcal DF(t,y)
		=U(t,y).
		\]
		
		The local estimate now follows from
		\eqref{eq:local-real-order-division}. \\
		
		{Now using the compactness of $\T^2$ and $\Sigma_\lambda,$ we can cover the latter by small balls and apply the above local estimate.  {Smooth cutoffs and coordinate
				changes keeps the Sobolev norm bounded. Finally we can sum over a finite partition
				of unity on the cover. Off the cover, 
				\(p_\lambda\) is away from zero and hence dividing by it does not change the Sobolev norm either up to constants. Together this proves
				\eqref{eq:real-order-division}.}}
	\end{proof}
	
	\begin{proposition}\label{thm:subcritical-exponential-decay}
		Suppose that $V:\Z^2\to\mathbb C$ satisfies
		\[
		|V(n)|\leq C\la n\ra^{-\kappa}
		\]
		for some $\kappa>1$.  If
		\[
		(\Delta+V)u=\lambda u,
		\qquad
		u\in\ell^2(\Z^2),
		\] 
		then
		\[
		e^{a\la n\ra}u\in\ell^2(\Z^2)
		\qquad\text{for every }0<a<a_\lambda.
		\]
	\end{proposition}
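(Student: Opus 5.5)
The plan has three stages: superpolynomial decay, a small exponential rate, then a bootstrap of the rate up to any $a<a_\lambda$. Throughout, write the equation as $P_\lambda u=-Vu$ with $|V(n)|\le C\la n\ra^{-1-\tau}$.

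\emph{Stage 1 (decay faster than any power).} Pass to Fourier space, where $p_\lambda\widehat u=-\widehat{Vu}$. If $\la n\ra^{s}u\in\ell^2$, then $\la n\ra^{s+\kappa}Vu\in\ell^2$, so $p_\lambda\widehat u\in H^{s+\kappa}(\T^2)$. Lemma~\ref{lem:fixed-division-note} then gives $\widehat u\in H^{s+\kappa-1}$, that is, $\la n\ra^{s+\tau}u\in\ell^2$. Starting from $s=0$ and iterating gains $\tau$ at each step, so $\la n\ra^{k}u\in\ell^2$ for every $k$.

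\emph{Stage 2 (a quantitative factorial bound).} Apply Proposition~\ref{prop:polynomial-rigorous} to $f=\theta_N u$, where $\theta_N$ is a smooth cutoff at scale $N$. We have $P_\lambda f=-\theta_N Vu+[\Delta,\theta_N]u$.
\begin{itemize}[leftmargin=*]
\item The commutator is $O(N^{-1}|u|)$ on an annulus of radius about $N$. After weighting by $\la n\ra^{k+1}$ it tends to $0$ as $N\to\infty$, by Stage 1.
\item The potential term satisfies $\la n\ra^{k+1}|V|\le C\la n\ra^{k-\tau}$. Split it at radius $R$. The exterior part is at most $CR^{-\tau}\norm{\la n\ra^{k}u}_2$ and is absorbed into the left side once $R$ is large, independently of $k$. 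The interior part is at most $CR^{k}\norm u_2$.
\end{itemize}
Together these give $\norm{\la n\ra^k u}_2\le (B'k)^k\norm u_2$ for all large $k$. Summing $\sum_k (c\la n\ra)^k/k!$ with $c<1/(eB')$ yields $e^{c\la n\ra}u\in\ell^2$ for some small $c>0$.

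\emph{Stage 3 (bootstrap in $a$).} Fix $I\Subset(0,a_\lambda)$ containing both $c$ and the target $a$. Suppose $e^{a_0\la n\ra}u\in\ell^2$ for some $a_0\in I$. Repeat the Stage 1--2 argument with the weight $e^{a_0\la n\ra}$ and Proposition~\ref{prop:exponential-rigorous}. First induct on $k$ in steps of $\tau$ to obtain $e^{a_0\la n\ra}\la n\ra^k u\in\ell^2$ for all $k$; each step uses the same truncation and absorption as above, with constants uniform in $a_0\in I$. This gives
\[
\norm{e^{a_0\la n\ra}\la n\ra^k u}_2\le (B_I'k)^k\norm{e^{a_0\la n\ra}u}_2.
\]
Summing against $\delta^k/k!$ then gives $e^{(a_0+\delta)\la n\ra}u\in\ell^2$, with $\delta=\delta_I>0$ independent of $a_0$. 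Finitely many such steps reach any $a\in I$, and since $I$ was arbitrary this proves Proposition~\ref{thm:subcritical-exponential-decay}.

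The main obstacle is justifying the truncation at each induction step. The Carleman-based propositions only apply to finitely supported $f$, so the cutoff commutator, weighted by $e^{a_0\la n\ra}\la n\ra^{k+1}$, must vanish as $N\to\infty$. This needs the weight $\la n\ra^{k+1}e^{a_0\la n\ra}u$ at the cutoff scale to be controlled, whereas the induction only supplies $\la n\ra^{k}$-type control. The commutator carries a factor $N^{-1}$, which exactly compensates the extra power, so only boundedness (not decay) along a subsequence is guaranteed. I would handle this by choosing $N$ along a subsequence on which the annular $\ell^2$ mass of $\la n\ra^{k}e^{a_0\la n\ra}u$ tends to zero, which exists because the full sum is finite.

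A second point is that the interior term in Stage 3 now carries the exponential weight, $e^{a_0R}R^{k}$. This is harmless because $R$ is fixed independently of $k$.
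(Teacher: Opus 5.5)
Your Stages 1 and 2 are sound and follow the paper closely. The differences are cosmetic. The paper closes the inequality of Proposition~\ref{prop:polynomial-rigorous} by H\"older interpolation, $M_{k-\tau}\le M_0^{\tau/k}M_k^{1-\tau/k}$, where you split at radius $R$ and absorb. It then turns $M_k\le C(Dk)^kM_0$ into a pointwise bound by choosing $k=\lfloor\delta\la n\ra\rfloor$, where you sum the exponential series in $\ell^2$. Either choice works.

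The gap is the qualitative step in Stage 3, where you induct on $k$ to get $e^{a_0\la n\ra}\la n\ra^k u\in\ell^2$ for all $k$ from the sole hypothesis $e^{a_0\la n\ra}u\in\ell^2$. Apply Proposition~\ref{prop:exponential-rigorous} at index $k$ to $\theta_N u$. The commutator $[\Delta,\theta_N]u$ is $O(N^{-1}|u|)$ on $|n|\sim N$, so after the weight $e^{a_0\la n\ra}\la n\ra^{k+1}$ it is comparable to $\norm{e^{a_0\la n\ra}\la n\ra^{k}\mathbf 1_{\{|n|\sim N\}}u}_2$. That is exactly the level $k$ you are trying to prove finite. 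The inductive hypothesis controls only level $k-\tau$, and this term is $N^{\tau}$ times larger. By contrast, the potential term needs no hypothesis at all once you absorb it on the truncated function.

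So your accounting is off by $\tau$. The factor $N^{-1}$ cancels the Carleman's extra power of $\la n\ra$, but what remains is the target level, not a known level. Your remedy, choosing $N$ where the annular mass of $\la n\ra^k e^{a_0\la n\ra}u$ tends to zero "because the full sum is finite", therefore presupposes the conclusion. There is also no exponentially weighted analogue of Lemma~\ref{lem:fixed-division-note} to replace Stage 1 here. Moreover, Proposition~\ref{prop:exponential-rigorous} is stated only for integers $k\ge k_I$, so the induction cannot start from $k=0$.

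The repair is standard, and the paper uses it implicitly: its pointwise bounds carry a little slack. From $e^{a_0\la n\ra}u\in\ell^2$ you get $e^{a'\la n\ra}\la n\ra^k u\in\ell^2$ for every $k$ and every $a'<a_0$, because $\la n\ra^k e^{-(a_0-a')\la n\ra}$ is bounded. Run your quantitative step at $a'=a_0-\delta_I/2$, shrinking $\delta_I$ if needed so that $a'\in I$. All truncation commutators now tend to zero, and the step yields $e^{(a'+\delta_I)\la n\ra}u\in\ell^2$. This is a uniform net gain of $\delta_I/2$ per step, so finitely many steps still reach any $a<a_\lambda$.
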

	
	\begin{proof}

		Set
		\[
		M_\gamma=\norm{\la n\ra^\gamma u}_2,
		\qquad \gamma\geq0.
		\]
		Applying Lemma~\ref{lem:fixed-division-note} to the Fourier transform of
		$u$ gives
		$M_\gamma<\infty$ for every
		$\gamma\geq0$ (note that we are using the well known fact that for any function $f$  on $\Z^2$, $\norm{\widehat f}_{H^{s}(\T^2)}^2\simeq\sum_{n\in \Z^2}  \la n\ra^{2s} |f(n)|^2$).  Indeed, from $P_\lambda u=-Vu$ and
		$|V(n)|\leq C\la n\ra^{-\kappa}$, finiteness of $M_\gamma$ implies
		finiteness of $M_{\gamma+\kappa-1}$; starting with $M_0<\infty$ and
		iterating proves the claim. 		
		
			For every sufficiently large integer $k$, all norms on the right-hand side of Proposition~\ref{prop:polynomial-rigorous} are finite.  Note that the proposition applies a priori to functions with finite support. However for infinitely supported functions, if the right hand side norms are finite, one can take finite restrictions and then pass to the limit.  Thus the proposition applies to  $u$ as well.   Using this and  that $P_\lambda u=-Vu$, we obtain
		\begin{equation}\label{gaug2715}
		M_k\leq A M_{k-\tau}+(Bk)^kM_0.
		\end{equation}
		For large $k$, interpolation (which is an immediate consequence of H\"older's inequality) gives
		\begin{equation}\label{gaug2716}
		M_{k-\tau}\leq M_0^{\tau/k}M_k^{1-\tau/k}.
		\end{equation}
		By \eqref{gaug2715} and  \eqref{gaug2716},   there are
		constants $C>0$ and $D>0$ such that for large $k$,
		\[
		M_k\leq C(Dk)^kM_0.
		\]

		For every $n\in\Z^2$ and every integer $k\geq1$, it follows that
		\[
		|u(n)|
		\leq \la n\ra^{-k}M_k
		\leq C\left(\frac{Dk}{\la n\ra}\right)^kM_0.
		\]
		Choose a sufficiently small $\delta>0$ and take
		\[
		k=\lfloor \delta\la n\ra\rfloor.
		\]
		For $\la n\ra$ sufficiently large, this gives
		\[
		|u(n)|
		\leq Ce^{-\delta\la n\ra}.
		\]
		Assume that  for some $a\in I$,
		\[
		M_\gamma(a)=\norm{\la n\ra^\gamma e^{a\la n\ra}u}_2<\infty
		\qquad \text{ for all } \gamma\geq0.
		\]
By Proposition 	\ref{prop:exponential-rigorous} (with approximation)  and using $P_\lambda u=-Vu$, we obtain
		\[
		M_k(a)
		\leq A_I M_{k-\tau}(a)+(B_Ik)^kM_0(a).
		\]
		Similarly, we obtain
		\[
		M_k(a)\leq C(Dk)^kM_0(a),
		\]
		where $C$ and $D$ depend on $I$.
		Therefore, for large $k$,
		\[
		e^{a\la n\ra}|u(n)|
		\leq \la n\ra^{-k}M_k(a)
		\leq C\left(\frac{Dk}{\la n\ra}\right)^kM_0(a).
		\]
		Choose a sufficiently small $\delta>0$ and take
		\[
		k=\lfloor \delta\la n\ra\rfloor.
		\]
		For $\la n\ra$ sufficiently large, this gives
		\[
		e^{a\la n\ra}|u(n)|
		\leq C e^{-\delta\la n\ra}.
		\]
		Hence
		\[
		|u(n)|
		\leq C e^{-(a+\delta)\la n\ra}.
		\]
		We can therefore improve the pointwise exponential decay rate to
		$a+\delta$, where $\delta$ depends on $I$.  
		Iterating this process completes the proof.
	\end{proof}
	
	\subsection{Proof of Theorem~\ref{thm1:absence}}\label{sec:absence-noncritical}

	By Proposition~\ref{thm:subcritical-exponential-decay},
	$e^{a\la n\ra}u\in\ell^2(\Z^2)$ for every $0<a<a_\lambda$.
	The purpose of this section is to obtain an estimate at $a=a_\lambda$ and
	then use that estimate to prove that $u$ vanishes.
	Set
	\begin{equation}\label{eq:critical-weight}
	\rho=e^{a_\lambda}
	=1+\frac{\lambda}{2}+\frac12\sqrt{\lambda(\lambda+4)}.
	\end{equation}
	Since $\cosh a_\lambda=1+\frac{\lambda}{2}$, we have
	\begin{equation}\label{eq:critical-weight-identity}
	\rho+\rho^{-1}=2\cosh a_\lambda=2+\lambda.
	\end{equation}
	As in Section~\ref{sec:absence-zero}, write $n=(m,j)\in\Z^2$ and introduce
	\[
	U_m=(u(m,j))_{j\in\Z}\in\cH.
	\]
	Recall that $\cH=\ell^2(\Z)$ and that $S$ denotes the shift given by
	$(Sw)_j=w_{j+1}$.  Define the bounded operator
	\[
	A_\lambda=\lambda I-(S+S^*).
	\]
	For each $m\in\Z$, let $V_m$ be the multiplication operator on $\cH$
	given by
	\[
	(V_mw)_j=V(m,j)w_j.
	\]
	Set $H_m=-V_mU_m$.  The equation
	$(\Delta+V)u=\lambda u$ is then equivalent to
	\begin{equation}\label{eq:noncritical-slice-recurrence}
	U_{m+1}+U_{m-1}-A_\lambda U_m=H_m.
	\end{equation}
	We apply the partial Fourier transform $\mathcal F$ introduced in
	Subsection~\ref{subsec:partial-fourier}.
	Under this transform, $S+S^*$ becomes multiplication by
	$2\cos\theta$.  Consequently, $A_\lambda$ becomes multiplication by
	\[
	b(\theta)=\lambda-2\cos\theta.
	\]
	For $s>1$, define an operator on sequences
	$f=(f_m)_{m\in\Z}$ with $f_m\in L^2(\T)$ by
	\begin{equation}\label{eq:Ls-definition}
	(L_sf)_m(\theta)
	=s^{-1}f_{m+1}(\theta)+sf_{m-1}(\theta)
	-b(\theta)f_m(\theta).
	\end{equation}
	This definition is adapted to the weight $s^m$: if
	\[
	f_m=s^m\widehat U_m,
	\qquad \widehat U_m=\mathcal FU_m,
	\qquad \widehat H_m=\mathcal FH_m,
	\]
	then \eqref{eq:noncritical-slice-recurrence} gives
	\[
	(L_sf)_m=s^m\widehat H_m.
	\]
	
	\subsection{A uniform inverse estimate}\label{subsec:uniform-inverse-estimate}
	
	Define the Banach spaces
	\[
	X=\ell^\infty(\Z;L^2(\T)),
	\qquad
	\|f\|_X=\sup_{m\in\Z}\|f_m\|_{L^2(\T)},
	\]
	and
	\[
	Y=\ell^1(\Z;L^2(\T)),
	\qquad
	\|h\|_Y=\sum_{m\in\Z}\|h_m\|_{L^2(\T)}.
	\]
	
	\begin{proposition}\label{thm:uniform-inverse}
		Let $s_0>1$.  There is a constant $C=C(s_0,\lambda)$ such that, for
		every $s\in[s_0,\rho]$ (where $\rho$ was defined in \eqref{eq:critical-weight-identity}) and every $h\in Y$, the equation
		\[
		L_sf=h
		\]
		has a unique solution $f\in X$.  This solution satisfies
		\begin{equation}\label{eq:uniform-inverse-estimate}
		\sup_{m\in\Z}\|f_m\|_{L^2(\T)}
		\leq C\sum_{m\in\Z}\|h_m\|_{L^2(\T)}.
		\end{equation}
		At $s=\rho$, the solution also satisfies, for every $M\in\Z$,
		\begin{equation}\label{eq:one-sided-inverse-estimate}
		\sup_{m\geq M}\|f_m\|_{L^2(\T)}
		\leq C\sum_{\ell\geq M+1}\|h_\ell\|_{L^2(\T)}.
		\end{equation}
		
	\end{proposition}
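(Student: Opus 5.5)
The plan is to work pointwise in $\theta\in\T$: for each fixed $\theta$, the equation $L_sf=h$ is a scalar second-order constant-coefficient recurrence in $m$. I will construct the solution by a forward (one-sided) Green kernel, prove a kernel bound that is uniform in $\theta$ and in $s\in[s_0,\rho]$, and then deduce \eqref{eq:uniform-inverse-estimate} and \eqref{eq:one-sided-inverse-estimate} from Minkowski's inequality in $L^2(\T)$. Uniqueness will follow from the Fourier-side argument of Subsection~\ref{subsec:partial-fourier}.

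\emph{Step 1: characteristic roots.} The characteristic equation of $L_s$ is $s^{-1}r^2-b(\theta)r+s=0$. Substituting $r=sz$ reduces it to $z^2-b(\theta)z+1=0$, so $r_\pm=sz_\pm$ with $z_+z_-=1$ and $z_++z_-=b(\theta)=\lambda-2\cos\theta\in[\lambda-2,\lambda+2]\subset(-2,6)$.
\begin{itemize}
\item If $|b|\leq 2$, then $|z_\pm|=1$, so $|r_\pm|=s\geq s_0>1$.
\item If $b>2$, then $z_\pm=e^{\pm a}$ with $\cosh a=b/2\leq 1+\lambda/2=\cosh a_\lambda$. Hence $a\leq a_\lambda$ and $|r_-|=se^{-a}\geq s/\rho$.
\end{itemize}
For $s\leq\rho$ this gives $|r_\pm^{-1}|\leq 1$ for every $\theta$. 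Equality $|r_-^{-1}|=1$ occurs only when $s=\rho$ and $\theta=\pi$, and there $r_+=\rho^2$ is well separated from $r_-=1$.

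\emph{Step 2: the forward kernel.} Since both roots satisfy $|r_\pm|\geq1$, homogeneous solutions grow (or at worst stay bounded) as $m\to+\infty$. I will therefore solve backwards from $+\infty$:
\[
f_m=\sum_{\ell\geq m+1}k_{\ell-m}(\theta)\,h_\ell,
\qquad
k_j=c_s\,\frac{x^{j}-y^{j}}{x-y},\quad x=r_+^{-1},\ y=r_-^{-1},
\]
where the normalizing constant $c_s$ (of size $s$ up to constants) is fixed by requiring that the sum actually solves $L_sf=h$. This is the analogue of \eqref{eq:forward-iteration} with Chebyshev-type kernel.

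The key claim is $\sup_{j,\theta,s}|k_j|\leq C(s_0,\lambda)$. I split into two cases according to the size of $|x-y|$.
\begin{itemize}
\item If $|x-y|\geq\delta$, then $|x|,|y|\leq1$ gives $|k_j|\leq 2|c_s|/\delta$.
\item If $|x-y|<\delta$ with $\delta$ small, the roots nearly coincide, which forces $b$ to be near $2$ (since $b>-2$ is bounded away from $-2$). Then $z_\pm$ are near $1$, so $|x|,|y|\leq s_0^{-1}(1+C\delta)<1$. The mean value bound $|k_j|\leq |c_s|\,j\max(|x|,|y|)^{j-1}$ is then uniformly bounded, using $s_0>1$.
\end{itemize}
The critical point $s=\rho$, $\theta=\pi$ falls into the first case, because there the roots are far apart. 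This uniform kernel estimate is the main obstacle; the delicate part is checking that near-coalescence of roots can occur only away from the unit circle, which is exactly where the lower bound $s\geq s_0>1$ is used.

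\emph{Step 3: estimates and uniqueness.} Minkowski's inequality gives $\|f_m\|_{L^2(\T)}\leq C\sum_{\ell\geq m+1}\|h_\ell\|_{L^2(\T)}$. This yields \eqref{eq:uniform-inverse-estimate} for every $s$, and it is already \eqref{eq:one-sided-inverse-estimate} at $s=\rho$ after taking the supremum over $m\geq M$.

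For uniqueness in $X$, let $w\in X$ satisfy $L_sw=0$. On the set $\{\theta:|r_\pm|\geq1+\delta,\ |r_+-r_-|\geq\delta\}$, write $w_m=d_+r_+^m+d_-r_-^m$ and express $d_\pm r_\pm^m$ through $w_m$ and $w_{m+1}$ as in \eqref{eq:free-coefficients}. Boundedness of $\|w_m\|_{L^2}$ then forces $\|d_\pm\|_{L^2}\leq C\delta^{-1}(1+\delta)^{-m}\to0$. Letting $\delta\to0$, and noting that the exceptional set where these conditions fail for every $\delta>0$ has measure zero (it is contained in $\{\theta=\pi\}$ when $s=\rho$), we get $w=0$.
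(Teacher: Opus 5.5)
There is a genuine gap in your Step~1: one inequality runs the wrong way, and the rest of the argument inherits the error. From $|r_-|=se^{-a}\geq s/\rho$ you conclude $|r_-^{-1}|\leq 1$ for $s\leq\rho$. But $s/\rho\geq1$ only when $s\geq\rho$.

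For every $s\in[s_0,\rho)$ we have $\max_\theta b=\lambda+2=\rho+\rho^{-1}>s+s^{-1}$. So the set $\{\theta: b(\theta)>s+s^{-1}\}$, equivalently $\{a>\log s\}$, is a nonempty open arc around $\theta=\pi$. On that arc $r_-=se^{-a}\in(0,1)$, and your one-sided kernel contains the mode $y^j=(e^{a}/s)^j$, which grows exponentially in $j$. Consequently the bound $\sup_{j,\theta,s}|k_j|\leq C$ fails. Your first case, which uses $|x|,|y|\leq1$, does not apply there, and $\sum_{\ell\geq m+1}k_{\ell-m}h_\ell$ need not even converge for $h\in Y$.

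The conclusion of your Step~3 is in fact false for $s<\rho$. Take $h_\ell=\delta_{\ell 0}\,g$ with $g$ supported on that arc. The unique solution in $X$ then has $f_1\neq0$, contradicting $\norm{f_1}\leq C\sum_{\ell\geq2}\norm{h_\ell}=0$. This is why the statement asserts the one-sided estimate \eqref{eq:one-sided-inverse-estimate} only at $s=\rho$. Your argument is correct exactly at $s=\rho$, where indeed $|r_\pm|\geq 1$ for all $\theta$. The range $s<\rho$ is not a technicality, however: the proof of Theorem~\ref{thm1:absence} needs the uniform bound for $s\uparrow\rho$.

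What is missing is a third case, the paper's Case~3. On the arc where the two roots straddle the unit circle you need a two-sided kernel. Write $f_m=\sum_{\ell\in\Z}G_s(m-\ell,\theta)h_\ell$ and take $G_s(\ell,\theta)=-\frac{1}{2\sinh a}\,r_-^{\ell}$ for $\ell\geq0$ and $G_s(\ell,\theta)=-\frac{1}{2\sinh a}\,r_+^{\ell}$ for $\ell\leq-1$. Both pieces decay in their respective directions, and the equation at $\ell=-1,0$ fixes the constant. The bound is $|G_s|\leq 1/(2\sinh a)\leq 1/(2\sinh\log s_0)$, because $a>\log s\geq\log s_0$; this is a second place where $s_0>1$ enters.

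Your uniqueness argument has the same hole. For $s<\rho$, the exceptional set in your argument contains that whole arc, so it is not null. There you must let $m\to-\infty$, using $|r_-|<1$, to kill $d_-$, while $m\to+\infty$ kills $d_+$.

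With these two additions, your treatment of the remaining $\theta$ coincides with the paper's Cases~1--2. Your $k_j$, with $c_s=s^{-1}$, is exactly $s^{-j}\sin(j\varphi)/\sin\varphi$ or its hyperbolic analogue $s^{-j}\sinh(jt)/\sinh t$.
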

	
	To prove Proposition~\ref{thm:uniform-inverse}, we first establish the
	following lemma.
	
	\begin{lemma}[Uniqueness in $X$]\label{lem:uniqueness-X}
		Let $s>1$.  If $v\in X$ satisfies
		\[
		L_sv=0,
		\]
		then $v=0$.  Consequently, the equation $L_sf=h$ has at most one
		solution in $X$.
	\end{lemma}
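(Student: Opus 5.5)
Since $L_s$ acts fiberwise in $\theta$, the plan is to reduce to a scalar three-term recurrence for each fixed $\theta$ and then control the possible solutions. Suppose $v\in X$ with $L_sv=0$. Then for almost every $\theta\in\T$,
\[
s^{-1}v_{m+1}(\theta)+s\,v_{m-1}(\theta)-b(\theta)v_m(\theta)=0,\qquad m\in\Z .
\]
Substituting $v_m=s^m w_m$ removes the weight and gives the free recurrence $w_{m+1}+w_{m-1}-b(\theta)w_m=0$. Its characteristic roots $r_\pm(\theta)$ satisfy $r_+r_-=1$ and $r_++r_-=b(\theta)=\lambda-2\cos\theta\in[\lambda-2,\lambda+2]$. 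As in Subsection~\ref{subsec:partial-fourier}, on the set where $r_+\neq r_-$ (i.e.\ $b(\theta)\neq\pm2$, a finite set of $\theta$ since $0<\lambda<4$) we can write
\[
v_m=d_1 (sr_+)^m+d_2(sr_-)^m,
\]
with $d_{1,2}$ given by the analogues of \eqref{eq:free-coefficients}.

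We split according to the size of $b(\theta)$. If $|b(\theta)|<2$, then $|r_\pm|=1$, so $|sr_\pm|=s>1$. Both modes then grow as $m\to+\infty$, and boundedness of $\|v_m\|_{L^2}$ forces both coefficients to vanish. To make this an $L^2$-statement rather than a pointwise one, restrict to $\Omega_\delta=\{|r_+-r_-|\geq\delta\}$ and use the formula
\[
d_1(sr_+)^m=\frac{v_{m+1}-sr_-v_m}{s(r_+-r_-)}
\]
with $|v_{m+1}-sr_-v_m|\le C\|v\|_X$. This gives
\[
\|d_1\mathbf 1_{\Omega_\delta\cap\{|b|<2\}}\|_{L^2}\le C_\delta s^{-m}\|v\|_X\to0 \quad (m\to\infty).
\]
If $b(\theta)>2$ (the case $b<-2$ is impossible, since $b\geq\lambda-2>-2$), the roots are real and positive with $r_+>1>r_-=r_+^{-1}$. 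The mode $(sr_+)^m$ grows as $m\to+\infty$, so the argument above kills $d_1$. For the remaining mode $(sr_-)^m$ there are two subcases. If $sr_-\neq1$, it grows at $+\infty$ (when $sr_->1$) or at $-\infty$ (when $sr_-<1$). Using the analogous formula for $d_2(sr_-)^m$ and sending $m\to\pm\infty$ appropriately, restricted to sets where $|sr_--1|\ge\delta$ and $|r_+-r_-|\geq\delta$, gives $d_2=0$ there. If $sr_-=1$, then $r_-=s^{-1}$, so $b=s+s^{-1}$ and $\cos\theta=(\lambda-s-s^{-1})/2$. This holds for at most two values of $\theta$, a null set; this is precisely where $s\le\rho$ matters, since then $s+s^{-1}\le 2+\lambda=\max b$ by \eqref{eq:critical-weight-identity}. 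Letting $\delta\downarrow0$ and discarding these null sets yields $v_m=0$ a.e.\ for every $m$.

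The main obstacle is making the exponential-growth argument uniform in $L^2(\T)$, since the growth rates $|sr_\pm|$ degenerate to $1$ near the exceptional $\theta$. The cutoff-in-$\theta$ device handles this: the estimates are only needed on sets of the form $\{|sr_\pm|\ge1+\delta\}$ or $\{|sr_\pm|\le 1-\delta\}$ with $|r_+-r_-|\ge\delta$. On such sets the two-term formulas bound $|d_\nu|$ by $C_\delta(1\pm\delta)^{\mp m}\|v\|_X$, which tends to zero along the appropriate direction $m\to\pm\infty$. The complement of these sets shrinks to a finite set of $\theta$ as $\delta\to0$. Finally, uniqueness of solutions of $L_sf=h$ in $X$ follows from linearity: the difference of two solutions lies in $X$ and solves $L_sv=0$, so it vanishes.
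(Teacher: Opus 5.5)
Your proof is correct and essentially matches the paper's: reduce to the scalar recurrence with roots $z_{1,2}=sr_\pm$, kill each mode by sending $m\to+\infty$ or $m\to-\infty$ according to where it grows, and discard the finite set where $|b(\theta)|=2$ or $z_2(\theta)=1$. Your $\Omega_\delta$ cutoff supplies the $L^2$ uniformity that the paper's proof leaves implicit, and the aside about $s\le\rho$ is unnecessary, since the lemma holds for every $s>1$ and $\{b(\theta)=s+s^{-1}\}$ has at most two points for every $s$ (none when $s>\rho$).
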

	\begin{proof}
		The proof is similar to the uniqueness argument at the critical energy
		$\lambda=0$ in Subsection~\ref{subsec:partial-fourier}.
		We again solve the scalar recurrence using its characteristic roots.
		Assume $v\in X$ and $L_sv=0$.  By \eqref{eq:Ls-definition}, we have
		\begin{equation}\label{eq:Ls-definition1}
		s^{-1}v_{m+1}(\theta)+sv_{m-1}(\theta)
		-b(\theta)v_m(\theta)=0
		\end{equation}
		and
		\begin{equation}\label{eq:Ls-definition2}
		\sup_m \int _\T|v_m(\theta)|^2\frac{d\theta}{2\pi} \leq C.
		\end{equation}
		The recurrence \eqref{eq:Ls-definition1} shows that $v_m(\theta)$ is
		determined by $v_0(\theta)$ and $v_1(\theta)$.  Its characteristic
		equation is
		\[
		z^2-sb(\theta)z+s^2=0.
		\]

		When $-2<b(\theta)<2$, write $b(\theta)=2\cos\varphi$ and define
		\[
		z_1(\theta)=se^{i\varphi},
		\qquad z_2(\theta)=se^{-i\varphi}.
		\]
		When $b(\theta)>2$, write $b(\theta)=2\cosh t$ and define
		\[
		z_1(\theta)=se^t,
		\qquad z_2(\theta)=se^{-t}.
		\]
		
		Solving the recurrence \eqref{eq:Ls-definition1} directly gives
		\[
		v_m(\theta)
		=d_1(\theta)z_1(\theta)^m+d_2(\theta)z_2(\theta)^m,
		\]
		where
		\[
		d_1=\frac{v_1-z_2v_0}{z_1-z_2},
		\qquad
		d_2=\frac{z_1v_0-v_1}{z_1-z_2},
		\]
		and
		\begin{equation}\label{eq:uniqueness-coefficients}
		d_1 z_1^m=\frac{v_{m+1}-z_2v_m}{z_1-z_2},
		\qquad
		d_2 z_2^m=\frac{z_1v_m-v_{m+1}}{z_1-z_2}.
		\end{equation}
		Letting $m\to\infty$ in \eqref{eq:uniqueness-coefficients} and using
		\eqref{eq:Ls-definition2}, we find that
		$d_1(\theta)=d_2(\theta)=0$ whenever $-2<b(\theta)<2$.
		Similarly, letting $m\to\infty$ gives $d_1(\theta)=0$ whenever
		$|b(\theta)|>2$. Letting $m\to\pm\infty$ gives $d_2(\theta)=0$ whenever
		$|b(\theta)|>2$ and $z_2(\theta)\neq1$.
		Since $\{\theta\in\T:|b(\theta)|=2\text{ or }z_2(\theta)=1\}$ is
		finite, we conclude that $v=0$.
		
	\end{proof}
	
	\begin{proof}[\bf Proof of Proposition~\ref{thm:uniform-inverse}]
		Uniqueness follows from Lemma~\ref{lem:uniqueness-X}.
		For $h\in Y$, we solve $L_sf=h$ using a Green kernel
		$G_s(\ell,\theta)$:
		\begin{equation}\label{g29}
		f_m(\theta)=
		\sum_{\ell\in\Z}G_s(m-\ell,\theta)h_\ell(\theta).
		\end{equation}

		We seek $G_s(\ell,\theta)$ satisfying
		\begin{equation}\label{eq:green-kernel}
		\begin{split}
		&s^{-1}G_s(\ell+1,\theta)+sG_s(\ell-1,\theta)
		-b(\theta)G_s(\ell,\theta)\\
		&\hspace{42mm}=\delta_0(\ell),
		\qquad(\ell\in\Z).
		\end{split}
		\end{equation}
		where $\delta_0$ is the Kronecker delta at the origin.

		Away from
		$\ell=0$, we solve the homogeneous recurrence associated with
		\eqref{eq:green-kernel}: 
		\begin{equation}\label{eq:green-kernel1}
		s^{-1}G_s(\ell+1,\theta)+sG_s(\ell-1,\theta)
		-b(\theta)G_s(\ell,\theta)=0.
		\end{equation}
		As in the proof of Lemma~\ref{lem:uniqueness-X}, we first solve the
		characteristic equation
		\begin{equation}\label{g23}
		z^2-sb(\theta)z+s^2=0.
		\end{equation}
		
		On each half-line, we choose the bounded solution.  The remaining
		constants are then determined by the conditions at $\ell=0$ and
		$\ell=-1$ in
		\eqref{eq:green-kernel}.
		
		Because $0<\lambda<4$,
		\[
		b(\theta)\in[\lambda-2,\lambda+2]\subset(-2,\infty).
		\]
		
		We distinguish the following three cases.
		
		\medskip
		\noindent\emph{Case 1: $-2<b(\theta)<2$.}
		Choose $\varphi\in (0,\pi)$ so that
		\[
		b(\theta)=2\cos\varphi.
		\]
		The two roots of \eqref{g23} are
		\[
		z_1=se^{i\varphi},
		\qquad
		z_2=se^{-i\varphi},
		\]
		and hence $|z_1|=|z_2|=s>1$.  Thus $|z_1|^\ell$ and $|z_2|^\ell$
		grow exponentially as $\ell\to+\infty$, and we must choose the zero
		solution on
		$\ell\geq0$.  On $\ell\leq-1$, both $|z_1|^\ell$ and
		$|z_2|^\ell$ decay as $\ell\to-\infty$, so we write
		\[
		G_s(\ell,\theta)=c_1z_1^\ell+c_2z_2^\ell.
		\]
		First, matching
		with the zero solution on $\ell\geq0$ gives
		$G_s(0,\theta)=c_1+c_2=0$.  Second, setting $\ell=0$ in
		\eqref{eq:green-kernel} and using
		$G_s(0,\theta)=G_s(1,\theta)=0$ gives the condition
		$G_s(-1,\theta)=s^{-1}$.  Hence
		\[
		c_1+c_2=0,
		\qquad
		c_1z_1^{-1}+c_2z_2^{-1}=s^{-1},
		\]
		and therefore
		\[
		c_1=-\frac{1}{2i\sin\varphi},
		\qquad
		c_2=\frac{1}{2i\sin\varphi}.
		\]
		It follows that
		\begin{equation}\label{eq:green-kernel-oscillatory}
		G_s(\ell,\theta)=0\quad(\ell\geq0),
		\qquad
		G_s(-k,\theta)
		=s^{-k}\frac{\sin(k\varphi)}{\sin\varphi}
		\quad(k\geq1).
		\end{equation}
		
		Thus the kernels in Case~1 are uniformly bounded.
		
		\noindent\emph{Case 2: $b(\theta)=2\cosh t$ with
			$0<t<\log s$.}

		The two roots of \eqref{g23} are
		\[
		z_1=se^t,
		\qquad
		z_2=se^{-t}.
		\]
		In this case, since $0<t<\log s$, $|z_1|>1$ and $|z_2|>1$.  As in Case~1, we
		choose the zero solution on $\ell\geq0$.  For $\ell\leq-1$, write
		\[
		G_s(\ell,\theta)=c_1z_1^\ell+c_2z_2^\ell,
		\]
		and then
		\[
		c_1+c_2=0,
		\qquad
		c_1z_1^{-1}+c_2z_2^{-1}=s^{-1}.
		\]
		It follows that
		\[
		c_1=-\frac{1}{2\sinh t},
		\qquad
		c_2=\frac{1}{2\sinh t}.
		\]
		Therefore
		\begin{equation}\label{eq:green-kernel-hyperbolic}
		G_s(\ell,\theta)=0\quad(\ell\geq0),
		\qquad
		G_s(-k,\theta)
		=s^{-k}\frac{\sinh(kt)}{\sinh t}
		\quad(k\geq1).
		\end{equation}
		
		We next obtain a uniform bound.  If
		$0<t\leq\frac12\log s$, then
		\[
		\frac{\sinh(kt)}{\sinh t}
		\leq k e^{(k-1)t},
		\]
		and hence
		\[
		|G_s(-k,\theta)|
		\leq k s^{-k}e^{(k-1)t}
		\leq k s^{-(k+1)/2}
		\leq k s_0^{-(k+1)/2}.
		\]
		If $\frac12\log s\leq t<\log s$, then
		\[
		|G_s(-k,\theta)|
		\leq
		\frac{(e^t/s)^k}{2\sinh t}
		\leq
		\frac{1}{2\sinh(\frac12\log s_0)}.
		\]
		Thus the kernels in Case~2 are uniformly bounded.
		
		\medskip
		\noindent\emph{Case 3: $b(\theta)=2\cosh t$ and $t>\log s$.}
		The two roots are
		\[
		z_1=se^t,
		\qquad
		z_2=se^{-t},
		\]
		so that $0<z_2<1<z_1$.  In this case, we choose the $z_2^\ell$ 
		as $\ell\to+\infty$ and the $z_1^\ell$   as
		$\ell\to-\infty$.  Thus we first write
		\[
		G_s(\ell,\theta)=
		\begin{cases}
		c_+z_2^\ell,&\ell\geq0,\\
		c_-z_1^\ell,&\ell\leq-1.
		\end{cases}
		\]
		We again need two conditions to determine the two constants.  First,
		the equation \eqref{eq:green-kernel} at $\ell=-1$ gives
		\begin{equation}\label{g26}
		s^{-1}c_++s c_-z_1^{-2}-b(\theta)c_-z_1^{-1}=0.
		\end{equation}

		Second, the equation \eqref{eq:green-kernel} at $\ell=0$ gives
		\begin{equation}\label{g27}
		s^{-1}c_+z_2+s c_-z_1^{-1}-b(\theta)c_+=1.
		\end{equation}
		Solving \eqref{g26} and \eqref{g27}, we obtain
		\[
		c_+=c_-=-\frac{1}{2\sinh t}.
		\]
		Consequently,
		\begin{equation}\label{eq:green-kernel-split}
		G_s(\ell,\theta)
		=-\frac{1}{2\sinh t}
		\begin{cases}
		z_2^\ell,&\ell\geq0,\\
		z_1^\ell,&\ell\leq-1.
		\end{cases}
		\end{equation}
		Finally,
		\[
		|G_s(\ell,\theta)|
		\leq\frac{1}{2\sinh t}
		\leq\frac{1}{2\sinh(\log s_0)}.
		\]
		
		Combining the three cases, we have proved that
		\begin{equation}\label{eq:uniform-green-kernel-bound}
		\sup_{s_0\leq s\leq\rho}
		\sup_{\theta\in\T}
		\sup_{\ell\in\Z}
		|G_s(\ell,\theta)|
		\leq C(s_0,\lambda).
		\end{equation}
		
		The three cases above cover all $\theta\in\T$ except possibly finitely
		many points.  Define $G_s$ to be zero on this exceptional set.  This
		does not affect the construction, since the identities are understood
		almost everywhere in $\theta$.
		
		By \eqref{eq:uniform-green-kernel-bound}, the series defining $f_m$ in
		\eqref{g29} converges in $L^2(\T)$, and
		\begin{align}
		\|f_m\|_{L^2(\T)}
		&\leq
		\sum_{\ell\in\Z}
		\|G_s(m-\ell,\cdot)h_\ell\|_{L^2(\T)}\nonumber\\
		&\leq C(s_0,\lambda)
		\sum_{\ell\in\Z}\|h_\ell\|_{L^2(\T)}.\label{g15}
		\end{align}
		
		Equation~\eqref{eq:green-kernel} and absolute convergence in
		$L^2(\T)$ give, for every $m\in\Z$,
		\[
		\begin{aligned}
		(L_sf)_m
		&=\sum_{\ell\in\Z}
		\bigl[s^{-1}G_s(m-\ell+1,\cdot)+sG_s(m-\ell-1,\cdot)\\
		&\hspace{38mm}-b(\cdot)G_s(m-\ell,\cdot)\bigr]h_\ell\\
		&=\sum_{\ell\in\Z}\delta_0(m-\ell)h_\ell=h_m.
		\end{aligned}
		\]
		By Lemma~\ref{lem:uniqueness-X}, $f=(f_m)$ is the unique solution of
		$L_sf=h$.	 Therefore \eqref{g15} implies \eqref{eq:uniform-inverse-estimate}.
		It remains to prove \eqref{eq:one-sided-inverse-estimate}.	By \eqref{eq:critical-weight-identity},
		\[
		\max_\theta b(\theta)=\lambda+2
		=\rho+\rho^{-1}=2\cosh(\log\rho).
		\]
		At $s=\rho$, only the kernels in Cases~1 and~2 are
		therefore needed.  Both satisfy
		\[
		G_\rho(\ell,\theta)=0\qquad(\ell\geq0).
		\]
		Taking $s=\rho$ in the formula for $f$, we obtain
		\begin{equation}\label{g17}
		f_m(\theta)
		=\sum_{\ell>m}G_\rho(m-\ell,\theta)h_\ell(\theta).
		\end{equation}
		Now \eqref{eq:one-sided-inverse-estimate} follows from
		\eqref{eq:uniform-green-kernel-bound} and \eqref{g17}.
		
	\end{proof}
	
	\nin
	\textbf{Proof of Theorem~\ref{thm1:absence}.}
	By \eqref{eq:checkerboard-symmetry}, it suffices to consider
	$\lambda\in(0,4)$.
	For $1<s<\rho$, define
	\[
	f_m^{(s)}=s^m\widehat U_m.
	\]
	Proposition~\ref{thm:subcritical-exponential-decay} implies
	that $f^{(s)}\in\ell^1(\Z;L^2(\T))$.  
	The recurrence \eqref{eq:noncritical-slice-recurrence} gives the exact
	identity
	\begin{equation}\label{eq:weighted-source-identity}
	(L_sf^{(s)})_m=s^m\widehat H_m.
	\end{equation}
	Put
	\[
	Y_s=\sup_{m\in\Z}s^m\norm{U_m}_{\cH}.
	\]
	Proposition~\ref{thm:uniform-inverse} and the pointwise
	decay of $V$ yield
	\begin{equation}\label{eq:critical-rate-estimate}
	Y_s
	\leq C\sum_m\la m\ra^{-\kappa}s^m\norm{U_m}_{\cH}.
	\end{equation}
	Split the sum into $|m|\leq R$ and $|m|>R$.  The finite part is bounded
	uniformly for $s\leq\rho$, while the tail is at most
	\[
	C Y_s\sum_{|m|>R}\la m\ra^{-\kappa}.
	\]
	Since $\kappa>1$, we may choose $R$ so large that  $C\sum_{|m|>R}\la m\ra^{-\kappa}$  is less than $1/2$.  We obtain
	\[
	\sup_{s_0\leq s<\rho}Y_s\leq C,
	\]
	where $C$ is independent of $s$.
	Letting $s\uparrow\rho$ gives the endpoint bound
	\begin{equation}\label{eq:critical-rate-bound}
	Y_\rho:=\sup_{m\in\Z}\rho^m\norm{U_m}_{\cH}<\infty.
	\end{equation}

	Set
	\[
	F_m=\rho^m\widehat U_m.
	\]
	By \eqref{eq:critical-rate-bound},
	\begin{equation}\label{eq:endpoint-source-summability}
	\sum_m\norm{(L_\rho F)_m}_{L^2(\T)}
	=\sum_m\rho^m\norm{H_m}_{\cH}
	\leq C Y_\rho\sum_m\la m\ra^{-\kappa}<\infty.
	\end{equation}
	Apply the one-sided estimate \eqref{eq:one-sided-inverse-estimate}, and set
	\[
	T_M=\sup_{m\geq M}\norm{F_m}_{L^2(\T)}.
	\]
	Then
	\begin{align}
	T_M
	&\leq C\sum_{\ell\geq M+1}\rho^\ell\norm{H_\ell}_{\cH}\\
	&\leq C T_M\sum_{\ell\geq M+1}\la \ell\ra^{-\kappa}.
	\label{eq:endpoint-tail-contraction}
	\end{align}
	For sufficiently large $M$, the final coefficient
	$C\sum_{\ell\geq M+1}\la \ell\ra^{-\kappa}$ is strictly smaller
	than one.  Therefore $T_M=0$, so
	\[
	U_m=0\qquad(m\geq M).
	\]
	In particular,  both $U_M$ and $U_{M+1}$ vanish.
	Equation \eqref{eq:noncritical-slice-recurrence} then implies that
	$u=0$ on $\Z^2$.
	\qed
	\bibliographystyle{amsplain}
	\bibliography{intro}

@article{KozmaSchreiber2004,
  author  = {Gady Kozma and Ehud Schreiber},
  title   = {An asymptotic expansion for the discrete harmonic potential},
  journal = {Electronic Journal of Probability},
  volume  = {9},
  number  = {1},
  pages   = {1--17},
  year    = {2004},
  doi     = {10.1214/EJP.v9-170}
}

@article {AndoIsozakiMorioka2016,
    AUTHOR = {Ando, Kazunori and Isozaki, Hiroshi and Morioka, Hisashi},
     TITLE = {Spectral properties of {S}chr\"odinger operators on perturbed
              lattices},
   JOURNAL = {Ann. Henri Poincar\'e},
  FJOURNAL = {Annales Henri Poincar\'e},
    VOLUME = {17},
      YEAR = {2016},
    NUMBER = {8},
     PAGES = {2103--2171},
  MRNUMBER = {3522026},
       DOI = {10.1007/s00023-015-0430-0},
       URL = {https://doi.org/10.1007/s00023-015-0430-0},
}

@article {AndoIsozakiMorioka2025,
    AUTHOR = {Ando, Kazunori and Isozaki, Hiroshi and Morioka, Hisashi},
     TITLE = {A remark on the absence of eigenvalues in continuous spectra
              for discrete {S}chr\"odinger operators on periodic lattices},
  journal={arXiv preprint arXiv:2411.03577},
  year={2025}
}

@article {BouRabeeCoopermanGanguly2025,
    AUTHOR = {Bou-Rabee, Ahmed and Cooperman, William and Ganguly, Shirshendu},
     TITLE = {Unique continuation on planar graphs},
   JOURNAL = {Discrete Anal.},
  FJOURNAL = {Discrete Analysis},
      YEAR = {2025},
     PAGES = {Paper No. 16, 12},
       DOI = {10.19086/da.144015},
       URL = {https://doi.org/10.19086/da.144015},
}

@article {BourgainKenig2005,
    AUTHOR = {Bourgain, Jean and Kenig, Carlos E.},
     TITLE = {On localization in the continuous {A}nderson--{B}ernoulli model
              in higher dimension},
   JOURNAL = {Invent. Math.},
  FJOURNAL = {Inventiones Mathematicae},
    VOLUME = {161},
      YEAR = {2005},
    NUMBER = {2},
     PAGES = {389--426},
       DOI = {10.1007/s00222-004-0435-7},
       URL = {https://doi.org/10.1007/s00222-004-0435-7},
}

@article {BuhovskyLogunovMalinnikovaSodin2022,
    AUTHOR = {Buhovsky, Lev and Logunov, Alexander and Malinnikova, Eugenia and
              Sodin, Mikhail},
     TITLE = {A discrete harmonic function bounded on a large portion of
              {$\mathbb Z^2$} is constant},
   JOURNAL = {Duke Math. J.},
  FJOURNAL = {Duke Mathematical Journal},
    VOLUME = {171},
      YEAR = {2022},
    NUMBER = {6},
     PAGES = {1349--1378},
       URL = {https://arxiv.org/abs/1712.07902},
}

@article {ChuLiuLyu2026,
    AUTHOR = {Chu, Jifeng and Liu, Wencai and Lyu, Kang},
     TITLE = {Optimal bound for the embedded eigenvalue of one-dimensional
              discrete {S}chr\"odinger operators with decaying potentials},
      journal={arXiv preprint arXiv:2609.03224},
  year={2026}
}

@article {DingSmart2020,
    AUTHOR = {Ding, Jian and Smart, Charles K.},
     TITLE = {Localization near the edge for the {A}nderson--{B}ernoulli model
              on the two-dimensional lattice},
   JOURNAL = {Invent. Math.},
  FJOURNAL = {Inventiones Mathematicae},
    VOLUME = {219},
      YEAR = {2020},
    NUMBER = {2},
     PAGES = {467--506},
       URL = {https://arxiv.org/abs/1809.09041},
}

@article {LiZhang2022,
    AUTHOR = {Li, Linjun and Zhang, Lingfu},
     TITLE = {{A}nderson--{B}ernoulli localization on the three-dimensional
              lattice and discrete unique continuation principle},
   JOURNAL = {Duke Math. J.},
  FJOURNAL = {Duke Mathematical Journal},
    VOLUME = {171},
      YEAR = {2022},
    NUMBER = {2},
     PAGES = {327--415},
       URL = {https://arxiv.org/abs/1906.04350},
}

@article {FrankSimon2017,
    AUTHOR = {Frank, Rupert L. and Simon, Barry},
     TITLE = {Eigenvalue bounds for {S}chr\"odinger operators with complex
              potentials. {II}},
   JOURNAL = {J. Spectr. Theory},
  FJOURNAL = {Journal of Spectral Theory},
    VOLUME = {7},
      YEAR = {2017},
    NUMBER = {3},
     PAGES = {633--658},
      ISSN = {1664-039X},
   MRCLASS = {35P15 (35J10 81Q12)},
  MRNUMBER = {3713021},
       DOI = {10.4171/JST/173},
       URL = {https://doi.org/10.4171/JST/173},
}

@article {HundertmarkJexLange2023,
    AUTHOR = {Hundertmark, Dirk and Jex, Michal and Lange, Markus},
     TITLE = {Quantum systems at the brink: existence of bound states,
              critical potentials, and dimensionality},
   JOURNAL = {Forum Math. Sigma},
  FJOURNAL = {Forum of Mathematics. Sigma},
    VOLUME = {11},
      YEAR = {2023},
     PAGES = {Paper No. e61, 29},
  MRNUMBER = {4616434},
       DOI = {10.1017/fms.2023.39},
       URL = {https://doi.org/10.1017/fms.2023.39},
}

@article {IonescuJerison2003,
    AUTHOR = {Ionescu, Alexandru D. and Jerison, David},
     TITLE = {On the absence of positive eigenvalues of {S}chr\"odinger
              operators with rough potentials},
   JOURNAL = {Geom. Funct. Anal.},
  FJOURNAL = {Geometric and Functional Analysis},
    VOLUME = {13},
      YEAR = {2003},
    NUMBER = {5},
     PAGES = {1029--1081},
  MRNUMBER = {2024415},
       DOI = {10.1007/s00039-003-0439-2},
       URL = {https://doi.org/10.1007/s00039-003-0439-2},
}

@article {IsozakiKorotyaev2012,
    AUTHOR = {Isozaki, Hiroshi and Korotyaev, Evgeny},
     TITLE = {Inverse problems, trace formulae for discrete {S}chr\"odinger
              operators},
   JOURNAL = {Ann. Henri Poincar\'e},
  FJOURNAL = {Annales Henri Poincar\'e},
    VOLUME = {13},
      YEAR = {2012},
    NUMBER = {4},
     PAGES = {751--788},
  MRNUMBER = {2913620},
       DOI = {10.1007/s00023-011-0141-0},
       URL = {https://doi.org/10.1007/s00023-011-0141-0},
}

@article {IsozakiMorioka2014,
    AUTHOR = {Isozaki, Hiroshi and Morioka, Hisashi},
     TITLE = {A {R}ellich type theorem for discrete {S}chr\"odinger
              operators},
   JOURNAL = {Inverse Probl. Imaging},
  FJOURNAL = {Inverse Problems and Imaging},
    VOLUME = {8},
      YEAR = {2014},
    NUMBER = {2},
     PAGES = {475--489},
       DOI = {10.3934/ipi.2014.8.475},
       URL = {https://doi.org/10.3934/ipi.2014.8.475},
}

@article {JexStampach2025,
    AUTHOR = {Jex, Michal and \v{S}tampach, Franti\v{s}ek},
     TITLE = {On the ground state of lattice {S}chr\"odinger operators},
   JOURNAL = {J. Spectr. Theory},
  FJOURNAL = {Journal of Spectral Theory},
    VOLUME = {15},
      YEAR = {2025},
    NUMBER = {2},
     PAGES = {647--678},
       DOI = {10.4171/JST/558},
       URL = {https://doi.org/10.4171/JST/558},
}

@article {JitomirskayaLiu2019,
    AUTHOR = {Jitomirskaya, Svetlana and Liu, Wencai},
     TITLE = {Noncompact complete {R}iemannian manifolds with dense
              eigenvalues embedded in the essential spectrum of the
              {L}aplacian},
   JOURNAL = {Geom. Funct. Anal.},
  FJOURNAL = {Geometric and Functional Analysis},
    VOLUME = {29},
      YEAR = {2019},
    NUMBER = {1},
     PAGES = {238--257},
       DOI = {10.1007/s00039-019-00480-w},
       URL = {https://doi.org/10.1007/s00039-019-00480-w},
}

@article {Kato1959,
    AUTHOR = {Kato, Tosio},
     TITLE = {Growth properties of solutions of the reduced wave equation
              with a variable coefficient},
   JOURNAL = {Comm. Pure Appl. Math.},
  FJOURNAL = {Communications on Pure and Applied Mathematics},
    VOLUME = {12},
      YEAR = {1959},
     PAGES = {403--425},
  MRNUMBER = {0108633},
       DOI = {10.1002/cpa.3160120302},
       URL = {https://doi.org/10.1002/cpa.3160120302},
}

@article {KochTataru2006,
    AUTHOR = {Koch, Herbert and Tataru, Daniel},
     TITLE = {Carleman estimates and absence of embedded eigenvalues},
   JOURNAL = {Comm. Math. Phys.},
  FJOURNAL = {Communications in Mathematical Physics},
    VOLUME = {267},
      YEAR = {2006},
    NUMBER = {2},
     PAGES = {419--449},
  MRNUMBER = {2252331},
       DOI = {10.1007/s00220-006-0060-y},
       URL = {https://doi.org/10.1007/s00220-006-0060-y},
}

@incollection {Kuchment1991,
    AUTHOR = {Kuchment, Peter A.},
     TITLE = {On the {F}loquet theory of periodic difference equations},
 BOOKTITLE = {Geometrical and algebraical aspects in several complex variables
              ({C}etraro, 1989)},
    SERIES = {Sem. Conf.},
    VOLUME = {8},
     PAGES = {201--209},
 PUBLISHER = {EditEl},
   ADDRESS = {Rende},
      YEAR = {1991},
  MRNUMBER = {1222215},
}

@article {KuchmentVainberg2000,
    AUTHOR = {Kuchment, Peter and Vainberg, Boris},
     TITLE = {On absence of embedded eigenvalues for {S}chr\"odinger
              operators with perturbed periodic potentials},
   JOURNAL = {Comm. Partial Differential Equations},
  FJOURNAL = {Communications in Partial Differential Equations},
    VOLUME = {25},
      YEAR = {2000},
    NUMBER = {9--10},
     PAGES = {1809--1826},
       DOI = {10.1080/03605300008821568},
       URL = {https://doi.org/10.1080/03605300008821568},
}

@article {LiuCriteria2021,
    AUTHOR = {Liu, Wencai},
     TITLE = {Criteria for embedded eigenvalues for discrete
              {S}chr\"odinger operators},
   JOURNAL = {Int. Math. Res. Not. IMRN},
  FJOURNAL = {International Mathematics Research Notices. IMRN},
      YEAR = {2021},
    NUMBER = {20},
     PAGES = {15803--15832},
  MRNUMBER = {4329883},
       DOI = {10.1093/imrn/rnz262},
       URL = {https://doi.org/10.1093/imrn/rnz262},
}

@article {LiuIrreducibility2022,
    AUTHOR = {Liu, Wencai},
     TITLE = {Irreducibility of the {F}ermi variety for discrete periodic
              {S}chr\"odinger operators and embedded eigenvalues},
   JOURNAL = {Geom. Funct. Anal.},
  FJOURNAL = {Geometric and Functional Analysis},
    VOLUME = {32},
      YEAR = {2022},
    NUMBER = {1},
     PAGES = {1--30},
  MRNUMBER = {4388759},
       DOI = {10.1007/s00039-021-00587-z},
       URL = {https://doi.org/10.1007/s00039-021-00587-z},
}

@article {LiuMatosTreuer2025,
    AUTHOR = {Liu, Wencai and Matos, Rodrigo and Treuer, John N.},
     TITLE = {Sharp decay rate for eigenfunctions of perturbed periodic
              {S}chr\"odinger operators},
   JOURNAL = {Nonlinearity},
  FJOURNAL = {Nonlinearity},
    VOLUME = {38},
      YEAR = {2025},
    NUMBER = {4},
     PAGES = {Paper No. 045028, 25},
       DOI = {10.1088/1361-6544/adc653},
       URL = {https://doi.org/10.1088/1361-6544/adc653},
}

@article {LiuOng2020,
    AUTHOR = {Liu, Wencai and Ong, Darren C.},
     TITLE = {Sharp spectral transition for eigenvalues embedded into the
              spectral bands of perturbed periodic operators},
   JOURNAL = {J. Anal. Math.},
  FJOURNAL = {Journal d'Analyse Math\'ematique},
    VOLUME = {141},
      YEAR = {2020},
    NUMBER = {2},
     PAGES = {625--661},
       DOI = {10.1007/s11854-020-0111-x},
       URL = {https://doi.org/10.1007/s11854-020-0111-x},
}

@article {Shipman2014,
    AUTHOR = {Shipman, Stephen P.},
     TITLE = {Eigenfunctions of unbounded support for embedded eigenvalues of
              locally perturbed periodic graph operators},
   JOURNAL = {Comm. Math. Phys.},
  FJOURNAL = {Communications in Mathematical Physics},
    VOLUME = {332},
      YEAR = {2014},
    NUMBER = {2},
     PAGES = {605--626},
  MRNUMBER = {3257657},
       DOI = {10.1007/s00220-014-2113-y},
       URL = {https://doi.org/10.1007/s00220-014-2113-y},
}

@article {TadanoTaira2019,
    AUTHOR = {Tadano, Yukihide and Taira, Kouichi},
     TITLE = {Uniform bounds of discrete {B}irman--{S}chwinger operators},
   JOURNAL = {Trans. Amer. Math. Soc.},
  FJOURNAL = {Transactions of the American Mathematical Society},
    VOLUME = {372},
      YEAR = {2019},
    NUMBER = {7},
     PAGES = {5243--5262},
      ISSN = {0002-9947},
  MRNUMBER = {4009459},
       DOI = {10.1090/tran/7882},
       URL = {https://doi.org/10.1090/tran/7882},
}

@article {WignerVonNeumann1929,
    AUTHOR = {von Neumann, John and Wigner, Eugene P.},
     TITLE = {\"Uber merkw\"urdige diskrete {E}igenwerte},
   JOURNAL = {Phys. Z.},
  FJOURNAL = {Physikalische Zeitschrift},
    VOLUME = {30},
      YEAR = {1929},
     PAGES = {465--467},
}

@misc {WikipediaChebyshev,
    AUTHOR = {{Wikipedia contributors}},
     TITLE = {{Chebyshev} polynomials},
 HOWPUBLISHED = {\url{https://en.wikipedia.org/wiki/Chebyshev_polynomials}},
      NOTE = {Accessed September 4, 2026},
}

@book {Zworski2012,
    AUTHOR = {Zworski, Maciej},
     TITLE = {Semiclassical Analysis},
    SERIES = {Graduate Studies in Mathematics},
    VOLUME = {138},
 PUBLISHER = {American Mathematical Society},
   ADDRESS = {Providence, RI},
      YEAR = {2012},
}
\end{document}